\newtheorem{theorem}{Theorem}
\newtheorem{lemma}[theorem]{Lemma}
\newtheorem{proposition}[theorem]{Proposition}
\newtheorem{corollary}[theorem]{Corollary}
\newtheorem{remark}[theorem]{Remark}
\newtheorem{definition}[theorem]{Definition}
\newtheorem{example}[theorem]{Example}
\numberwithin{equation}{section}
\newenvironment{proof}[1][Proof]{\noindent\textbf{#1.} }{\ \rule{0.5em}{0.5em}}
\renewcommand{\epsilon}{\varepsilon}
\newcommand{\ket}[1]{\mathop{\left|#1\right>}\nolimits}
\newcommand{\bra}[1]{\mathop{\left<#1\,\right|}\nolimits}
\newcommand{\kb}[2]{| #1\rangle\!\langle #2 |}
\newcommand{\Tra}[1]{\mathop{{\mathrm{Tr}}_{#1}}}
\newcommand{\Tr}[2]{\mathop{{\mathrm{Tr}}_{#1}} (#2) }
\def\T{\mathcal{T}}
\def\D{\mathcal{D}}
\def\U{\mathcal{U}}
\def\ZO{\{0,1\}}
\def\NN{{\mathbb N}}
\def\QCC{\mathrm{QCC}}
\def\QIC{\mathrm{QIC}}
\def\CIC{\mathrm{CIC}}
\def\CRIC{\mathrm{CRIC}}
\def\HIC{\mathrm{HIC}}
\def\RIC{\mathrm{RIC}}
\def\SCIC{\mathrm{SCIC}}
\def\SCRIC{\mathrm{SCRIC}}
\def\SHIC{\mathrm{SHIC}}
\def\HCIC{\mathrm{HCIC}}
\def\HCRIC{\mathrm{HCRIC}}
\def\HHIC{\mathrm{HHIC}}
\def\DISJ{\mathrm{DISJ}}
\def\AND{\mathrm{AND}}
\def\OR{\mathrm{OR}}
\def\PP{\mathbb{P}}
\def\bbE{\mathbb{E}}
\def\circuit{\mathbf C}
\let\@copyrightspace\relax
\begin{document}

\title{The Flow of Information in  Interactive Quantum Protocols :  \\ the Cost of Forgetting~\footnote{A one-page abstract of this work will appear in the Proceedings of the 8th Innovations in Theoretical Computer Science conference (ITCS 2017).}}
\author[1]{Mathieu \textsc{Lauri{\`e}re}}
\author[2,3]{Dave \textsc{Touchette}}
\affil[1]{NYU-ECNU Institute of Mathematical Sciences at NYU Shanghai, email:~mathieu.lauriere@gmail.com}
\affil[2]{Institute for Quantum Computing, and Department of Combinatorics and Optimization, University of Waterloo,
email: touchette.dave@gmail.com}
\affil[3]{Perimeter Institute for Theoretical Physics}

\date{} 

\maketitle

\vspace{0.5cm}

\abstract{
In the context of two-party interactive quantum communication protocols,
we study a recently defined notion of  quantum information cost (QIC), which  possesses most of the important properties of its classical analogue, see Ref.~\cite{Tou15}. Notably, its link with amortized quantum communication complexity has been used in Ref.~\cite{BGKMT15} to prove an (almost) tight lower bound on the bounded round quantum complexity of Disjointness.
However, the only known characterization of QIC was through a notion of purification of the input state. Although this definition has the advantage to be valid for fully quantum inputs and tasks, its interpretation for classical tasks remained rather obscure.
Also, the link between this new notion and other notions of information cost for quantum protocols that had previously appeared in the literature (e.g. in Refs.~\cite{JRS03, JN14,KLLGR15}) was not clear, if existent at all. 

We settle both these issues: for quantum communication with classical inputs, we provide an alternate characterization of QIC in terms of information about the input registers, avoiding any reference to the notion of a purification of the classical input state. We provide an exact operational interpretation of this alternative characterization as the sum of the cost of transmitting information about the classical inputs and the cost of forgetting information about these inputs. 
To obtain this characterization, we prove a general lemma, the Information Flow Lemma, assessing exactly the transfer of information in general interactive quantum processes. Specializing this lemma to interactive quantum protocols accomplishing classical tasks, we are also able to demystify the link between QIC and these other previous notions of information cost in quantum protocols. Furthermore, we clarify the link between QIC and IC of classical protocols by simulating quantumly classical protocols.

Finally, we apply these concepts to argue that any quantum protocol that does not forget information solves Disjointness on $n$-bits in $\Omega (n)$ communication, completely losing the quadratic quantum speedup. This provides a specific sense in which forgetting information is a necessary feature of interactive quantum protocols in order to obtain any significant improvement over classical protocols. We also apply these concepts to prove that QIC at zero-error
is exactly $n$ for the Inner Product function, and $n (1 - o(1))$ for a random Boolean function on $n+n$ bits.

}

\newpage
\setcounter{page}{1}

\tableofcontents

\newpage
\setcounter{page}{1}

\section{Introduction}

\paragraph{Background. }

In two-party communication complexity~\cite{yao1979-distributive}, Alice and Bob receive inputs $x$ and $y$ and run an interactive communication protocol by exchanging messages in order to compute~$f(x,y)$ for some function $f$ that depends on both these inputs. Their goal is to minimize the communication cost (denoted CC and QCC respectively in the classical and the quantum settings), that is, the amount of communication (bits or qubits). This model has found numerous applications in many areas of computer science. For excellent introductions to classical and quantum communication complexities, we refer the reader to~\cite{MR1426129-KN-97-CC} and~\cite{MR1944458-DW-02-QCC} respectively.

One question that has received a lot of attention recently is whether it is possible to perform such protocols without leaking much information. 
In classical communication protocols, the information cost (IC) is defined as the information that the transcript reveals to each player about the input of the other one. In quantum communication protocols~\cite{MR1328432-Y93}, the registers are in a quantum state, which, in general, prevents the player from keeping track of the previous messages due to the no-cloning theorem. Nevertheless, the parties have quantum workspaces, where they may keep information about previous messages. The question is then to calculate how much information every new message reveals to them, given that they already know their own input and have kept some information in their quantum workspace according to the protocol.

Several notions of information cost for quantum protocols have already been used in the literature, see e.g. Refs~\cite{KlauckNTZ07, MR2050849-K-02-qprivacy, JRS03, JRS09, JN14}.
Each notion was somehow tailor-made for a specific purpose and very useful in that particular case. Nevertheless, these definitions did not seem to provide a general understanding of how information behaves in quantum communication. In Ref.~\cite{Tou15} has been introduced a general notion of Quantum Information Cost (QIC), which measures the total amount of {\it quantum} information about the inputs that is transmitted during the protocol. The corresponding notion of quantum information complexity of a function (the minimal QIC of a protocol computing the function) has been shown to exactly characterize the amortized communication complexity of that function, which is a fundamental property of the information complexity in the classical setting, see Ref.~\cite{MR2933311-BR11-amortized}. Moreover, this notion of QIC has already found multiple applications~\cite{Tou15, BGKMT15, NT16}.

However, so far the only known characterization of QIC was through a notion of purification of the input state. Although this definition has the advantage to be valid for fully quantum inputs and tasks, its interpretation for classical tasks remained rather obscure.
Also, the link between this new notion and other notions of information cost for quantum protocols that had previously appeared in the literature was not clear, if existent at all.

\paragraph{Our contributions. }

In this paper we shed a new light on the Quantum Information Cost ($\QIC$), and settle both issues described above by relating this quantity to several other natural notions of information cost, including  the classical IC, and by providing,  when the inputs are classical, a new characterization of QIC which has an operational interpretation and does not require any reference to a purification register.

The cornerstone of our work is a general lemma, that we call the Information Flow Lemma (see Lemma~\ref{lem:rdindeplb}), which precisely characterizes the transfer of information in quantum processes, run on arbitrary quantum inputs. This result then specializes to the setting we are interested in, namely quantum communication protocols.
We stress that this lemma has already found other applications besides this work, in particular to prove a lower bound on quantum information complexity of the Augmented Index function on a uniform distribution over the zeros of the function~\cite{NT16}, with  corollaries on the space complexity of quantum streaming algorithm for the $DYCK (2)$ problem of well-formed parentheses over two pairs of symbols.

We then turn our attention to quantum protocols with \textit{classical inputs}.
In this framework, even though some protocols might modify the input register, it is always possible, since the inputs are classical, to require that the players start the protocol by making a copy of their inputs and work with that copy. We call protocols such as these, where the input registers are left untouched, \emph{safe protocols}. This seemingly insignificant modification of the original protocol might drastically change the information cost. However, we prove that it can only decrease it (see Proposition~\ref{prop:safeQIC}). So it is enough to study the information cost of safe protocols when we are interested in minimizing the $\QIC$ for computing a task with classical inputs. 

When studying such quantum protocols with classical inputs, a notion of information cost (called Classical input Information Cost, or CIC) has been introduced in Ref.~\cite{KLLGR15}, where a first step was made to understand its relationship with QIC: the former is a lower bound on the latter -- that is, $\CIC \leq \QIC$. In order to complete the picture, we introduce two new notions: the Holevo Information Cost ($\HIC$), which measures how much information the players have about each other's input \textit{at the end} of the protocol (a round-by-round variant was considered in Ref.~\cite{JRS03, JN14}), and the Classical input Reverse Information Cost ($\CRIC$), which counts how much information about the inputs is \textit{forgotten} at each round by the player sending the message (this is somehow the dual under time reversal of CIC). 
 Based on our Information Flow Lemma, we give new operational interpretations to these quantities and, informally speaking, we show that they satisfy the two following very natural relationships: the Holevo information cost corresponds to the amount of classical information that was learnt and not forgotten during the protocol, while the quantum information cost captures all of the information transmitted during the protocol (what was learnt plus what was forgotten). This yields a new characterization of $\QIC$ by $\CIC$, up to a factor of $2$.
So the various notions of information cost introduced in this paper are tightly related, namely (see Propositions~\ref{thm:HicCicCric},~\ref{thm:QicCicCric} and~\ref{prop:QICleqCIC}): 
\vspace{1mm}%
\\
\noindent
\underline{Main Result 1:}
\textit{We have: $\HIC = \CIC - \CRIC,	\QIC = \CIC + \CRIC$. Moreover, $\CIC \leq \QIC \leq 2 \cdot \CIC$.
}

\vspace{2mm}%
These relationships emphasize the importance of CRIC, the cost of forgetting information.
This last quantity would always be zero in classical protocols: implicitly, classical information is always cloneable, hence players can memorize the whole history of the protocol and never forget information.
To understand the link with quantum protocols forgetting information, we introduce a model of classical reversible computing, endowing classical protocols with the ability to forget information. We show that this feature can only increase their information cost, and, as such, forgetting information is somehow a wasteful phenomenon that should be avoided in the context of classical communication (see Theorem~\ref{thm:simu-classical-NF}). However, in quantum protocols, cloning is not possible in general. This raises the question whether the property of forgetting information is only costly and should still be avoided in some sense. We answer this in the negative: forgetting information is absolutely necessary to obtain the quantum communication improvement allowed for computing certain functions. Indeed, if no information is forgotten in a quantum protocol, then $\QIC = \HIC$ is formally very similar to IC, and  the continuity in the input distribution has no round dependence, as in the classical case. Thus, the round dependence in this continuity bound for general quantum protocols that do forget information~\cite{BGKMT15} can be understood as being due to the fact that the \emph{same information} is forgotten and transmitted multiple times.
With this observation, we prove that any quantum protocol for Disjointness that does not forget information has linear quantum communication complexity (see Theorem~\ref{thm:disj-NF-n}). Hence, quantum protocols that do not forget information cannot obtain the quadratic quantum speed-up for the Disjointness function~\cite{MR2322514-AA-05-spatialsearch}, and this ability of quantum protocol to forget information is an essential feature of interactive quantum communication, not just some oddity we can get around. This can be summarized as follows:
\vspace{1mm}%
\\
\noindent
\underline{Main Result 2 :}
\textit{Forgetting information is useless in a classical reversible setting, but it is unavoidable in the quantum setting: it is a necessary feature of interactive quantum protocols to get significant communication improvement over classical protocols.}

\vspace{2mm}%
This important distinction shows that the flow of information behaves quite differently in the classical and in the quantum setting. However, the classical communication complexity is always lower bounded by the quantum communication complexity: quantum messages can simulate classical ones. We can ask the same question in terms of information: 
is it always possible to quantumly simulate classical messages \emph{while maintaining the information cost}?
Our next main result provides a positive answer.
We show that to any classical protocol $\Pi_C$ corresponds a quantum simulation protocol $\Pi_Q$ satisfying $\QCC (\Pi_Q) = CC (\Pi_C)$, $\QIC(\Pi_Q, \mu) = IC (\Pi_C, \mu)$ for any input distribution $\mu$, and implementing the same input-output channel $\Pi_Q = \Pi_C$. The main issue we deal with is the pure state quantum simulation of private randomness without altering the information cost (see Lemma~\ref{lem:simu-quantum}).
\vspace{1mm}%
\\
\noindent
\underline{Main Result 3 :}
\textit{ For any classical protocol, there exists a quantum protocol with the same input-output behaviour, and with communication and information costs smaller than the classical protocol.
}

\vspace{2mm}%
This result lets us conclude the paper with one more application. For the Inner Product function, QIC at zero-error over the uniform distribution is exactly $n$; a similar lower bound of $n (1 - o(1))$ holds for a random Boolean function  on $n+n$ bits.
Further using the quantum simulation of classical protocols mentionned above together with the fact that classical IC is continuous at zero-error~\cite{MR3210776-infoExactComm}, this shows that, in the limit when the error $\epsilon$ goes to $0$, IC of such a random Boolean function is not only $\Omega(n)$~\cite{MR3003574-discLBIC, kerenidis2015lower}, but is precisely $n (1 - o(1))$ (such a tight bound for the IC of Inner Product was known from Ref.~\cite{braverman2013information}).

\paragraph{Outline of the paper. }
This paper is structured as follows. After some preliminaries (Section~\ref{sec:ifl-prelim}), we state and prove our Information Flow Lemma  (Section~\ref{sec:infoflowlemma}). In Sections~\ref{sec:safe-quantum} and~\ref{sec:cost-forget-new-charac-QIC}, we prove our results on  safe quantum protocols, and then introduce CRIC, HIC and multiple other quantum notions of information cost (a table is provided in Appendix~\ref{sec:inflow-B-table-IC} to keep track of definitions and relationships). For the sake of comparison, in Section~\ref{sec:clforget} we define IC in a classical reversible computation paradigm and show that forgetting information is wasteful. In contrast, we prove in Section~\ref{sec:disj} that there is no quantum communication speed-up for Disjointness when the quantum protocols are not allowed to forget information.
 Then, we show how to simulate quantumly classical protocols in Section~\ref{sec:quantum-simu-classical}. Finally we prove our results on Inner Product and random Boolean functions  (Section~\ref{sec:IPrandfct}).

\section{Preliminaries: Quantum Communication and Information}
\label{sec:ifl-prelim}

\paragraph{Quantum Communication Model.} 
\label{sec:qccmodel}

Quantum communication complexity was introduced by Yao in Ref.~\cite{MR1328432-Y93}. The model we use here is closer to the one of Cleve and Buhrman~\cite{CB97}, with pre-shared entanglement, but we allow the players to communicate with quantum messages.
In this model, an $r$-round protocol $\Pi$ for a given classical task from input registers $A_{in} = X$, $B_{in} = Y$ to output registers $A_{out}$, $B_{out}$ is 
defined by a sequence of isometries $U_1$, $\cdots$, $U_{M + 1}$ along with a 
pure state $\psi \in \D (T_A^{in}  T_B^{in})$ shared between Alice and Bob,
for arbitrary finite dimensional registers $T_A^{in}$, $T_B^{in}$: the pre-shared entanglement. Above, $\D(\mathcal A)$ is the set of all unit trace, positive semi-definite linear operators mapping $\mathcal A$ into itself.
See Refs~\cite{Watrous15, Wilde13}.
We need $r+1$ isometries in order to have $r$ messages since a first isometry is applied before the first message
is sent and a last one after the final message is received.
In  the case of even $r$, for appropriate finite 
dimensional quantum memory registers $A_1$, $A_3$, $\cdots$, $A_{r - 1}$, $A^\prime$ held by Alice, $B_2$, $B_4$, $\cdots$, $B_{r - 2}$, $B^\prime$ 
held by Bob, and quantum communication registers $C_1$, $C_2$, $C_3$, $\cdots$, $C_r$ 
exchanged by Alice and Bob, we have
$U_1 \in \U(A_{in}  T_A^{in}, A_1  C_1)$, 
$U_2 \in \U(B_{in}  T_B^{in}  C_1, B_2  C_2)$, 
$U_3 \in \U(A_1  C_2, A_3  C_3)$, 
$U_4 \in \U(B_2  C_3, B_4  C_4)$, $\cdots$ , 
$U_{r} \in \U(B_{r - 2}  C_{r - 1},   B_{out}  B^\prime  C_{r})$,
$U_{r  + 1} \in \U(A_{r - 1}  C_r, A_{out}  A^\prime)$, where $\U(\mathcal A, \mathcal B)$ is the set of unitary channels from $\mathcal A$ to $\mathcal B$ :
see Figure~\ref{fig:int_mod}.
We adopt the convention that, at the outset, $A_0 = A_{in} T_A^{in}$, $B_0 = B_{in} T_B^{in}$, 
for odd $i$ with $1 \leq i < r$, $B_i = B_{i-1}$, for even $i$ with $1 < i \leq r$, $A_i = A_{i-1}$ and also $B_r = B_{r + 1} = B_{out} B^\prime$, and $A_{r+1} = A_{out} A^\prime$. In this way, after application of $U_i$, Alice holds register $A_i$, Bob holds register $B_i$ and the communication register is $C_i$. 
In the case of an odd number of messages $r$, the registers corresponding to $U_r$, $U_{r+1}$ are changed accordingly.
We slightly abuse notation and also write $\Pi$ to denote the channel from registers $ A_{in} B_{in}$ to $A_{out} B_{out}$ implemented by the protocol, i.e.~for any 
input distribution $\mu$ on $XY$ and $\rho_\mu$ encoding $\mu$ on input registers $A_{in} B_{in}$,
\begin{align}
\Pi (\rho_\mu) =
\Tr{A^\prime B^\prime }{U_{M  + 1} U_M \cdots U_2 U_1 (\rho_\mu \otimes \psi)}.
\end{align}

Note that the $ A^\prime $ and $ B^\prime $ registers are the final memory registers that are being discarded at the end of the protocol by Alice and Bob, respectively.

Recall that for a given state,  all purifications are related by isometries on the purification registers. For classical input registers $XY$ distributed according to $\mu$, we consider a canonical purification $\ket{\rho_\mu}^{X R_X Y R_Y}$ of $\rho_\mu^{A_{in} B_{in}}$, with
\begin{align}
\ket{\rho_\mu}^{X R_X Y R_Y} = \sum_{x, y} \sqrt{\mu (x, y)} \ket{xxyy}^{X R_X Y R_Y}.
\end{align}
We then say that the purifying registers $R_X R_Y$ contain \emph{quantum copies} of $XY$.
Then, the state at round $i$,
\begin{align}
\rho_i^{X R_X Y R_Y A_i B_i C_i} = U_i \cdots U_1 (\rho^{X R_X Y R_Y} \otimes \psi^{T_A^{in} T_B^{in}})
\end{align}
is pure. Also, we require that the final marginal state $\Pi (\rho^{A_{in} B_{in} R_X R_Y})$
on $R_X R_Y A_{out} B_{out}$ is classical. We say that a protocol $\Pi$ solves a function $f$ with error $\epsilon$ with respect to input distribution $\mu$ if $\Pr_\mu [\Pi (x, y) \not= f(x, y)] \leq \epsilon$, and we say $\Pi$ solves $f$ with error $\epsilon$ if $\max_{(x, y)} \Pr[\Pi (x, y) \not= f(x, y)] \leq \epsilon$.

We also make use of the notion of a control-isometry: it is an isometry acting on a classical-quantum register by leaving the content of the classical register unchanged. Such a classical register is called a control-register.

\begin{figure}
\begin{overpic}[width=1\textwidth]{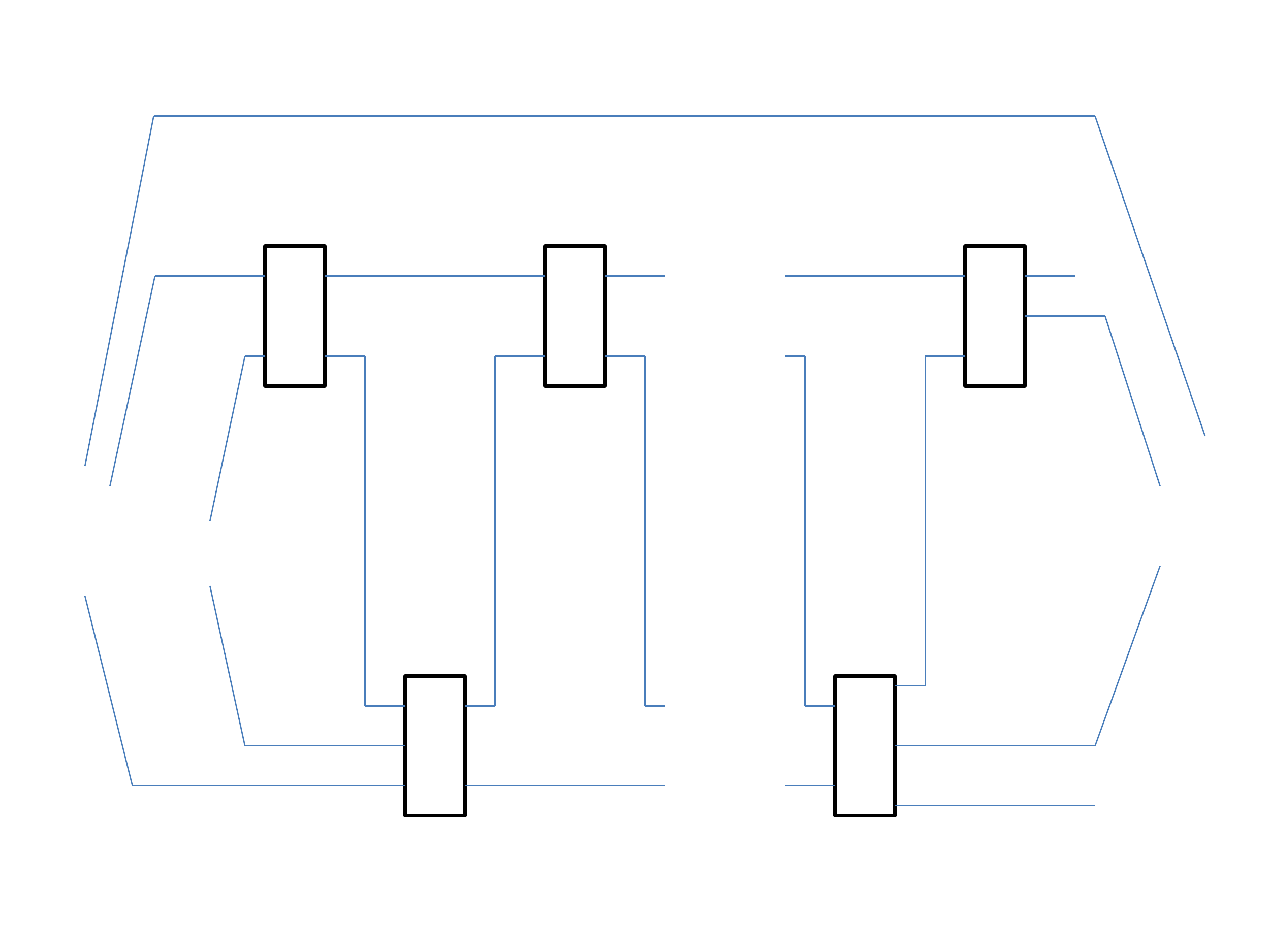}
  \put(0,65){Ray}
  \put(0,45){Alice}
  \put(0,22){Bob}
  \put(6,31){\footnotesize $\ket{\rho}$}
  \put(15,66.5){\footnotesize$R$}
  \put(15,54){\footnotesize$A_{in}$}
  \put(15,13.5){\footnotesize$B_{in}$}
  \put(15,44){\footnotesize$T_A^{in}$}
  \put(15,19){\footnotesize$T_B^{in}$}
  \put(22.1,49.5){\footnotesize$U_1$}
  \put(15,31){\footnotesize$\ket{\phi_1}$}
  \put(26.2,54){\footnotesize$A_1$}
  \put(26.2,48){\footnotesize$C_1$}
  \put(33,15.5){\footnotesize$U_2$}
  \put(37.2,54){\footnotesize$A_2$}
  \put(37.2,17.4){\footnotesize$C_2$}
  \put(37.2,13.7){\footnotesize$B_2$}
  \put(44.2,49.5){\footnotesize$U_3$}
  \put(48.2,54){\footnotesize$A_3$}
  \put(48.2,48){\footnotesize$C_3$}
  \put(48.2,13.7){\footnotesize$B_3$}
  \put(55,33){\footnotesize$\cdots$}
  \put(59.5,54){\footnotesize$A_{r-1}$}
  \put(59.5,48){\footnotesize$C_{r-1}$}
  \put(59.5,13.7){\footnotesize$B_{r-1}$}
  \put(66.5,15.5){\footnotesize$U_{r}$}
  \put(72.3,54){\footnotesize$A_r$}
  \put(73.5,23){\footnotesize$C_r$}
  \put(73.5,16.4){\footnotesize$B_{out}$}
  \put(73.5,11.7){\footnotesize$B^{\prime}$}
  \put(77.3,49.5){\footnotesize$U_{f}$}
  \put(81.3,54){\footnotesize$A^\prime$}
  \put(81.3,50.3){\footnotesize$A_{out}$}
  \put(92,33){\footnotesize$\Pi (\rho)$}
\end{overpic}
  \caption{Depiction of a quantum protocol in the interactive model, adapted from the long version of~\cite[Figure 1]{Tou15}.}
  \label{fig:int_mod}
\end{figure}

\paragraph{Quantum Information Cost.} The main quantity of interest in this work is the quantum information cost, as introduced in~\cite{Tou15}. In quantum communication protocols, there is no clear notion of a transcript, so this definition counts how much information is exchanged in each round. In the sequel, we denote the Von Neumann entropy by $H$, and for a tripartite state $\rho^{ABC}$, we denote  the conditional quantum mutual information (CQMI) between $A$ and $B$ conditioned on $C$ by $I(A:B|C) = H(A,C) + H(B,C) - H(C) - H(A,B,C)$. We will make use of many properties of CQMI, among which the following.
\begin{lemma}
\label{prelim:lem:CQMIfacts}
	If $\rho = \rho^{ABC}$ and $\sigma = \sigma^{DEF}$ are two states on distinct registers, then
	$$
		I(AD ; BE|CF)_{\rho \otimes \sigma} = I(A ; B| C)_{\rho} + I(D ; E | F)_{\sigma}.
	$$
	If $\rho = \rho^{ABCD} = \sum_{c} p(c) \kb{c}{c} \otimes \rho_c^{ABD} $ is a classical-quantum state with classical register $C$, then
	$$
		I(A:B|CD)_{\rho} = \bbE_c\left[ I(A:B|D)_{\rho_c} \right].
	$$
	If $\rho = \rho^{ABCD}$ is a pure state, then
	$$
		I(A ; B|C)_{\rho} = I(A ; B| D)_{\rho}.
	$$
\end{lemma}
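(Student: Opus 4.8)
The plan is to prove all three identities by unfolding the definition $I(A:B|C)_\rho = H(AC)_\rho + H(BC)_\rho - H(C)_\rho - H(ABC)_\rho$ and invoking three elementary properties of the von Neumann entropy $H$, each of which I would state explicitly (with a reference such as \cite{Wilde13, Watrous15}): additivity over tensor products, $H(XY)_{\tau \otimes \omega} = H(X)_\tau + H(Y)_\omega$ whenever $\tau$ is supported on $X$ and $\omega$ on $Y$; the block-diagonal grouping identity $H(CX)_\tau = H(p) + \bbE_c[ H(X)_{\tau_c} ]$ for a classical--quantum state $\tau = \sum_c p(c)\, \kb{c}{c}^C \otimes \tau_c^X$, which holds because the summands have mutually orthogonal supports (here $H(p) = -\sum_c p(c) \log p(c)$); and the fact that, for a pure state $\omega$ on a composite system, complementary subsystems have equal entropy, $H(S)_\omega = H(\bar S)_\omega$, which follows from the Schmidt decomposition.

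For the first identity I would expand $I(AD:BE|CF)_{\rho\otimes\sigma}$ as the signed sum of the four entropies $H(ACDF)$, $H(BCEF)$, $H(CF)$, $H(ABCDEF)$ evaluated on $\rho\otimes\sigma$, apply additivity to split each into its $\rho$-part and $\sigma$-part (e.g.~$H(ACDF)_{\rho\otimes\sigma} = H(AC)_\rho + H(DF)_\sigma$), and regroup the eight resulting terms; the four $\rho$-terms assemble into $I(A:B|C)_\rho$ and the four $\sigma$-terms into $I(D:E|F)_\sigma$.

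For the second identity I would apply the grouping identity to each of the four entropies in $I(A:B|CD)_\rho = H(ACD)_\rho + H(BCD)_\rho - H(CD)_\rho - H(ABCD)_\rho$, treating $C$ as the classical register and $AD$, $BD$, $D$, $ABD$ respectively as the quantum part; the four copies of $H(p)$ cancel in pairs (two with a plus sign, two with a minus sign), leaving $\bbE_c[H(AD)_{\rho_c} + H(BD)_{\rho_c} - H(D)_{\rho_c} - H(ABD)_{\rho_c}] = \bbE_c[I(A:B|D)_{\rho_c}]$. For the third identity I would note that, since $\rho^{ABCD}$ is pure, the complementary pairs $(AC,BD)$, $(BC,AD)$, $(C,ABD)$, $(ABC,D)$ give $H(AC) = H(BD)$, $H(BC) = H(AD)$, $H(C) = H(ABD)$ and $H(ABC) = H(D)$; substituting these into the definition of $I(A:B|C)_\rho$ converts it term by term into $H(BD) + H(AD) - H(ABD) - H(D) = I(A:B|D)_\rho$.

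None of these steps is a genuine obstacle; the lemma is a routine consequence of standard entropy identities, and the main thing to watch is bookkeeping — checking that registers are partitioned consistently in each application of the three properties and that the signs line up so that the auxiliary terms (the $\sigma$-entropies in the first case, the $H(p)$ terms in the second) cancel correctly. I would present each of the three parts as a short displayed chain of equalities following the pattern above.
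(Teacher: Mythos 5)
Your proof is correct. The paper itself states these three identities without proof, treating them as standard facts about the conditional quantum mutual information, and your derivation — expanding each CQMI via $I(A;B|C)=H(AC)+H(BC)-H(C)-H(ABC)$ and applying additivity of entropy over tensor products, the grouping identity for classical--quantum states, and equality of entropies of complementary subsystems of a pure state — is exactly the canonical argument one would write out; all the cancellations you describe (the four $\sigma$-terms, the four $H(p)$ terms, and the four complementary-subsystem substitutions) check out.
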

Let us recall the definition of quantum information cost introduced in~\cite{Tou15}.
\begin{definition}
\label{prelim:def:QIC}
For a protocol $\Pi$ and an input distribution $\mu$, 
we define the \emph{quantum information cost} of $\Pi$ on input $\mu $ as
\begin{align*}
	\QIC (\Pi, \rho) &= \sum_{i \geq 1,\ odd} I(C_i; R_X R_Y | B_{i}) + \sum_{i \geq 1,\ even}  I(C_i; R_X R_Y | A_{i}).
\end{align*}
For any function $f$, any input distribution $\mu$,  and any $\epsilon>0$
	\begin{align}\label{eq:defQIComplexity}
		\QIC (f, \mu, \epsilon) = \inf_{\Pi} \QIC (\Pi, \mu)
	\end{align}
	where the infimum is over the protocols $\Pi$ computing $f$ with error $\epsilon$ w.r.t $\mu$.
\end{definition}
This quantity has many nice properties (see \cite{Tou15, BGKMT15}); in particular it characterizes the (quantum) amortized communication complexity.
We stress that the definition is independent of the choice of purification.

\paragraph{Discussion about compression. }
Some previous notions of information cost for quantum protocols (e.g. in Refs.~\cite{JRS03, JN14,KLLGR15}) were more similar in spirit to classical input information cost than to quantum information cost. Our results shed new light on why these previous definitions were restricted to compression results for a single round. In the first round, Alice does not yet possess any information on Bob's input (aside from what she can infer from her own input). 
For one-round protocols, it is then immaterial whether one uses classical input information cost or quantum information cost.
But then in subsequent rounds, generally Alice has in her registers some information about Bob's input. It is then possible for her to forget information while sending a message. We can even construct a protocol where, at the third round, Bob does not learn anything whereas Alice forgets a lot of information. For such a round of communication, the previous definitions of information cost, e.g. CIC introduced in Ref.~\cite{KLLGR15}, would evaluate to $0$ whereas QIC would be large. Thus, it is impossible to compress such a quantum message down to its CIC, that is, almost at no cost, while keeping, in a round-by-round fashion, the overall state of the protocol  almost equivalent  to that in the original protocol. Indeed, we know from our developments that to forget information we must invest communication.
As a consequence, we see that for quantum protocols, it is important to take into account the cost of forgetting information.

The purification register used in the definition of QIC possibly appears artificial when considering classical inputs. In this direction, we prove below  (see Section~\ref{sec:cost-forget-new-charac-QIC}) an arguably more natural characterization (at least from a classical correlation point of view) of each term in the quantum information cost as the sum of how much information about his own input a party is sending plus how much information about the other party's input he is forgetting.
However, we argue that there is still virtue in taking the purification of the classical input viewpoint.
Firstly, it enables to keep track of a global pure state, which in many situations is a remarquably powerful viewpoint.
Secondly and more fundamentally, the purification viewpoint has a nice operational interpretation through the task of quantum state redistribution, which is useful when aiming at compression results.
Indeed, at any point of the interactive protocol, the pure quantum state can be seen as a $4$-partite state $\rho^{A_R A_S M R}$ consisting of the receiver's and the sender's private registers  ($A_R$ and $A_S$ respectively), the message  register $M$ and a purification register  $R$. Then, each term in QIC is of the form $I(R;M | A_R)$, that is, the mutual information between the message and the inaccessible purification register, conditioned on the receiver's side information. Such an expression is known~\cite{DY08, YD09} to quantify the cost of redistributing the message register while maintaining correlations with the receiver's and the sender's private registers as well as the environment.
The $\CIC$ terms can also be given such an operational significance for the information about the sender's input that a message contains. However, this viewpoint breaks down for the information that is forgotten (see the operational interpretation given at Section~\ref{sec:cost-forget-new-charac-QIC}). Indeed, to measure the amount of information being forgotten, we condition on the sender's side information for sending information about the receiver's input. This term would be hard to account for in a compression viewpoint (unless we think of messages going backward). Hence, we think that the purification viewpoint remains appropriate for compression purposes.




\section{Information Flow Lemma}
\label{sec:infoflowlemma}

In this section, we state and prove the \emph{Information Flow Lemma} (see Lemma~\ref{lem:rdindeplb} below), which  allows to keep track exactly of the flow of quantum information in an interactive protocol and is key to much of our further developments.
Moreover, it gives a lower bound on QIC that does not depend on the number of round (see Corollary~\ref{coro:roundIndepLB}), and is used, among other things, to give an exact meaning to the cost of forgetting in interactive quantum protocols.
We present here a quite general version of this result. However, we stress that a more limited version, that is still sufficient to obtain a lower bound on QIC, has already found some applications; see Ref.~\cite{NT16}.

Let us consider the more general framework of bipartite interactive 
quantum processes, of which the model of quantum communication complexity defined in Section~\ref{sec:ifl-prelim} is a special case.
This general framework modelizes a discretized quantum process in which there is 
interaction between two distinct, localized parties, and local 
evolution at each time step.

In more details, Alice and Bob start in a joint state $\rho_0^{\bar{A}_0 \bar{B}_0}$, 
for which we consider an arbitrary 
extension $\rho_0^{\bar{A}_0 \bar{B}_0 \bar{E} \bar{F}}$ 
(such that $\Tr{\bar{E} \bar{F}}{\rho^{\bar{A}_0 \bar{B}_0 \bar{E} \bar{F}}} 
= \rho^{\bar{A}_0 \bar{B}_0}$). 
The process runs for $r+1$ rounds, 
with $\rho_i$ the state in round $i$, 
 registers $\bar{A}_i$, $\bar{B}_i$, $\bar{C}_i$ 
and $\bar{D}_i$ in each round, with $\bar{C}_0$, $\bar{D}_0$, 
$\bar{C}_{r+1}$ and $\bar{D}_{r+1}$ being trivial registers in 
the $0$-th and $r+1$-th round, initially and at the end of the 
process. In round $i$, for $1 \leq i \leq r$, after being generated 
by Alice, register $\bar{C}_i$ gets communicated from Alice to Bob, 
and, after being generated by Bob, register $\bar{D}_i$ gets 
communicated from Bob to Alice. Register $\bar{A}_i$ is a 
quantum memory register held by Alice, and register $\bar{B}_i$ is a 
quantum memory register held by Bob. The evolution is through local 
isometries $U_i = U_i^{\bar{A}_{i-1} \bar{D}_{i - 1} \rightarrow \bar{A}_i \bar{C}_i}$ 
on Alice's side and $V_i = V_i^{\bar{B}_{i-1} \bar{C}_{i - 1} \rightarrow \bar{B}_i \bar{D}_i}$ 
on Bob's side: $\rho_i^{\bar{A}_i \bar{B}_i \bar{C}_i \bar{D}_i \bar{E} \bar{F}} = (U_i \otimes V_i) \rho_{i-1}^{\bar{A}_{i-1} \bar{B}_{i-1} \bar{C}_{i-1} \bar{D}_{i-1} \bar{E} \bar{F}}$.

\begin{figure}
\begin{overpic}[width=1\textwidth]{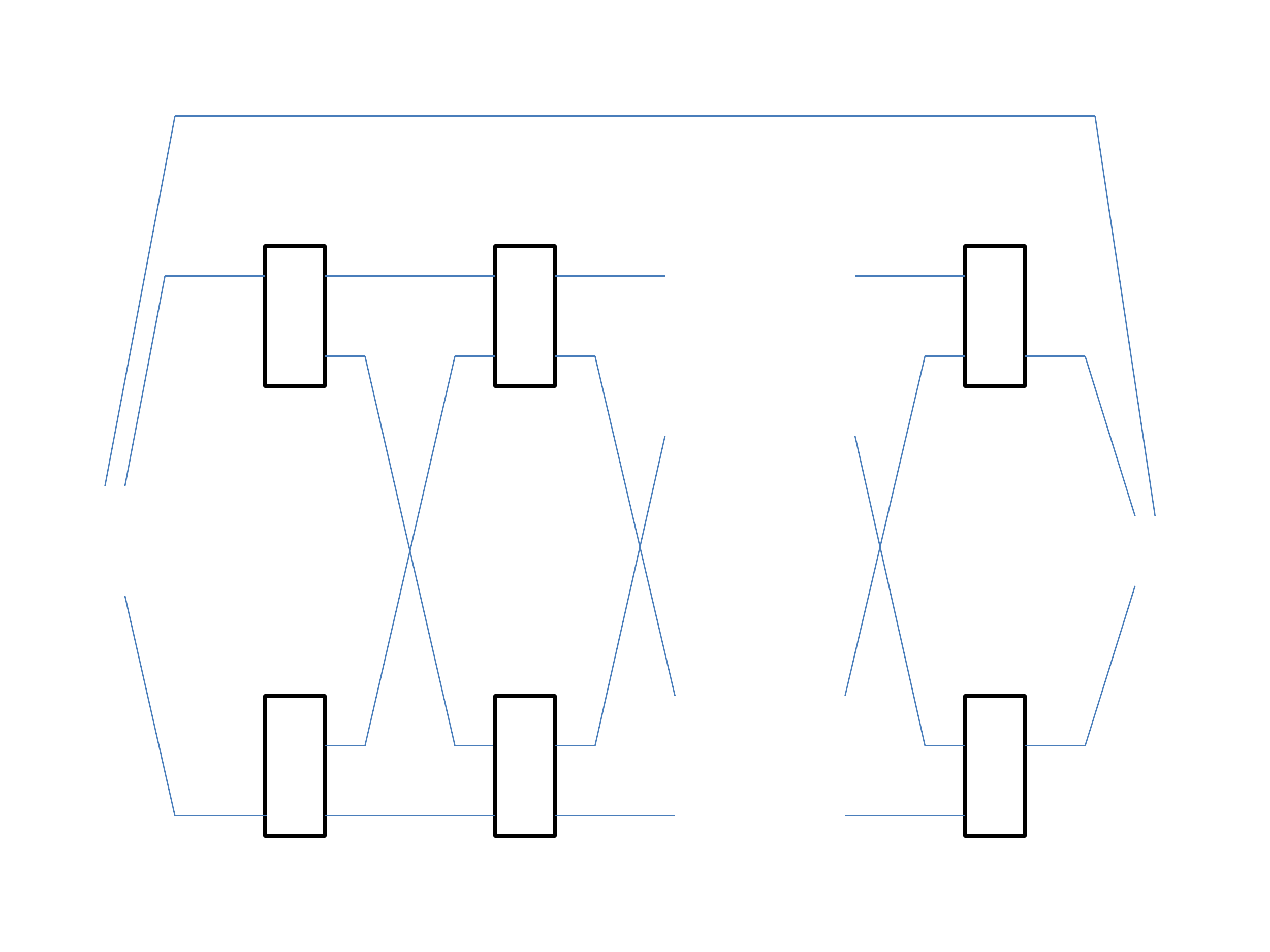}
  \put(0,65){Extension}
  \put(0,45){Alice}
  \put(0,22){Bob}
  \put(8,31){\footnotesize $\rho_0$}
  \put(15,66.5){\footnotesize$\bar{E} \bar{F}$}
  \put(15,54){\footnotesize$\bar{A}_{0}$}
  \put(15,8.5){\footnotesize$\bar{B}_{0}$}
  \put(22.1,49.5){\footnotesize$U_1$}
  \put(22.1,14){\footnotesize$V_1$}
  \put(26.2,54){\footnotesize$\bar{A}_1$}
  \put(26.2,48){\footnotesize$\bar{C}_1$}
  \put(26.2,13.5){\footnotesize$\bar{D}_1$}
  \put(26.2,8.5){\footnotesize$\bar{B}_1$}
  \put(36.2,48){\footnotesize$\bar{D}_1$}
  \put(36.2,13.5){\footnotesize$\bar{C}_1$}
  \put(40.2,49.5){\footnotesize$U_2$}
  \put(40.2,14){\footnotesize$V_2$}
  \put(44.3,54){\footnotesize$\bar{A}_2$}
  \put(44.3,48){\footnotesize$\bar{C}_2$}
  \put(44.3,13.7){\footnotesize$\bar{D}_2$}
  \put(44.3,8.5){\footnotesize$\bar{B}_2$}
  \put(58,33){\footnotesize$\cdots$}
  \put(70.8,54){\footnotesize$\bar{A}_{r-1}$}
  \put(70.8,48){\footnotesize$\bar{D}_{r-1}$}
  \put(70.8,13.7){\footnotesize$\bar{C}_{r-1}$}
  \put(70.8,8.5){\footnotesize$\bar{B}_{r-1}$}
  \put(77.3,49){\footnotesize$U_{r}$}
  \put(77.3,14){\footnotesize$V_{r}$}
  \put(81.3,48){\footnotesize$\bar{A}_{r}$}
  \put(81.3,13.7){\footnotesize$\bar{B}_{r}$}
  \put(90,31){\footnotesize $\rho_r$}
\end{overpic}
  \caption{Depiction of an interactive quantum process, adapted from the long version of~\cite[Figure 1]{Tou15}.}
  \label{fig:int_process}
\end{figure}

Registers $\bar{E} \bar{F}$ are left untouched throughout, and can be thought of in the following way: we want to measure how much information Bob knows about $\bar{E}$ from the point of view of someone who knows $\bar{F}$. We get the following exact characterization of the flow of information from this point of view.

\begin{lemma}
\label{lem:rdindeplb}
(Information Flow Lemma)
	Given an interactive quantum process as defined above, the following holds:
\begin{align*}
	 I(\bar{E} ; \bar{B}_{r+1} | \bar{F})_{\rho_{r+1}} - I (\bar{E} ; \bar{B}_0 | \bar{F})_{\rho_0} = \sum_{i=1}^r (I(\bar{E} ; \bar{C}_i | \bar{F} \bar{B}_i )_{\rho_i} - I(\bar{E} ; \bar{D}_i | \bar{F} \bar{B}_i )_{\rho_i} ).
\end{align*}
\end{lemma}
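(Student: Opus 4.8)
The plan is to prove the identity by a telescoping argument over the rounds, after isolating a clean per-round contribution. Write $f(i) := I(\bar{E};\bar{B}_i|\bar{F})_{\rho_i}$, so that the left-hand side is $f(r+1) - f(0) = \sum_{i=1}^{r+1}\bigl(f(i)-f(i-1)\bigr)$, reading the evolution $\rho_i = (U_i\otimes V_i)\rho_{i-1}$ as valid for $1\le i\le r+1$ with $\bar{C}_{r+1},\bar{D}_{r+1}$ trivial. It then suffices to prove, for each such $i$, the per-round identity
\begin{align*}
f(i)-f(i-1) = I(\bar{E};\bar{C}_{i-1}|\bar{F}\bar{B}_{i-1})_{\rho_{i-1}} - I(\bar{E};\bar{D}_i|\bar{F}\bar{B}_i)_{\rho_i}.
\end{align*}

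To obtain this, I would exploit that $U_i$ acts only on $\bar{A}_{i-1}\bar{D}_{i-1}$ and $V_i$ only on $\bar{B}_{i-1}\bar{C}_{i-1}$, while $\bar{E},\bar{F}$ are untouched. Because $U_i$ is an isometry, tracing its output $\bar{A}_i\bar{C}_i$ out of $\rho_i$ amounts to tracing its input $\bar{A}_{i-1}\bar{D}_{i-1}$ out of $\rho_{i-1}$; hence the marginal of $\rho_i$ on $\bar{E}\bar{F}\bar{B}_i\bar{D}_i$ is exactly the image of the marginal of $\rho_{i-1}$ on $\bar{E}\bar{F}\bar{B}_{i-1}\bar{C}_{i-1}$ under $V_i$. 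Since conditional quantum mutual information is invariant under a local isometry applied to one of its arguments (all the relevant von Neumann entropies being preserved), this yields $I(\bar{E};\bar{B}_i\bar{D}_i|\bar{F})_{\rho_i} = I(\bar{E};\bar{B}_{i-1}\bar{C}_{i-1}|\bar{F})_{\rho_{i-1}}$. Expanding both sides with the chain rule for CQMI, namely $I(\bar{E};\bar{B}_i\bar{D}_i|\bar{F}) = f(i) + I(\bar{E};\bar{D}_i|\bar{F}\bar{B}_i)$ at $\rho_i$ and $I(\bar{E};\bar{B}_{i-1}\bar{C}_{i-1}|\bar{F}) = f(i-1) + I(\bar{E};\bar{C}_{i-1}|\bar{F}\bar{B}_{i-1})$ at $\rho_{i-1}$, and rearranging, gives exactly the per-round identity.

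Summing over $i=1,\dots,r+1$, the left-hand side telescopes to $f(r+1)-f(0)$. For the right-hand side, reindexing the first family of terms by $j=i-1$ gives $\sum_{j=0}^{r} I(\bar{E};\bar{C}_j|\bar{F}\bar{B}_j)_{\rho_j}$, whose $j=0$ term vanishes because $\bar{C}_0$ is trivial; the second family is $\sum_{i=1}^{r+1} I(\bar{E};\bar{D}_i|\bar{F}\bar{B}_i)_{\rho_i}$, whose $i=r+1$ term vanishes because $\bar{D}_{r+1}$ is trivial. What survives is precisely $\sum_{i=1}^{r}\bigl(I(\bar{E};\bar{C}_i|\bar{F}\bar{B}_i)_{\rho_i} - I(\bar{E};\bar{D}_i|\bar{F}\bar{B}_i)_{\rho_i}\bigr)$, which is the claimed right-hand side.

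I expect the only delicate point to be the register bookkeeping: one must check that the registers present in $\rho_i$ are exactly $\bar{A}_i\bar{B}_i\bar{C}_i\bar{D}_i\bar{E}\bar{F}$ (with $\bar{A}_{i-1},\bar{D}_{i-1}$ consumed by $U_i$ and $\bar{B}_{i-1},\bar{C}_{i-1}$ consumed by $V_i$), and, crucially, to notice that the per-round identity produces a $\bar{C}$-term carrying index $i-1$ at state $\rho_{i-1}$ but a $\bar{D}$-term carrying index $i$ at state $\rho_i$, so that the telescoped sum only closes up after invoking the triviality of the boundary registers $\bar{C}_0$ and $\bar{D}_{r+1}$. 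Everything else reduces to the chain rule for CQMI and its invariance under local isometries, both standard.
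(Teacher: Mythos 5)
Your proof is correct and rests on exactly the same two ingredients as the paper's: the identity $I(\bar{E};\bar{B}_i\bar{D}_i|\bar{F})_{\rho_i}=I(\bar{E};\bar{B}_{i-1}\bar{C}_{i-1}|\bar{F})_{\rho_{i-1}}$ obtained from isometric invariance of CQMI under the local isometries, combined with the chain rule. The paper merely organizes the same computation as a backward recursion unfolding from $I(\bar{E};\bar{B}_{r+1}|\bar{F})$, whereas you write it as an explicit forward telescoping sum whose boundary terms are killed by the triviality of $\bar{C}_0$ and $\bar{D}_{r+1}$.
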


\begin{proof}
	
We keep track of the flow of information using the chain rule and local isometric invariance of CQMI:

\begin{align*}
	I(\bar{E} ; \bar{B}_{r+1}  | \bar{F} ) & = I (\bar{E} ; \bar{B}_{r} \bar{C}_r | \bar{F})  \\
					& = I (\bar{E} ;  \bar{B}_{r} | \bar{F} ) +  I (\bar{E} ;  \bar{C}_r | \bar{F} \bar{B}_{r})   
					 + \big( I (\bar{E} ;  \bar{D}_r | \bar{F} \bar{B}_{r}) - I (\bar{E} ;  \bar{D}_r | \bar{F} \bar{B}_{r}) \big) \\
					& = \big( I (\bar{E} ;  \bar{B}_r | \bar{F}) + I (\bar{E} ;  \bar{D}_r | \bar{F} \bar{B}_{r}) \big)  
					+ I (\bar{E} ;  \bar{C}_r | \bar{F} \bar{B}_r) - I (\bar{E} ;  \bar{D}_r | \bar{F} \bar{B}_{r}) \\
					& =  I (\bar{E} ;  \bar{B}_r \bar{D}_r | \bar{F})  
					+ I (\bar{E} ;  \bar{C}_r | \bar{F} \bar{B}_r) - I (\bar{E} ;  \bar{D}_r | \bar{F} \bar{B}_{r}) \\
					& =  I (\bar{E} ;  \bar{B}_{r-1} \bar{C}_{r-1} | \bar{F}) 
					 + I (\bar{E} ;  \bar{C}_r | \bar{F} \bar{B}_r) - I (\bar{E} ;  \bar{D}_r | \bar{F} \bar{B}_{r}).
\end{align*}
Applying recursively the same argument leads to
\begin{align*}
I(\bar{E} ; \bar{B}_{r+1}  | \bar{F} ) 
					& =  I (\bar{E} ;  \bar{B}_{1} \bar{C}_{1} | \bar{F}) 
					 + \sum_{i = 2}^r  \big( I (\bar{E} ;  \bar{C}_i | \bar{F} \bar{B}_i) - I (\bar{E} ;  \bar{D}_i | \bar{F} \bar{B}_{i}) \big)
					\\
					& =  I (\bar{E} ;  \bar{B}_{1} \bar{D}_1 | \bar{F}) 
					 + I (\bar{E} ;   \bar{C}_{1} | \bar{F} \bar{B}_{1}) - I (\bar{E} ;   \bar{D}_{1} | \bar{F} \bar{B}_{1})
					\\
					& \qquad + \sum_{i = 2}^r  \big( I (\bar{E} ;  \bar{C}_i | \bar{F} \bar{B}_i) - I (\bar{E} ;  \bar{D}_i | \bar{F} \bar{B}_{i}) \big)
					\\
					& =  I (\bar{E} ;  \bar{B}_{0} | \bar{F}) 
					+ \sum_{i = 1}^r  \big( I (\bar{E} ;  \bar{C}_i | \bar{F} \bar{B}_i) - I (\bar{E} ;  \bar{D}_i | \bar{F} \bar{B}_{i}) \big).
\end{align*}
We get the desired result by rearranging terms.
\end{proof}

In the remainder of this work, we are concerned with quantum communication protocols as defined in Section~\ref{sec:ifl-prelim}, for which an easy corollary of the Information Flow Lemma is as follows. A similar result holds for Alice.

\begin{corollary}
\label{lem:rdindeplb-coro}
	Given a protocol $\Pi$, an input distribution $\mu$ and any extension $\rho_0^{A_{in} B_{in} E_1 E_2}$
satisfying~:
$\Tr{E_1 E_2}{\rho_0^{A_{in} B_{in} E_1 E_2}} = \rho_\mu^{A_{in} B_{in}}$,
\begin{align*}
	I (E_1 ; B^\prime B_{out} | E_2)_{\rho_{r+1}} - I(E_1 ; B_{in} | E_2)_{\rho_0} = \sum_{i~odd} I(E_1 ; C_i | E_2 B_i)_{\rho_i} - \sum_{i~even} I (E_1 ; C_i | E_2 B_i)_{\rho_i}.
\end{align*}
\end{corollary}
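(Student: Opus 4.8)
The plan is to derive Corollary~\ref{lem:rdindeplb-coro} as a direct instantiation of the Information Flow Lemma (Lemma~\ref{lem:rdindeplb}), after observing that a quantum communication protocol is the special case of an interactive quantum process in which, at each round, only one of the two parties sends a message.

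First I would fix the dictionary between the two models. Put $\bar E = E_1$ and $\bar F = E_2$, and take as the initial extension $\rho_0^{\bar A_0 \bar B_0 \bar E \bar F} = \rho_0^{A_{in} B_{in} E_1 E_2} \otimes \psi^{T_A^{in} T_B^{in}}$ with $\bar A_0 = A_{in} T_A^{in}$ and $\bar B_0 = B_{in} T_B^{in}$; since $\rho_0^{A_{in} B_{in} E_1 E_2}$ is an extension of $\rho_\mu^{A_{in} B_{in}}$ and $\psi$ is tensored on independently of $E_1 E_2$, this meets the hypothesis of Lemma~\ref{lem:rdindeplb}. For the dynamics, on an odd round $i$ Alice sends the message, so I would let the process isometry $U_i$ be the protocol isometry $U_i$, identify $\bar C_i$ with the communication register $C_i$, and let Bob's step $V_i$ be the identity with $\bar D_i$ trivial; on an even round $i$ I would do the mirror image, letting $V_i$ be the protocol isometry $U_i$, identifying $\bar D_i$ with $C_i$, and letting $U_i$ be the identity with $\bar C_i$ trivial. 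The final protocol isometry $U_{r+1}$ transmits no register, so it is absorbed into a last round with $\bar C_{r+1}$ and $\bar D_{r+1}$ trivial. Under the register conventions of Section~\ref{sec:ifl-prelim}, Bob's memory register changes only on the rounds where he acts, so $\bar B_i = B_i$ for every $i$, with $\bar B_0 = B_{in} T_B^{in}$ and $\bar B_{r+1} = B_{out} B'$; moreover the process states coincide with the states $\rho_i$ obtained by running $\Pi$ on $\rho_0 \otimes \psi$.

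It then remains to read off both sides of Lemma~\ref{lem:rdindeplb}. On the left, $I(\bar E ; \bar B_{r+1} | \bar F)_{\rho_{r+1}} = I(E_1 ; B' B_{out} | E_2)_{\rho_{r+1}}$, while $I(\bar E ; \bar B_0 | \bar F)_{\rho_0} = I(E_1 ; B_{in} T_B^{in} | E_2)_{\rho_0} = I(E_1 ; B_{in} | E_2)_{\rho_0}$, the last equality because $T_B^{in}$ is uncorrelated with $E_1 E_2 B_{in}$ (chain rule plus $I(E_1 ; T_B^{in} | E_2 B_{in}) = 0$, or Lemma~\ref{prelim:lem:CQMIfacts}). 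In the sum, for odd $i$ the term $I(\bar E ; \bar D_i | \bar F \bar B_i)$ vanishes since $\bar D_i$ is trivial, leaving $I(E_1 ; C_i | E_2 B_i)$; for even $i$ the term $I(\bar E ; \bar C_i | \bar F \bar B_i)$ vanishes, leaving $-I(E_1 ; C_i | E_2 B_i)$. Summing over $i = 1, \dots, r$ and rearranging gives exactly the claimed identity, and the analogous statement for Alice follows by exchanging the roles of the two parties (equivalently, of $\bar C_i$ and $\bar D_i$) throughout the dictionary.

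I do not expect any real difficulty: all of the substance is already in Lemma~\ref{lem:rdindeplb}, and what remains is bookkeeping. The only points that need a little care are (i) checking that appending the pre-shared state $\psi$ does not alter the boundary term $I(E_1 ; B_{in} | E_2)$, which holds by independence, and (ii) matching the off-by-one between the $r$ communication registers $C_1, \dots, C_r$ and the $r+1$ protocol isometries, which is handled by treating the last isometry as an extra round with trivial communication registers.
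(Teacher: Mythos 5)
Your proposal is correct and is exactly the argument the paper intends: the paper states this as an immediate corollary of the Information Flow Lemma, and your dictionary (trivializing $\bar D_i$ on odd rounds and $\bar C_i$ on even rounds, tensoring in $\psi$, and absorbing $U_{r+1}$ into a final round with trivial communication) is the straightforward instantiation, with the two bookkeeping points you flag handled correctly.
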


Combining the above result and a similar one holding for Alice, we get the following lower bound on quantum information cost, stated as a sum of differences between the amount of correlations of reference registers with the output and the input.

\begin{corollary}\label{coro:roundIndepLB}
	Given a protocol $\Pi$, an input distribution $\mu$ 
and any two extensions $\rho_{0,B}^{A_{in} B_{in} E_1 E_2}$, $\rho_{0,A}^{A_{in} B_{in} F_1 F_2}$
satisfying~:
$\Tr{E_1 E_2}{\rho_{0, A}^{A_{in} B_{in} E_1 E_2}} = \rho_\mu^{A_{in} B_{in}}$,
$\Tr{F_1 F_2}{\rho_{0, B}^{A_{in} B_{in} F_1 F_2}} = \rho_\mu^{A_{in} B_{in}}$,
the following holds:
\begin{align*}
	\QIC (\Pi, \rho) & \geq  I (F_1 ; A_{out} A^\prime | F_2) - I (F_1 ; A_{in} | F_2) \\
			& \quad + I (E_1 ; B_{out} B^\prime | E_2) - I (E_1 ; B_{in} | E_2).
\end{align*}
\end{corollary}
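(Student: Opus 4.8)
The plan is to apply Corollary~\ref{lem:rdindeplb-coro} twice — once as stated for Bob, and once in its analogous form for Alice — and then bound each of the two resulting sums of signed CQMI terms by the full quantum information cost, discarding the terms that carry the "wrong" sign. First I would invoke Corollary~\ref{lem:rdindeplb-coro} with the extension $\rho_{0,B}^{A_{in}B_{in}E_1E_2}$ to get
\begin{align*}
	I(E_1 ; B_{out}B^\prime | E_2)_{\rho_{r+1}} - I(E_1 ; B_{in} | E_2)_{\rho_0} = \sum_{i~\text{odd}} I(E_1 ; C_i | E_2 B_i)_{\rho_i} - \sum_{i~\text{even}} I(E_1 ; C_i | E_2 B_i)_{\rho_i}.
\end{align*}
Since every CQMI term is nonnegative, the right-hand side is at most $\sum_{i~\text{odd}} I(E_1 ; C_i | E_2 B_i)_{\rho_i}$. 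Now the key observation is that $\rho_i^{X R_X Y R_Y A_i B_i C_i}$ is pure (as stated in the preliminaries) and $E_1 E_2$ is, by hypothesis, some extension of $\rho_\mu^{A_{in}B_{in}}$; to compare with the terms $I(C_i ; R_X R_Y | B_i)$ appearing in $\QIC$ I would take the specific extension $E_1 = R_X R_Y$ and $E_2$ trivial (equivalently, I would state the corollary with that choice, which is the canonical purification of the classical input). Then each odd-round term becomes exactly $I(C_i ; R_X R_Y | B_i)_{\rho_i}$, i.e. exactly the odd-round contribution to $\QIC(\Pi,\rho)$ in Definition~\ref{prelim:def:QIC}.

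Symmetrically, I would write down the Alice-side analogue of Corollary~\ref{lem:rdindeplb-coro}: by running the Information Flow Lemma with Alice and Bob interchanged (so the roles of the $\bar C_i$ and $\bar D_i$ registers swap), one gets, for the extension $\rho_{0,A}^{A_{in}B_{in}F_1F_2}$,
\begin{align*}
	I(F_1 ; A_{out}A^\prime | F_2)_{\rho_{r+1}} - I(F_1 ; A_{in} | F_2)_{\rho_0} = \sum_{i~\text{even}} I(F_1 ; C_i | F_2 A_i)_{\rho_i} - \sum_{i~\text{odd}} I(F_1 ; C_i | F_2 A_i)_{\rho_i},
\end{align*}
whose right-hand side is at most $\sum_{i~\text{even}} I(F_1 ; C_i | F_2 A_i)_{\rho_i}$, and with the choice $F_1 = R_X R_Y$, $F_2$ trivial this is exactly the even-round contribution to $\QIC(\Pi,\rho)$. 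Adding the two inequalities gives
\[
	\QIC(\Pi,\rho) = \sum_{i~\text{odd}} I(C_i ; R_X R_Y | B_i) + \sum_{i~\text{even}} I(C_i ; R_X R_Y | A_i) \geq \big(I(F_1;A_{out}A^\prime|F_2) - I(F_1;A_{in}|F_2)\big) + \big(I(E_1;B_{out}B^\prime|E_2) - I(E_1;B_{in}|E_2)\big),
\]
which, after noting monotonicity of the relevant quantities under the data-processing that relates a general extension to the purifying extension $R_X R_Y$, yields the claim for arbitrary extensions $E_1E_2$, $F_1F_2$.

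The main obstacle I anticipate is exactly this last point: Corollary~\ref{lem:rdindeplb-coro} produces, on its right-hand side, terms of the form $I(E_1 ; C_i | E_2 B_i)$ for the \emph{given} extension, not for the purifying extension $R_X R_Y$ used in the definition of $\QIC$; so to close the argument I need that for any extension $E_1E_2$ of the classical input state, $I(E_1 ; C_i | E_2 B_i) \le I(R_X R_Y ; C_i | B_i)$. This should follow because any such extension is obtained from the canonical purification $R_XR_Y$ by a channel acting on $R_XR_Y$ alone (all purifications/extensions of a fixed state are related by channels on the reference), and CQMI is monotone under local channels on one of its arguments; I would need to state this reduction carefully, or — more cleanly — simply prove the corollary directly for the purifying extension, observe that the displayed inequality is what is needed for the applications, and remark that it extends to general extensions by data processing. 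A secondary, purely bookkeeping obstacle is making the "similar result holding for Alice" precise: one must check that relabelling in the Information Flow Lemma genuinely produces the sum $\sum_{i~\text{even}} I(F_1;C_i|F_2 A_i) - \sum_{i~\text{odd}} I(F_1;C_i|F_2 A_i)$ with the signs as claimed, which is routine given the structure of the protocol (odd messages go Alice$\to$Bob, even messages Bob$\to$Alice) but should be spelled out.
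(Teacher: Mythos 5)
Your proposal is correct and follows exactly the route the paper intends (the paper gives no explicit proof, saying only that the corollary follows by ``combining the above result and a similar one holding for Alice''): apply Corollary~\ref{lem:rdindeplb-coro} once for Bob and once, with the roles of odd and even rounds swapped, for Alice, drop the nonnegative terms carrying a minus sign, and compare the remaining sums to the QIC terms. The one step the paper leaves implicit --- that $I(E_1;C_i|E_2B_i)\le I(R_XR_Y;C_i|B_i)$ because any extension of $\rho_\mu^{A_{in}B_{in}}$ arises from the canonical purification via a channel on $R_XR_Y$ alone, followed by the chain rule and monotonicity of CQMI --- is exactly the obstacle you identify, and your resolution of it is sound.
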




\section{Making Safe Copies of the Inputs}
\label{sec:safe-quantum}

In this section, we show that making safe copies of classical inputs at the outset of a quantum protocol never increases its quantum information cost.
So, when studying the quantum information complexity of a function, it is always possible to assume that protocols do not change the input registers.

Following Ref.~\cite{JRS03}, we introduce the notion of safe copies and safe protocols.
\begin{definition}[Safe protocol]
	Recall that, in a quantum communication protocol implementing a classical task, players receive initial classical data in some quantum input registers. We say that such a protocol is \emph{safe} if the players only use these input registers as control registers.
\end{definition}
Note that for quantum protocols, making  a local copy of the classical input does not change the quantum communication cost. However it is not obvious from definition that the same property should be true for the information cost. Let us make this question more precise by associating to every protocol another protocol, which is safe.

\paragraph{Safe Version of a Protocol.} Consider any protocol $\Pi$. We define a safe version of $\Pi$ as follows. Let $\Pi^\prime$ be the protocol in which Alice and Bob first make a coherent (safe) copy of their respective inputs $X, Y$ at the outset of the protocol into safe registers $X^\prime, Y^\prime$, and then run $\Pi$ while using $X^\prime$ and $Y^\prime$ as inputs. Recall that there are also coherent copies held in purification registers $R_X, R_Y$. That is, on input distribution $\mu$, we denote as $\rho_\mu^{XY}$ the state
\begin{align}
	\rho_\mu^{XY} = \sum_{x, y} \mu (x, y) \kb{x}{x}^X \otimes \kb{y}{y}^Y,
\end{align}
and we consider a purification of the form
\begin{align}
	\ket{\rho_\mu}^{XY R_X R_Y} = \sum_{x, y} \sqrt{\mu (x, y)} \ket{x}^{X} \ket{y}^{Y} \ket{x}^{R_X} \ket{y}^{R_Y}.
\end{align}
In the protocol $\Pi^\prime$, the registers $X$, $Y$ are then left untouched for the remainder of the protocol, which is identical to protocol $\Pi$ acting on input registers $X^\prime$, $Y^\prime$ after such copies are made. We want to show that the quantum information cost of $\Pi^\prime$ is never greater than that of $\Pi$. More formally, define the isometries
\begin{align}
	U_X^{X \rightarrow X X^\prime} &= \sum_{x \in X} \ket{x}^{X} \ket{x}^{X^\prime} \bra{x}^X, \\
	U_Y^{Y \rightarrow Y Y^\prime} &= \sum_{y \in Y} \ket{y}^{Y} \ket{y}^{Y^\prime} \bra{y}^Y.
\end{align}
Then the safe protocol $\Pi^\prime$ is defined from $\Pi$ by:
\begin{enumerate}
	\item applying $U_X$ and then $U_1$ acting on $X^\prime$ on Alice's side in the first round,
	\item applying $U_Y$ and then $U_2$ acting on $Y^\prime$ on Bob's side in the second round,
	\item running $U_i$ in round $i$ for $i \geq 3$.
\end{enumerate}
This does not change the classical input/output behavior of the protocol.
If we think of acting $U_Y$ before $U_1$, this does not change the value of any QIC term, and we get state
\begin{align}
\ket{\rho_\mu^\prime}^{X X^\prime R_X Y Y^\prime R_Y} & = (U_X^{X \rightarrow X X^\prime} \otimes U_Y^{Y \rightarrow Y Y^\prime}) \ket{\rho_\mu}^{X R_X Y R_Y} \\
	& = \sum_{x, y} \sqrt{\mu (x, y)} \ket{xxxyyy}^{X X^\prime R_X Y Y^\prime R_Y}
\end{align}
at the outset of protocol $\Pi^\prime$.
We then show that making such safe copies does not increase the QIC of a protocol.

\paragraph{Making Safe Copies can only Decrease QIC of a Protocol.} It turns out that $\QIC(\Pi, \mu)$ and $\QIC(\Pi^\prime, \mu)$ can be very different. Let us illustrate this point with a simple example.
\begin{example}
Consider an input distribution $\mu$ such that $X$ is uniformly distributed, and $Y=X$. Consider a protocol in which Alice directly sends her input to Bob. Then the costs are
\begin{align}
	\QIC (\Pi, \mu) & = I(X: R_X R_Y | Y)_{\rho_\mu} \\
			& = I(X: R_X R_Y )_{\rho_\mu} \\
			& = H(X)_{\rho_\mu} \\
			& = \lg |X|, \\
\hbox{whereas } \quad \QIC (\Pi^\prime, \mu) & = I(X^\prime: R_X R_Y | Y^\prime Y)_{\rho_\mu^\prime} \\
			& = 0,
\end{align}
in which we used for $\QIC (\Pi^\prime, \mu)$ that all registers are classical once $X$ is traced out along with the fact that $X=Y$, similarly for $I(X: R_X R_Y )_{\rho_\mu}$ and tracing out $Y$, and finally, since $\rho_\mu^{X Y R_X R_Y}$ is 
pure, $I(X: R_X R_Y | Y)_{\rho_\mu} = I(X: R_X R_Y )_{\rho_\mu}$.
\end{example}

This phenomenon might occur even when there is no correlation between $X$ and $Y$, as shown by the following example.
\begin{example}\label{ex:forget}
Consider an input distribution $\mu$ such that $X$ and $Y$ are distributed independently and uniformly. Consider a protocol in which Alice directly sends her input to Bob. Then the costs are
\begin{align}
	\QIC (\Pi, \mu) & = I(X: R_X R_Y | Y)_{\rho_\mu} \\
			& = I (X ; R_X)_{\rho_\mu} \\
			& = 2H(X)_{\rho_\mu} \\
			& = 2 \lg |X|, \\
\hbox{whereas } \quad \QIC (\Pi^\prime, \mu) & = I(X^\prime: R_X R_Y | Y^\prime Y)_{\rho_\mu^\prime} \\
			& = I (X^\prime ; R_X)_{\rho_\mu^\prime} \\
			& = H(X^\prime)_{\rho_\mu^\prime} \\
			& = \lg |X|,
\end{align}
where we used that $\rho_\mu^{X R_X}$ is a pure state whereas $\rho_\mu^\prime$ is classical on $X^\prime R_X$ once $X$ is traced out.

One can check that, if Bob sends register $X$ back to Alice (without copying it), $\QIC (\Pi, \mu)$ increases to $4 \lg |X|$ while $\QIC (\Pi^\prime, \mu)$ increases to $2 \lg |X|$ only. 
Moreover, if Bob first makes a copy of $X$ before sending it back, $\QIC (\Pi, \mu)$ increases to  $3 \lg |X|$ while $\QIC (\Pi^\prime, \mu)$ stays at $\lg |X|$.
By repeating this process for $r$ rounds, $\QIC (\Pi, \mu)$  increases to $(2r+ 1) \lg |X|$ while $\QIC (\Pi^\prime, \mu)$ stays at $\lg |X|$, and we can make these information costs as different as we like.
\end{example}
The examples above show that making safe copies might influence a lot the quantum information cost. However, we show that this operation can only decrease $\QIC$.

\begin{proposition}\label{prop:safeQIC}
	For any protocol $\Pi$ and any input distribution $\mu$ for $X, Y$, the safe version of $\Pi$, the protocol $\Pi^\prime$ defined above, satisfies
\begin{align}
	\QIC (\Pi^\prime, \mu) \leq \QIC (\Pi, \mu).
\end{align}
Moreover, if $\Pi$ is already a safe protocol, then we have equality.
\end{proposition}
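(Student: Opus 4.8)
The plan is to compare $\QIC(\Pi,\mu)$ and $\QIC(\Pi^\prime,\mu)$ term by term, working throughout in the round-$i$ state $\rho_i^\prime$ of the safe protocol $\Pi^\prime$ (assuming, as noted before the proposition, that both safe copies are made before the first message, which changes no $\QIC$ term). The crucial observation is that $\rho_i^\prime$ can be read in two ways. On one hand it is the round-$i$ state of $\Pi^\prime$, whose classical input lives in $X,Y$ with references $R_X,R_Y$, and in which Alice holds $X A_i$ and Bob holds $Y B_i$ at round $i$, where $A_i,B_i,C_i$ are the registers of the embedded copy of $\Pi$ run on $X^\prime,Y^\prime$. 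On the other hand, reading $X^\prime,Y^\prime$ as the input registers of $\Pi$, the joint registers $X R_X$ and $Y R_Y$ together purify that input, so, since $\QIC$ does not depend on the choice of purification, $\rho_i^\prime$ is also the round-$i$ state of $\Pi$ for that purification. Hence
\begin{align*}
	\QIC(\Pi,\mu) = \sum_{i\ge 1,\ odd} I(C_i : X R_X Y R_Y \mid B_i)_{\rho_i^\prime} + \sum_{i\ge 1,\ even} I(C_i : X R_X Y R_Y \mid A_i)_{\rho_i^\prime},
\end{align*}
while directly from the definition, using Alice's register $X A_i$ and Bob's register $Y B_i$ in $\Pi^\prime$,
\begin{align*}
	\QIC(\Pi^\prime,\mu) = \sum_{i\ge 1,\ odd} I(C_i : R_X R_Y \mid Y B_i)_{\rho_i^\prime} + \sum_{i\ge 1,\ even} I(C_i : R_X R_Y \mid X A_i)_{\rho_i^\prime}.
\end{align*}
(The first identity can equivalently be obtained by noting that $\rho_i^\prime$ is the image of the round-$i$ state of $\Pi$ under the isometry $\ket{x}^{R_X}\mapsto \ket{x}^{X}\ket{x}^{R_X}$, $\ket{y}^{R_Y}\mapsto \ket{y}^{Y}\ket{y}^{R_Y}$, and applying isometric invariance of the conditional quantum mutual information.)

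Next I would compare the two expressions term by term. For odd $i$, two applications of the chain rule give
\begin{align*}
	I(C_i : X R_X Y R_Y \mid B_i)_{\rho_i^\prime} = I(C_i : R_X R_Y \mid Y B_i)_{\rho_i^\prime} + I(C_i : Y \mid B_i)_{\rho_i^\prime} + I(C_i : X \mid Y B_i R_X R_Y)_{\rho_i^\prime},
\end{align*}
and the last two terms are non-negative by strong subadditivity; the even rounds are identical with the roles of $X,A$ and $Y,B$ interchanged. Summing over $i$ yields $\QIC(\Pi^\prime,\mu)\le \QIC(\Pi,\mu)$, with the explicit deficit
\begin{align*}
	\QIC(\Pi,\mu) - \QIC(\Pi^\prime,\mu) = \sum_{odd}\big( I(C_i:Y\mid B_i) + I(C_i:X\mid Y B_i R_X R_Y) \big)_{\rho_i^\prime} + \sum_{even}\big( I(C_i:X\mid A_i) + I(C_i:Y\mid X A_i R_X R_Y) \big)_{\rho_i^\prime} \ \ge\ 0 .
\end{align*}

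For the equality statement, suppose $\Pi$ is already safe, so its input registers $X^\prime = A_{in}$ and $Y^\prime = B_{in}$ are control registers that persist unchanged. Then $\rho_i^\prime$ is obtained from the round-$i$ state of $\Pi$ by the isometries $\ket{x}^{X^\prime}\mapsto \ket{x}^{X^\prime}\ket{x}^{X}$ and $\ket{y}^{Y^\prime}\mapsto \ket{y}^{Y^\prime}\ket{y}^{Y}$, i.e.\ by coherently copying those control registers into Alice's and Bob's new memory registers $X,Y$. For odd $i$, the first of these acts on $X^\prime\subseteq A_i$, which is traced out in the $i$-th term of $\QIC(\Pi,\mu)$, so that term is unchanged by it, while the second acts inside the conditioning register $B_i$, mapping it to $B_i Y$; isometric invariance of the conditional quantum mutual information then gives that the $i$-th term of $\QIC(\Pi,\mu)$ equals $I(C_i:R_X R_Y\mid B_i Y)_{\rho_i^\prime}$, which is exactly the $i$-th term of $\QIC(\Pi^\prime,\mu)$. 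The even case is symmetric, so every term of the deficit above vanishes and $\QIC(\Pi^\prime,\mu)=\QIC(\Pi,\mu)$.

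The step I expect to be the main obstacle is the reinterpretation in the first paragraph: recognizing that the register $X$ plays the role of part of Alice's memory when accounting $\QIC(\Pi^\prime,\mu)$ but the role of part of the purification register when accounting $\QIC(\Pi,\mu)$, and that switching between these two views is legitimate precisely because $\QIC$ is purification-independent. Once that reframing is in place, the inequality is just the chain rule together with non-negativity of conditional quantum mutual information, and the equality case is just isometric invariance applied to the copy isometries on the persistent control registers.
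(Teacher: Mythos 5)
Your proof is correct and follows essentially the same route as the paper's: reinterpret $X Y$ as part of the purification of $\Pi$'s input so that $\QIC(\Pi,\mu)$ can be evaluated on the states $\rho_i^\prime$ of $\Pi^\prime$, then apply the chain rule and non-negativity of CQMI term by term. The only cosmetic difference is in the equality case, where you invoke isometric invariance under the copy isometries on the persistent control registers, whereas the paper observes directly that the two extra chain-rule terms $I(Y;C_i|B_i)$ and $I(X;C_i|A_i)$ vanish for safe protocols; the two arguments are equivalent.
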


\begin{proof}
	Before running protocol $\Pi$, let us first relabel the classical inputs $X$, $Y$ as $X^\prime$, $Y^\prime$, and then apply $U_X^{R_X \rightarrow R_X X}$ and $U_Y^{R_Y \rightarrow R_Y Y}$ on $R_X$, $R_Y$ in order to recreate coherent copies of the input in registers $X$, $Y$. The state at this point is then the same as in $\Pi^\prime$ before starting to apply the $U_i$'s (if we think of applying $U_Y$ on Bob's side before $U_1$ on Alice's side, which does not change the information cost), since that protocol is invariant under how the additional coherent copy of $X$ and $Y$ is created. If we then run $\Pi$ using the coherent copies in registers $X^\prime$, $Y^\prime$ as inputs, the state in each round is then the same as in $\Pi^\prime$. Notice that up to relabeling of the input registers and application of the isometries on $R_X$, $R_Y$, the protocol just defined is equivalent to $\Pi$, and hence it has the same information cost, with terms $I(R_X R_Y ; C_i | B_i)_{\rho_i} = I(R_X R_Y X Y ; C_i | B_i)_{\rho_i^\prime}$ in round $i$, in contrast to the information cost terms in $\Pi^\prime$, which are of the form $I(R_X R_Y ; C_i | Y B_i)_{\rho_i^\prime}$. The result follows since for each $i$, 
\begin{align}
I(R_X R_Y X Y ; C_i | B_i)_{\rho_i^\prime} & = I( Y  ; C_i | B_i)_{\rho_i^\prime} + I(R_X R_Y  ; C_i | Y B_i)_{\rho_i^\prime} + I( X ; C_i | R_X R_Y Y B_i)_{\rho_i^\prime} \\
				& \geq I(R_X R_Y  ; C_i | Y B_i)_{\rho_i^\prime},
\end{align}
and the terms $I( Y  ; C_i | B_i)_{\rho_i^\prime}$ and $I( X ; C_i | R_X R_Y Y B_i)_{\rho_i^\prime} = I( X ; C_i | A_i)_{\rho_i^\prime}$ vanishe whenever $\Pi$ is a safe protocol, holding throughout an unmodified copy of $X^\prime$ in $A_i$ and of $Y^\prime$ in $B_i$. The result follows.
\end{proof}

As a consequence, whenever we are interested in minimizing the quantum information cost, we may always consider such protocols that start by making a local copy of their inputs. This implies the following for the quantum information complexity of a function~:
\begin{corollary}\label{coro:QICisSafe}
	For any function $f$, any input distribution $\mu$,  and any $\epsilon>0$
	\begin{align}
		\QIC (f, \mu, \epsilon) = \inf_{\Pi'} \QIC (\Pi', \mu),
	\end{align}
	where the infimum is over the safe protocols $\Pi'$ computing $f$ with error $\epsilon$ w.r.t $\mu$.
\end{corollary}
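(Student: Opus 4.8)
The plan is to deduce this directly from Proposition~\ref{prop:safeQIC} together with the definition of $\QIC(f,\mu,\epsilon)$ in Definition~\ref{prelim:def:QIC}. Both inequalities are short, so the proof will be a two-line squeeze.

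First I would establish the ``$\leq$'' direction, which is immediate from the definitions: every safe protocol is in particular a protocol, so the infimum in Definition~\ref{prelim:def:QIC}, taken over \emph{all} protocols $\Pi$ computing $f$ with error $\epsilon$ w.r.t.\ $\mu$, is at most the infimum of the same quantity restricted to the subfamily of safe protocols $\Pi'$. Hence $\QIC(f,\mu,\epsilon) = \inf_\Pi \QIC(\Pi,\mu) \leq \inf_{\Pi'} \QIC(\Pi',\mu)$.

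For the reverse ``$\geq$'' direction, I would take an arbitrary protocol $\Pi$ computing $f$ with error $\epsilon$ w.r.t.\ $\mu$ and pass to its safe version $\Pi'$ as constructed just before Proposition~\ref{prop:safeQIC}. As noted there, pre-pending the coherent copy isometries $U_X, U_Y$ does not alter the classical input/output behaviour of the protocol, so $\Pi'$ still computes $f$ with error $\epsilon$ w.r.t.\ $\mu$ and is thus an admissible protocol in the restricted infimum; and by Proposition~\ref{prop:safeQIC}, $\QIC(\Pi',\mu) \leq \QIC(\Pi,\mu)$. Taking the infimum over all such $\Pi$ gives $\inf_{\Pi'}\QIC(\Pi',\mu) \leq \inf_\Pi \QIC(\Pi,\mu) = \QIC(f,\mu,\epsilon)$. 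Combining the two inequalities yields the claimed equality.

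There is essentially no obstacle here: the only point requiring a word of care is the observation that the safe version of a protocol still computes $f$ with the same error, which has already been recorded in the construction preceding Proposition~\ref{prop:safeQIC}. So this is a genuine corollary rather than a new argument.
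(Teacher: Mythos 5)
Your proof is correct and matches the paper's (implicit) argument exactly: the paper derives this corollary directly from Proposition~\ref{prop:safeQIC} together with the observation that the safe version preserves the input/output behaviour, and your two-inequality squeeze is precisely that reasoning made explicit.
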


Note that here, in contrast with \eqref{eq:defQIComplexity}, the minimum is over a smaller class of protocols.
In the sequel, unless otherwise specified, we only consider safe protocols.



\section{The Cost of Forgetting: a New Characterization of QIC}
\label{sec:cost-forget-new-charac-QIC}

In this section, we show that even though quantum protocols are reversible and thus can somehow forget information, there is
a quantum information cost associated in particular with forgetting classical information.
The fact, proven in the previous section, that unsafe protocols might have higher information cost than their safe counterpart can be seen as an example of this phenomenon for a party forgetting information about his own input. We focus here on safe protocols
and consider the cost of forgetting information learnt previously about the other party's input.
The remark at the end of Example~\ref{ex:forget} can be thought of as a simple, avoidable occurence of this phenomenon. We sill see later that in general for quantum protocols, it is not always possible to avoid this cost of forgetting information.

We introduce the Holevo Information Cost, defined as the amount of information the players have at the end of the protocol. We show that it is exactly characterized as the amount of information learnt minus the amount of information forgotten. This relation even holds at any intermediate stage of the protocol. We also consider how much Holevo information a party can obtain if he runs  (part of) his input in superposition.

Note that the information flow lemma, characterizing exactly the flow of quantum information in interactive protocols, can be seen as
 a fully quantum generalization of this result.

For protocols with classical inputs, we provide an alternative characterization of their quantum information cost that
does not require introducing a purification register.
More precisely, we show that at each round, $\QIC$ can be divided into two parts: the first one measures how much information is sent by one party to the other one; the second one counts how much information the party sending the message is forgetting about the other party's input. This additional term does not exist in classical communication  because players can always keep copies of all past messages, so they never forget information. But in quantum communication, cloning is in general impossible and players cannot always keep all the information they have received.

\subsection{Alternate Definitions of Information Costs for Protocols with Classical Inputs}
We first recall the notion of classical input information cost introduced by Kerenidis, Lauri\`ere, Le Gall and Rennela  in~\cite{KLLGR15,MR3497091}.
They also define an asymmetric version of quantum information cost.
They have the following definitions, in which we consider safe protocols and split Alice's local register in round $i$ as $X A_i$ and similarly as $Y B_i$ for Bob.

\begin{definition}
\label{charac:def:CIC}
	For a protocol $\Pi$ and an input distribution $\mu$, the \emph{classical input information cost of the messages from Alice to Bob} (resp. \emph{from Bob to Alice}) is defined as
\begin{align*}
						&\CIC_{A \rightarrow B} (\Pi, \mu) =  \sum_{i \geq 1, \ i \, odd} I (C_i ; X | Y B_i)
	\\
	\Big( \hbox{resp. } \quad &\CIC_{B \rightarrow A} (\Pi, \mu) =  \sum_{i \geq 1, \ i \, even} I (C_i ; Y | X A_i) \Big),
\end{align*}
and the \emph{quantum information cost of the messages from Alice to Bob} (resp. \emph{from Bob to Alice}) as
\begin{align*}
					&\QIC_{A \rightarrow B} (\Pi, \mu) =  \sum_{i \geq 1, \ i \, odd} I (C_i ; R_X R_Y | Y B_i)
	\\
	\Big( \hbox{resp. } \quad &\QIC_{B \rightarrow A} (\Pi, \mu) =  \sum_{i \geq 1, \ i \, even} I (C_i ; R_X R_Y | X A_i) \Big).
\end{align*}
\end{definition}

It follows from the data processing inequality that CIC is always at most QIC.
\begin{proposition}[\cite{KLLGR15,MR3497091}]
	For any protocol $\Pi$ and any input distribution $\mu$,
	\begin{equation}
		\CIC_{A \rightarrow B} (\Pi, \mu) \leq \QIC_{A \rightarrow B} (\Pi, \mu),
		\quad \CIC_{B \rightarrow A} (\Pi, \mu) \leq \QIC_{B \rightarrow A} (\Pi, \mu).
	\end{equation}
\end{proposition}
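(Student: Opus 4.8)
The plan is to prove the two stated inequalities term by term, since both $\CIC$ and $\QIC$ are defined as sums over rounds of the same parity, with matching conditioning registers. So it suffices to show, for each odd $i$, that $I(C_i ; X | Y B_i) \leq I(C_i ; R_X R_Y | Y B_i)$ (and symmetrically, for each even $i$, that $I(C_i ; Y | X A_i) \leq I(C_i ; R_X R_Y | X A_i)$, by the analogous argument with the roles of the parties swapped).

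First I would recall that the canonical purification $\ket{\rho_\mu}^{X R_X Y R_Y} = \sum_{x,y}\sqrt{\mu(x,y)}\ket{xxyy}$ places in $R_X$ a coherent copy of the classical register $X$, and that this classicality is preserved throughout a safe protocol (the $R_X, R_Y$ registers are never touched, and $X$ is classical conditioned on $R_X$). The key point is then a data-processing step: there is a quantum channel — namely, a measurement of $R_X$ in the computational basis, producing a classical register that is a perfect copy of $X$ — that maps the register $R_X R_Y$ to a register containing (a copy of) $X$. Applying the data processing inequality for conditional quantum mutual information to this channel acting on the $R_X R_Y$ side of $I(C_i ; R_X R_Y | Y B_i)$, we obtain $I(C_i ; R_X R_Y | Y B_i) \geq I(C_i ; X' | Y B_i)$ where $X'$ is a classical copy of $X$; since the state on $X'$ conditioned on $Y B_i$ is identical to the state on the original $X$ conditioned on $Y B_i$ (both are perfectly correlated classical copies drawn from $\mu$ conditioned on $Y$), this equals $I(C_i ; X | Y B_i)$.

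Alternatively — and perhaps more cleanly — I would write $I(C_i ; R_X R_Y | Y B_i) = I(C_i ; R_X | Y B_i) + I(C_i ; R_Y | Y B_i R_X)$ by the chain rule, and note that the second term is nonnegative, so $I(C_i ; R_X R_Y | Y B_i) \geq I(C_i ; R_X | Y B_i)$; then observe that since $R_X$ is (conditioned on $Y B_i$) in the same classical state as $X$, in fact $I(C_i ; R_X | Y B_i) = I(C_i ; X | Y B_i)$. Either route reduces the proposition to (a) monotonicity under discarding a subsystem / data processing and (b) the observation that $R_X$ carries a faithful classical copy of $X$ in the relevant conditional state.

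I do not expect a serious obstacle here; the proposition is essentially a one-line consequence of data processing, and the excerpt itself attributes it to \cite{KLLGR15, MR3497091} and states ``It follows from the data processing inequality.'' The only mild subtlety worth spelling out is the justification that one may replace $R_X$ by $X$ inside the mutual information, i.e.\ that the joint state of $(C_i, Y, B_i, R_X)$ equals (up to the relabeling $R_X \leftrightarrow X$) that of $(C_i, Y, B_i, X)$ — this uses that the protocol is safe so that $X$ is untouched, that $\ket{\rho_\mu}$ makes $X$ and $R_X$ identically correlated classical copies, and that mutual information is invariant under such an isomorphism. I would state the per-round inequality as a small claim, prove it via the chain-rule-plus-nonnegativity argument above, and then sum over the appropriate parities to conclude.
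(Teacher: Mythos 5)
Your proposal is correct and follows essentially the same route as the paper: the paper justifies the proposition by data processing, and in Section~5.3 it makes this explicit via exactly your chain-rule decomposition $I(C_i : R_X R_Y \mid Y B_i) = I(C_i : R_X \mid Y B_i) + I(C_i : R_Y \mid R_X Y B_i)$ together with the identification $I(C_i ; R_X \mid Y B_i) = I(C_i ; X \mid Y B_i)$, which, as you note, relies on the protocol being safe and on $R_X$ being a faithful classical copy of $X$ in the relevant reduced state. No gap here.
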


Note that $\QIC(\Pi, \mu) = \QIC_{A \rightarrow B} (\Pi, \mu) + \QIC_{B \rightarrow A} (\Pi, \mu)$, so we define similarly a symmetric version of classical input information cost of the protocol $\Pi$ as
\begin{align}
\CIC(\Pi, \mu) = \CIC_{A \rightarrow B} (\Pi, \mu) + \CIC_{B \rightarrow A} (\Pi, \mu).
\end{align}
We want to compare these two quantities, and in particular we find that they are related with a further notion of information cost, which we call the Holevo information cost. This quantity evaluates the Holevo information each party possesses at the end of the protocol about the other party's input, conditional on his own input.

\begin{definition}
\label{charac:def:HIC}
	For a protocol $\Pi$ and an input distribution $\mu$, the \emph{Holevo information cost from Alice to Bob} is defined as
\begin{align*}
	\HIC_{A \rightarrow B} (\Pi, \mu) = I (X ; B_{out} B^\prime | Y),
\end{align*}
and the \emph{Holevo information cost from Bob to Alice} as
\begin{align*}
	\HIC_{B \rightarrow A} (\Pi, \mu) = I (Y ; A_{out} A^\prime | X).
\end{align*}
We also define the \emph{(total) Holevo information cost} as $\HIC(\Pi, \mu) = \HIC_{A \rightarrow B} (\Pi, \mu) + \HIC_{B \rightarrow A} (\Pi, \mu)$.
\end{definition}
Note that similar considerations can be made in each round $i$ by considering the protocol $\Pi_i$ that runs $\Pi$ up to round $i$ and then stops (with an appropriate partition of the registers in round $i$, depending on whether $i$ is even or odd, and who holds $C_i$). For instance, in any odd round $i$, 
after reception by Bob of message $C_i$ from Alice,
the conditional Holevo information Bob has about Alice's input is: $I (X : B_i C_i | Y )$.
Such variants appeared, \textit{e.g.}, in Refs~\cite{JRS03, JN14}.

\subsection{Operational Interpretation of HIC in Terms of CIC and CRIC}
The quantity HIC corresponds to the information remaining at the end of the protocol. However, since in a quantum protocol it might be unavoidable to forget information along the way (because cloning is in general impossible), we cannot just count the information that was received: we should also quantify the amount of information that each player forgets. We introduce the following notion to take this phenomenon into account.
\begin{definition}
\label{charac:def:CRIC}
	For a protocol $\Pi$ and an input distribution $\mu$, the \emph{classical input reverse information cost of the messages from Bob back to Alice} (resp. \emph{from Alice to Bob}) is defined as
\begin{align*}
	&\CRIC_{A \leftarrow B} (\Pi, \mu) =  \sum_{i \geq 1, \ i \, even} I (C_i ; X | Y B_i)
	\\
	\Big( \hbox{resp. } \quad &\CRIC_{B \leftarrow A} (\Pi, \mu) =  \sum_{i \geq 1, \ i \, odd} I (C_i ; Y | X A_i) \Big).
\end{align*}
We also define the \emph{total classical input reverse information cost} of protocol $\Pi$ as 
\begin{equation*}
	\CRIC (\Pi, \mu) = \CRIC_{A \leftarrow B} (\Pi, \mu) + \CRIC_{B \leftarrow A} (\Pi, \mu).
\end{equation*}
\end{definition}
We soon make the above intuition more precise by providing an operational interpretation, but let us
first consider a simple example.
\begin{example}
Let $\mu$ be an input distribution with $X, Y$ distributed independently and uniformly on $n$ bits, and consider a protocol in which, 
after the second round, Alice has received a copy of Bob's input, $Y$. At this point, Alice copies 
the first $m$ out of the $n$ bits of $Y$, and sends back $Y$ to Bob. Then the term with $i=3$ in $\CRIC_{B \leftarrow A}$ will 
amount to the $(n-m)$ bits of information about $Y$ that Alice is forgetting.
\end{example}

We now suggest an operational interpretation of $\CIC$ and $\CRIC$. We can consider the following scenario. 
Let us fix a protocol $\Pi$. Consider a classical input state on registers $XY$ purified in registers $R_X R_Y$.
Alice is given her input $X$ as usual, but also the purification $R_Y$ of Bob's input.
 Bob is only given his input $Y$, and so only the register $R_X$ is held in some reference register unaccessible to the both parties. 
Alice is given the register $R_Y$ in order for her to be able to generate any 
state on $A_i B_i C_i$ in the protocol, for $i$ odd as well as $i$ even, and then transmit the message on $C_i$ to Bob, after giving 
him his side information $B_i $. We are interested in how much new 
information about $X$ this message $C_i$ contains, hence we are only 
putting $R_X$ in the referee's hand. More formally, suppose 
that we are interested in this information for round $i$. We then ask what 
is the asymptotic quantum communication cost for redistributing the $C_i$ register of 
this state from Alice to Bob if, apart from $C_i$, Alice holds the $A_i, X, R_Y$ 
registers and Bob holds the $B_i, Y$ registers. 
This is $I (C_i : R_X | B_i Y) = I(C_i : X |  Y B_i)$, for 
classical registers $X, Y$. Depending on whether $i$ is odd or 
even, this is the $i$th term in $\CIC_{A \rightarrow B}$ or in $\CRIC_{A \leftarrow B}$ of the protocol $\Pi$ (in the usual scenario where Alice does not have access to $R_Y$).
Remember that quantum communication in  state redistribution is symmetric 
under time-reversal~\cite{DY08, YD09}, so that the cost is the same if Bob decides to send back 
this message to Alice. Hence, not only does this scenario gives an operational interpretation 
to CIC as the amount of information about $X$ Alice is sending to Bob in odd 
rounds, but also to $\CRIC$ as the amount of information about $X$ Bob is 
forgetting by sending it back to Alice in even rounds.

This interpretation leads to the following formal result.
\begin{proposition}\label{thm:HicCicCric}
Let $\mu$ be a distribution and $\Pi$ be a safe protocol with classical inputs distributed according to $\mu$. Then
\begin{align*}
	\HIC_{A \rightarrow B} (\Pi, \mu) &= \CIC_{A \rightarrow B} (\Pi, \mu) - \CRIC_{A \leftarrow B} (\Pi, \mu), \\
	\HIC_{B \rightarrow A} (\Pi, \mu) &= \CIC_{B \rightarrow A} (\Pi, \mu) - \CRIC_{B \leftarrow A} (\Pi, \mu), \\
	\HIC(\Pi, \mu) &= \CIC(\Pi, \mu) - \CRIC(\Pi, \mu).
\end{align*}
\end{proposition}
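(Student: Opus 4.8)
The plan is to obtain all three identities from a single application of the Information Flow Lemma in its protocol form, Corollary~\ref{lem:rdindeplb-coro}, specialized to the canonical purification of the classical input. It suffices to prove the first identity, $\HIC_{A \rightarrow B}(\Pi,\mu) = \CIC_{A \rightarrow B}(\Pi,\mu) - \CRIC_{A \leftarrow B}(\Pi,\mu)$: the second follows from the analogous statement of Corollary~\ref{lem:rdindeplb-coro} with the roles of Alice and Bob exchanged, and the third by summing the two.

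For the first identity I would apply Corollary~\ref{lem:rdindeplb-coro} with the extension $\rho_0^{A_{in} B_{in} E_1 E_2}$ taken to be the canonical purification itself, i.e.\ $E_1 = R_X$, $E_2 = R_Y$ and $\rho_0 = \kb{\rho_\mu}{\rho_\mu}^{X R_X Y R_Y}$; this is legitimate since tracing out $R_X R_Y$ returns $\rho_\mu^{A_{in} B_{in}}$. The corollary then reads
\begin{align*}
	I(R_X ; B^\prime B_{out} | R_Y)_{\rho_{r+1}} - I(R_X ; B_{in} | R_Y)_{\rho_0} = \sum_{i~odd} I(R_X ; C_i | R_Y B_i)_{\rho_i} - \sum_{i~even} I(R_X ; C_i | R_Y B_i)_{\rho_i},
\end{align*}
where $B_i$ denotes Bob's entire local register in round $i$.

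The rest is bookkeeping, using two elementary facts about a safe protocol run on the canonical purification of a classical input. First, $X$ and $R_X$ are perfectly correlated classical copies (and so are $Y$ and $R_Y$), and since $\Pi$ is safe these registers are never touched; hence at every round conditioning on $R_X$ (resp.\ $R_Y$) coincides with conditioning on $X$ (resp.\ $Y$), and a mutual information involving $R_X$ equals the corresponding one involving $X$. Second, in a safe protocol Bob's local register in every round contains his untouched input $Y$; writing it as $Y B_i$ in the notation of this section, each summand becomes $I(R_X ; C_i | R_Y B_i) = I(X ; C_i | Y B_i)$, which is exactly the $i$-th term of $\CIC_{A \rightarrow B}$ for $i$ odd and of $\CRIC_{A \leftarrow B}$ for $i$ even, by Definitions~\ref{charac:def:CIC} and~\ref{charac:def:CRIC}. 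On the left-hand side, $I(R_X ; B_{in} | R_Y)_{\rho_0} = I(R_X ; Y | R_Y)_{\rho_0} = 0$ since $Y$ is a deterministic function of $R_Y$ in $\rho_0$, while $I(R_X ; B^\prime B_{out} | R_Y)_{\rho_{r+1}} = I(X ; B^\prime B_{out} | Y)_{\rho_{r+1}} = \HIC_{A \rightarrow B}(\Pi,\mu)$ by Definition~\ref{charac:def:HIC}. Rearranging yields the claimed identity.

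I do not expect a genuine obstacle: the mathematical content is entirely carried by the Information Flow Lemma, and what remains is only to match notation carefully. The points needing attention are that ``Bob's register $B_i$'' in Corollary~\ref{lem:rdindeplb-coro} already contains the untouched input copy $Y$ (so conditioning on $R_Y B_i$ equals conditioning on $Y B_i$), and that the boundary terms align as stated — the initial term $I(R_X;B_{in}|R_Y)_{\rho_0}$ vanishing and the final term being precisely $\HIC_{A\rightarrow B}$. An equivalent route, if one prefers to avoid invoking the corollary, is to apply Lemma~\ref{lem:rdindeplb} directly with $\bar E = R_X$ and $\bar F = R_Y$, pairing each odd (Alice to Bob) message with the following even (Bob to Alice) message into one round of the abstract interactive process.
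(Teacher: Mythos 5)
Your overall strategy --- apply Corollary~\ref{lem:rdindeplb-coro} and then do bookkeeping, proving the $A\to B$ identity and getting the other two by symmetry and summation --- is exactly the paper's. But your choice of extension is wrong, and it breaks the bookkeeping. The paper instantiates the corollary with \emph{classical copies} of the inputs, $E_1 = X$, $E_2 = Y$, so that every register appearing in the resulting terms is classical and the identifications with $\CIC$, $\CRIC$ and $\HIC$ are immediate. You instead take $E_1 = R_X$, $E_2 = R_Y$, the coherent purification. The claim that ``conditioning on $R_Y$ coincides with conditioning on $Y$'' is only valid when the partner register of the coherent pair is traced out; in the terms the corollary produces, $R_Y$ sits in the conditioning system \emph{together with} the untouched copy $Y$ inside Bob's register, nothing decoheres the pair, and the resulting quantity $I(R_X; C_i \,|\, R_Y Y B_i)$ is the \emph{superposed} term $I(C_i; X \,|\, R_Y Y B_i)$ of Definition~\ref{charac:def:SCIC-SCRIC-SHIC}, not the $\CIC$/$\CRIC$ term $I(C_i; X \,|\, Y B_i)$. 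The paper introduces $\SCIC$, $\SCRIC$, $\SHIC$ precisely because these differ from $\CIC$, $\CRIC$, $\HIC$ (e.g.\ $\SHIC$ is the superposition-attack leakage of~\cite{JRS09}); what your argument actually establishes is the identity $\SHIC_{A\to B} = \SCIC_{A\to B} - \SCRIC_{A\leftarrow B}$ stated at the end of Section~\ref{sec:cost-forget-new-charac-QIC}, not the Proposition.

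The failure is already visible in your boundary terms. You claim $I(R_X; B_{in} \,|\, R_Y)_{\rho_0} = 0$ ``since $Y$ is a deterministic function of $R_Y$''; but in $\rho_0$ the pair $Y R_Y$ is in the coherent state $\sum_y \sqrt{\mu(y|x)}\ket{yy}$ for each $x$, and for a correlated $\mu$ (say $X$ a uniform bit and $Y=X$ with probability $3/4$) a direct computation gives $I(R_X; Y \,|\, R_Y)_{\rho_0} = H(X|Y)_\mu - \big(H(\mu_X) - H(YR_Y)_{\rho_0}\big) > 0$. Likewise your final term is $I\big(X; R_Y Y B_{out} B' \,|\, \cdots\big)$, an $\SHIC$-type quantity, not $\HIC_{A\to B} = I(X; B_{out} B' \,|\, Y)$, in which $R_Y$ is traced out. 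The fix is exactly the paper's move: take the extension to be classical copies $X', Y'$ of the inputs, for which $I(C_i; X \,|\, Y' Y B_i) = I(C_i; X \,|\, Y B_i)$ and $I(X; Y' \,|\, Y) = 0$ hold trivially, and then your remaining bookkeeping (including the treatment of the last round, which you should still spell out for even versus odd $r$) goes through.
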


\begin{proof}[Proof of Proposition~\ref{thm:HicCicCric}]

From the above operational interpretation of $\CIC$ and $\CRIC$, it is then intuitive that in any odd round $i$, 
after reception by Bob of message $C_i$ from Alice,
the conditional Holevo information $I (X^\prime : B_i C_i | Y^\prime )$ Bob has 
about Alice's input can be written as follows:
\begin{align}
\label{eq:holevo}
	I (X : B_i C_i | Y ) & = \sum_{j \, odd \, j \leq i }  I (C_j: X | Y B_j) - \sum_{j \, even \, j \leq i} I(C_j: X | Y B_j),
\end{align}
in which on the right hand side  the first sum corresponds to terms in $\CIC_{A \rightarrow B}$ 
and the second one to terms in $\CRIC_{A \leftarrow B}$.
Note that this equality follows from Corollary~\ref{lem:rdindeplb-coro}, direct consequence of the Information Flow Lemma, with classical extension registers $E_1 = X, E_2 = Y$ (classical copies of these registers), along with the fact that
for two classical copies $Y_1, Y_2$ of $Y$, $I(C_i ; X | Y_1 Y_2 B_i) = I (C_i ; X | Y B_i)$, $I (X; Y_1 B_i C_i | Y_2) = I (X ; B_i C_I | Y_2)$, and $I(X ; Y_1 | Y_2) = 0$.
If $r$ is odd, $I(X ; B_r C_r | Y) = I (X ; B_{out } B^\prime | Y)$ and the result follows. If $r$ is even, $I (X ; B_r C_r | Y) = I (X ; B_{out} B^\prime C_r | Y) = I (X ; B_{out} B^\prime  | Y) + I (X ;  C_r | Y B_r)$.
Similar statements hold for Alice, with the role of odd and even rounds interchanged. The statement follows.
\end{proof}

\subsection{Operational Interpretation of QIC in Terms of CIC and CRIC}

The introduction of the reference register $R$ in the definition of quantum information cost, which can be decomposed 
into $R = R_X, R_Y$ for classical inputs, is natural when discussing compression 
while keeping quantum correlations, and for general quantum inputs. But 
when discussing protocols implementing classical tasks it might appear somewhat artificial.
We now present an alternative characterization of quantum 
information cost on classical inputs that does not involve such purification registers and 
only mention the classical input registers, similar to the notion of 
classical input information cost (CIC) of Ref.~\cite{KLLGR15,MR3497091}.
We start by expanding the $i$th term in the quantum information 
cost. For odd $i$,
\begin{align}
	I(C_i : R_X R_Y | Y B_i)_{\rho_i^\prime} & = I(C_i : R_X  | Y B_i)_{\rho_i^\prime}
			+ I(C_i :  R_Y | R_X Y B_i)_{\rho_i^\prime}
\end{align}
(we could do similarly for even $i$ with the conditioning instead on $X^\prime A_i$).
The first term on the right hand side is the classical input information 
cost term $I (C_i ; R_X | Y B_i) = I (C_i ; X | Y B_i)$ in round $i$ and somehow quantifies the amount of information 
that message $C_i$ contains about $X$ for someone who already 
knows $Y$ and possesses $B_i$ as quantum side-information, while the second one does not 
immediately have such an intuitive interpretation. However, we 
can rewrite it as $I(C_i : R_Y | X A_i) = I (C_i : Y | X A_i)$ since $X A_i$ contain 
a purification of $\rho_i^{B_i C_i R_X R_Y Y}$. Notice that $X, Y$ are both classical 
in this term, which can now be informally interpreted as the amount of information that message $C_i$ 
contains about $Y$ for someone who already knows $X$ and possess $A_i$. But remember that it is Alice 
who generated message $C_i$, so in a classical protocol $A_i$ would contain a copy of $C_i$ and this term would always evaluate to $0$. 
However, quantum protocols are reversible, so it is somehow possible to forget information 
along the way. This term then corresponds, in a sense made precise by Proposition~\ref{thm:HicCicCric}, to the amount of information Alice is forgetting 
about $Y$ when transmitting $C_i$ ($\CRIC$).

This leads to the following result.
\begin{proposition}\label{thm:QicCicCric}
Let $\mu$ be a distribution and $\Pi$ be a safe protocol with classical inputs distributed according to $\mu$. Then
\begin{align*}
	\QIC_{A \rightarrow B} (\Pi, \mu) &= \CIC_{A \rightarrow B} (\Pi, \mu) + \CRIC_{B \leftarrow A} (\Pi, \mu), \\
	\QIC_{B \rightarrow A} (\Pi, \mu) &= \CIC_{B \rightarrow A} (\Pi, \mu) + \CRIC_{A \leftarrow B} (\Pi, \mu), \\
	\hbox{and }\qquad \QIC(\Pi, \mu) &= \CIC(\Pi, \mu) + \CRIC(\Pi, \mu).
\end{align*}
\end{proposition}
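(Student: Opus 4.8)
The plan is to prove the identity term by term, round by round, and then sum; this is complementary to the proof of Proposition~\ref{thm:HicCicCric} but now keeps the full purification register instead of discarding it. It suffices to establish $\QIC_{A \rightarrow B}(\Pi,\mu) = \CIC_{A \rightarrow B}(\Pi,\mu) + \CRIC_{B \leftarrow A}(\Pi,\mu)$; the relation for $\QIC_{B \rightarrow A}$ then follows by interchanging the roles of Alice and Bob (hence of odd and even rounds), and adding the two relations — together with $\QIC = \QIC_{A \rightarrow B} + \QIC_{B \rightarrow A}$ and the analogous splittings of $\CIC$ and $\CRIC$ — yields the total statement.

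First I would fix an odd round $i$ and work with the global state $\rho_i$ of the safe protocol, which is pure on the full register set $R_X R_Y X Y A_i B_i C_i$ (with the pre-shared entanglement and all of Alice's, resp.\ Bob's, scratch space absorbed into $A_i$, resp.\ $B_i$, and $X$, $Y$ the untouched classical input control registers). Applying the CQMI chain rule to the $i$-th term of $\QIC_{A \rightarrow B}$ gives
\[
 I(C_i ; R_X R_Y \mid Y B_i) = I(C_i ; R_X \mid Y B_i) + I(C_i ; R_Y \mid R_X Y B_i).
\]
Since $R_X$ holds a classical copy of $X$ and neither $X$ nor $R_X$ appears in the conditioning, the first summand equals $I(C_i ; X \mid Y B_i)$, which is exactly the $i$-th term of $\CIC_{A \rightarrow B}(\Pi,\mu)$.

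The second summand is the delicate one. Here I would use that $R_X R_Y X Y A_i B_i C_i$ is the complete register set on which $\rho_i$ is pure, partition it into the four groups $C_i$, $R_Y$, $\{R_X, Y, B_i\}$, and $\{X, A_i\}$, and invoke the pure-state identity $I(A;B\mid C)=I(A;B\mid D)$ from Lemma~\ref{prelim:lem:CQMIfacts} to get $I(C_i ; R_Y \mid R_X Y B_i) = I(C_i ; R_Y \mid X A_i)$. Then, because $R_Y$ holds a classical copy of $Y$ and neither $Y$ nor $R_Y$ appears in the conditioning $X A_i$, this equals $I(C_i ; Y \mid X A_i)$, which — as $i$ is odd — is precisely the $i$-th term of $\CRIC_{B \leftarrow A}(\Pi,\mu)$. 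Summing over odd $i$ completes the $A \rightarrow B$ identity.

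The main obstacle is exactly the pure-state duality step: one must check that $\{X, A_i\}$ is genuinely the complement of $\{C_i, R_Y, R_X, Y, B_i\}$ in the pure state of the safe protocol at round $i$ — in particular that the pre-shared entanglement and all of Alice's workspace live inside $A_i$ and nothing has leaked out — and that each classical-copy substitution ($R_X \leftrightarrow X$ and $R_Y \leftrightarrow Y$) is applied only when the substituted register does not already occur elsewhere in the expression. This is the same register bookkeeping already handled in the proof of Proposition~\ref{thm:HicCicCric}; once it is in place, everything else is the CQMI chain rule and the elementary facts collected in Lemma~\ref{prelim:lem:CQMIfacts}.
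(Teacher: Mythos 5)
Your proof is correct and follows essentially the same route as the paper: the same chain-rule split $I(C_i;R_XR_Y|YB_i)=I(C_i;R_X|YB_i)+I(C_i;R_Y|R_XYB_i)$, the identification of the first term with the $\CIC$ term via the classical copy $R_X\leftrightarrow X$, and the pure-state duality $I(C_i;R_Y|R_XYB_i)=I(C_i;R_Y|XA_i)=I(C_i;Y|XA_i)$ to recover the $\CRIC_{B\leftarrow A}$ term. The register bookkeeping you flag (that $XA_i$ purifies $\rho_i^{B_iC_iR_XR_YY}$ in a safe protocol) is exactly the point the paper relies on as well.
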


\subsection{QIC and CIC are Almost Equivalent}

We show that, even though the asymmetric versions of QIC and CIC 
can be very different as exhibited in Ref.~\cite{KLLGR15,MR3497091}, the symmetric 
versions can only be separated by at most a factor of two.
This can be understood intuitively by the fact that a protocol cannot forget more information than it transmits.
\begin{theorem}
	For any protocol $\Pi$ and any input distribution $\mu$, it holds that
	\begin{equation*}
		\CIC (\Pi, \mu) \leq \QIC (\Pi, \mu) \leq 2 \cdot \CIC (\Pi, \mu).
	\end{equation*}
	Hence for any function $f$, any input distribution $\mu$ and any error threshold $\epsilon$,
	\begin{equation*}
		\CIC (f, \mu, \epsilon) \leq \QIC (f, \mu, \epsilon) \leq 2 \cdot \CIC (f, \mu, \epsilon).
	\end{equation*}
\end{theorem}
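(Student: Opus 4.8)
The plan is to obtain both inequalities as immediate consequences of the two structural identities already established in Propositions~\ref{thm:HicCicCric} and~\ref{thm:QicCicCric}, together with the nonnegativity of conditional quantum mutual information (strong subadditivity). Throughout I work with safe protocols, which is harmless: the definitions of $\CIC$ and $\CRIC$ presuppose that the input registers $X,Y$ are preserved, and for the function-complexity statement Corollary~\ref{coro:QICisSafe} (via Proposition~\ref{prop:safeQIC}) lets us restrict the infimum to safe protocols without changing $\QIC(f,\mu,\epsilon)$, and similarly for $\CIC(f,\mu,\epsilon)$.

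For the lower bound, I would start from $\QIC(\Pi,\mu) = \CIC(\Pi,\mu) + \CRIC(\Pi,\mu)$ (Proposition~\ref{thm:QicCicCric}) and note that $\CRIC(\Pi,\mu)\geq 0$, since each of its summands has the form $I(C_i;X|YB_i)$ or $I(C_i;Y|XA_i)$, and conditional quantum mutual information is always nonnegative. This gives $\QIC(\Pi,\mu)\geq \CIC(\Pi,\mu)$, recovering the inequality of~\cite{KLLGR15,MR3497091} as a by-product.

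For the upper bound, the point is that a protocol cannot forget more information than it has transmitted, i.e.\ $\CRIC(\Pi,\mu)\leq \CIC(\Pi,\mu)$. I would derive this from Proposition~\ref{thm:HicCicCric}, which gives $\HIC(\Pi,\mu) = \CIC(\Pi,\mu) - \CRIC(\Pi,\mu)$, combined with $\HIC(\Pi,\mu)\geq 0$: the two terms $I(X;B_{out}B^\prime|Y)$ and $I(Y;A_{out}A^\prime|X)$ defining $\HIC$ are again nonnegative conditional quantum mutual informations. Substituting $\CRIC(\Pi,\mu)\leq \CIC(\Pi,\mu)$ into $\QIC(\Pi,\mu) = \CIC(\Pi,\mu) + \CRIC(\Pi,\mu)$ yields $\QIC(\Pi,\mu)\leq 2\,\CIC(\Pi,\mu)$. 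This is the only step with any conceptual content, and it has essentially been packaged away into Proposition~\ref{thm:HicCicCric}; the residual "obstacle" is merely to notice that $\HIC\geq 0$ follows from nonnegativity of CQMI, so that the abstract $\HIC = \CIC-\CRIC$ becomes the concrete bound $\CRIC\leq\CIC$.

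Finally, for the function-complexity version I would pass to infima over safe protocols computing $f$ with error $\epsilon$ with respect to $\mu$ (using Corollary~\ref{coro:QICisSafe}, and the analogous statement for $\CIC$): from $\CIC(\Pi,\mu)\leq\QIC(\Pi,\mu)$ for every such $\Pi$ we get $\CIC(f,\mu,\epsilon)\leq\QIC(f,\mu,\epsilon)$, and from $\QIC(\Pi,\mu)\leq 2\,\CIC(\Pi,\mu)$ for every such $\Pi$ we get $\inf_\Pi\QIC(\Pi,\mu)\leq 2\inf_\Pi\CIC(\Pi,\mu)$, i.e.\ $\QIC(f,\mu,\epsilon)\leq 2\,\CIC(f,\mu,\epsilon)$.
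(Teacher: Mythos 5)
Your proposal is correct and follows essentially the same route as the paper: the upper bound is exactly the paper's argument ($\QIC = \CIC + \CRIC$, then $\CRIC \leq \CIC$ from $\HIC = \CIC - \CRIC \geq 0$), and your lower bound via $\CRIC \geq 0$ is a trivial variant of the paper's appeal to the data-processing inequality of~\cite{KLLGR15,MR3497091}. Nothing further is needed.
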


It was already noticed in~Ref. \cite{KLLGR15,MR3497091}, that $\CIC(\Pi, \mu) \leq \QIC(\Pi, \mu)$. So to prove the above result, it is sufficient to show the following.
\begin{proposition}\label{prop:QICleqCIC}
	For any protocol $\Pi$ and any input distribution $\mu$, it holds that
\begin{align*}
	\QIC (\Pi, \mu) \leq 2 \cdot \CIC (\Pi, \mu).
\end{align*}
\end{proposition}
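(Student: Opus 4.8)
The plan is to combine the two characterizations already established in this section. By Proposition~\ref{thm:QicCicCric} we have $\QIC(\Pi,\mu) = \CIC(\Pi,\mu) + \CRIC(\Pi,\mu)$, so the inequality $\QIC \le 2\cdot\CIC$ is equivalent to showing $\CRIC(\Pi,\mu) \le \CIC(\Pi,\mu)$. Intuitively this says a protocol cannot forget more information (about the inputs) than it has transmitted; the formal content is that each reverse-direction term is dominated by an earlier forward-direction term, and since Holevo information is nonnegative the accumulated forgetting never exceeds the accumulated learning.

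First I would make the nonnegativity of $\HIC$ explicit: for every round $i$, and in particular at the end, the conditional Holevo information $I(X;B_iC_i|Y) \ge 0$ and $I(Y;A_iC_i|X)\ge 0$ by nonnegativity of conditional quantum mutual information. Then I would invoke the round-by-round identity~\eqref{eq:holevo} from the proof of Proposition~\ref{thm:HicCicCric}, namely that for odd $i$,
\begin{align*}
I(X;B_iC_i|Y) = \sum_{j\le i,\ j\ odd} I(C_j;X|YB_j) - \sum_{j\le i,\ j\ even} I(C_j;X|YB_j).
\end{align*}
Rearranging and using $I(X;B_iC_i|Y)\ge 0$ gives $\sum_{j\le i,\ j\ even} I(C_j;X|YB_j) \le \sum_{j\le i,\ j\ odd} I(C_j;X|YB_j)$. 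Taking $i$ to be the last odd index $\le r$ (or using the case analysis on the parity of $r$ exactly as in the proof of Proposition~\ref{thm:HicCicCric}), the left side is $\CRIC_{A\leftarrow B}(\Pi,\mu)$ and the right side is $\CIC_{A\rightarrow B}(\Pi,\mu)$, so $\CRIC_{A\leftarrow B} \le \CIC_{A\rightarrow B}$. The symmetric argument with the roles of Alice and Bob (and odd/even) interchanged gives $\CRIC_{B\leftarrow A} \le \CIC_{B\rightarrow A}$. Summing the two yields $\CRIC(\Pi,\mu) \le \CIC(\Pi,\mu)$.

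Finally, adding $\CIC(\Pi,\mu)$ to both sides of $\CRIC(\Pi,\mu)\le\CIC(\Pi,\mu)$ and applying Proposition~\ref{thm:QicCicCric} gives $\QIC(\Pi,\mu) = \CIC(\Pi,\mu)+\CRIC(\Pi,\mu) \le 2\cdot\CIC(\Pi,\mu)$, which is the claim. The complexity-version statement follows by taking infima over protocols computing $f$ with error $\epsilon$. I do not anticipate a serious obstacle here: the whole argument is a bookkeeping consequence of the Information Flow Lemma machinery already set up, and the only thing to be careful about is handling the two cases (even versus odd $r$) when identifying the terminal Holevo term with $\HIC$, exactly as done in the proof of Proposition~\ref{thm:HicCicCric}; alternatively one can bypass that case analysis entirely by noting that the reverse terms of $\CRIC$ are indexed so that each is already a subtracted term appearing in some partial sum~\eqref{eq:holevo}, and nonnegativity of every partial Holevo term suffices.
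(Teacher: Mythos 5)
Your proposal is correct and follows essentially the same route as the paper: the paper writes $\QIC = \CIC + \CRIC \le \CIC + \CRIC + \HIC = 2\,\CIC$, using nonnegativity of $\HIC$ together with Propositions~\ref{thm:QicCicCric} and~\ref{thm:HicCicCric}, which is exactly your step $\CRIC \le \CIC$. Your unpacking of that step via the partial sums in~\eqref{eq:holevo} is just an inlined re-derivation of Proposition~\ref{thm:HicCicCric}, so there is no substantive difference.
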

The proof relies on the characterization of the Holevo information cost given by Proposition~\ref{thm:HicCicCric}.

\begin{proof}
We have:
\begin{align}
	  \QIC (\Pi, \mu) & = \CIC (\Pi, \mu) + \CRIC(\Pi, \mu) \notag\\
		& \leq \CIC(\Pi, \mu) + \CRIC(\Pi, \mu) + \HIC(\Pi, \mu) \label{ineq:diffQICvsCIC} \\
		& = 2 \CIC(\Pi, \mu), \notag
\end{align}
where the inequality comes from the nonnegativity of Holevo information cost, that is
$\HIC (\Pi, \mu) \geq 0$, and the last equality holds by Proposition~\ref{thm:HicCicCric}.
\end{proof}

Since we believe that Proposition~\ref{prop:QICleqCIC} helps understanding $\QIC$ better  and might lead to new results involving this quantity, we provide an alternative proof sketch with a slightly different point of view. In particular, the symmetry of QIC with respect to a message being transmitted forward or backward is made evident, whereas the link between CIC and CRIC  under such a reversal of direction for message transmission is also highlighted.

\begin{proof}[Alternative Proof Sketch of~\thref{prop:QICleqCIC}]
Given a $r$-message protocol $\Pi$, let $\Pi^\prime$ be the protocol that runs $\Pi$ forward but does not discard $A^\prime$, $B^\prime$, and then, 
without making any copy of the output, runs $\Pi$ backward. Then, for any $k\in\{0,\dots, r-1\}$, the $(r+k)$th message in $\Pi'$ is identical 
to the $(r-k +1)$th message, except that the roles of the sender and receiver have been exchanged. 
Since the terms in $\QIC$ are symmetric under time-reversal, we have $\QIC_{A \rightarrow B} (\Pi^\prime, \mu) = \QIC_{B \rightarrow A} (\Pi^\prime, \mu) =   \QIC (\Pi, \mu)$. 
So the CIC for Alice and Bob in $\Pi'$ is respectively
\begin{align*}
	&\CIC_{A \rightarrow B} (\Pi^\prime, \mu) = \CIC_{A \rightarrow B} (\Pi, \mu) + \CRIC_{A \leftarrow B} (\Pi, \mu) = \QIC (\Pi, \mu)\\
	\hbox{and} \quad &\CIC_{B \rightarrow A} (\Pi^\prime, \mu) = \CIC_{B \rightarrow A} (\Pi, \mu) + \CRIC_{B \leftarrow A} (\Pi, \mu) = \QIC (\Pi, \mu),
\end{align*}
since the last $M$ messages in $\Pi^\prime$ consist of the $M$ messages of $\Pi$ run backward 
and thus the $\CIC$ of these messages in $\Pi'$ correspond to the $\CRIC$ of $\Pi$.
Thus, $\QIC (\Pi^\prime, \mu) = 2 \cdot \QIC (\Pi, \mu)$ and $\CIC (\Pi^\prime, \mu) = \QIC (\Pi, \mu)$.
 By~\eqref{eq:holevo} and the nonnegativity of Holevo information, $\CRIC_{A \leftarrow B} (\Pi, \mu)$ 
is at most $\CIC_{A \rightarrow B} (\Pi, \mu)$ and $\CRIC_{B \leftarrow A} (\Pi, \mu)$ 
is at most $\CIC_{B \rightarrow A} (\Pi, \mu)$, since it should not be possible to send back more information 
about the other party's input than what was received. This intuition also leads to the 
inequality $\QIC (\Pi, \mu) \leq 2 \cdot \CIC (\Pi, \mu)$.
\end{proof}

In~\thref{prop:QICleqCIC} we prove that QIC and CIC can be different by at most a factor of $2$. 
In fact, one can see that a necessary and sufficient condition to have $\QIC = \CIC$ is that $\CRIC(\Pi, \mu) = 0$, and then also $QIC = HIC$. Intuitively, this means that at each round the player who sends the message does not forget anything about what she has learnt in the previous rounds. Protocols with only a single message satisfy this property. Also, quantum simulation of classical protocols also satisfy this property; see Section~\ref{sec:quantum-simu-classical}.

At the other extreme, 
one can see that a sufficient condition to have $\QIC = 2 \cdot \CIC$ is that $\HIC (\Pi, \mu)= 0$, which only happens if the protocols completely uncompute any information about its input (apart possibly locally encoded information, or, as we will discuss later, ``phase'' or ``superposition'' information). Nevertheless, this bound should be almost achieved by memoryless protocols ({\it i.e.} protocols using only input registers together with a pure message register $C_i$, and no private working space registers $A_i$, $B_i$). Say the message register $C_r$ ends up with Bob, then $\QIC (\Pi, \mu) = \CIC (\Pi, \mu) - I (X;C_r|Y)$. However, players ould also forget information much later than they learn it, and so memoryless protocols are not the only type of protocols achieving this bound.

\subsection{Running Protocols on Superposition of Inputs}

In the previous section, we considered the amount of information a party learnt and forgot about the other party's classical input, when considering that he was also running on a classical input. However, in certain contexts, such as settings with privacy concerns~\cite{CvDNT99, MR2050849-K-02-qprivacy, JRS09, KLLGR15,SalvailSchaffnerSotakova-2015-quantifying}, other variants of the amount of information learnt by a party about the other party's classical input are natural to consider, like the one corresponding to  allowing that party to run on a quantum superposition of its intended input distribution. This makes for a quantum variant of the honest-but-curious classical paradigm, in which the party generates the correct ``distribution over messages'', but wishes to learn as much information as possible while doing so.

\subsubsection{Product Distributions}

With this in mind, we now define an alternative notion of quantum information cost for product distributions, and a corresponding decomposition of QIC, consistent with this idea. These definitions are ''superposed'' variants of the definitions in the previous sections.

\begin{definition}
\label{charac:def:SCIC-SCRIC-SHIC}
	For a protocol $\Pi$ and a product input distribution $\mu = \mu_X \otimes \mu_Y$, the \emph{superposed-classical input information cost of the messages from Alice to Bob} (resp. \emph{from Bob to Alice}) is defined as
\begin{align*}
						&\SCIC_{A \rightarrow B} (\Pi, \mu) =  \sum_{i \geq 1, \ i \, odd} I (C_i ; X | R_Y Y B_i)
	\\
	\Big( \hbox{resp. } \quad &\SCIC_{B \rightarrow A} (\Pi, \mu) =  \sum_{i \geq 1, \ i \, even} I (C_i ; Y | R_X X A_i) \Big),
\end{align*}
the \emph{superposed-classical input reverse information cost of the messages from Bob back to Alice} (resp. \emph{from Alice back to Bob}) is defined as
\begin{align*}
	&\SCRIC_{A \leftarrow B} (\Pi, \mu) =  \sum_{i \geq 1, \ i \, even} I (C_i ; X | R_Y Y B_i)
	\\
	\Big( \hbox{resp. } \quad &\SCRIC_{B \leftarrow A} (\Pi, \mu) =  \sum_{i \geq 1, \ i \, odd} I (C_i ; Y | R_X X A_i) \Big),
\end{align*}
the \emph{superposed-Holevo information cost from Alice to Bob} (resp. \emph{from Bob to Alice}) is defined as
\begin{align*}
	&\SHIC_{A \rightarrow B} (\Pi, \mu) = I (X ; R_Y Y B_{out} B^\prime )
\\
	\Big( \hbox{resp. } \quad &\SHIC_{B \rightarrow A} (\Pi, \mu) = I (Y ; R_X X A_{out} A^\prime  ) \Big).
\end{align*}
\end{definition}

Note that S-HIC is indeed the notion of information leakage considered by Ref.~\cite{JRS09} in their privacy trade-off for the index function on a uniform distribution.

We now link $\SCIC$ and $\SCRIC$ to $\QIC$ using the following remark.
For odd $i$,
\begin{align}
	I(C_i : R_X R_Y | Y B_i) & = I(C_i : R_Y  | Y B_i)
			+ I(C_i :  R_X | R_Y Y B_i)
\end{align}
(we could do similarly for even $i$ with the conditioning instead on $X A_i$).
The second term on the right hand side is the superposed-classical input information 
cost term $I (C_i ; R_X | R_Y Y B_i) = I (C_i ; X | R_Y Y B_i)$ in round $i$. For product distributions, it somehow quantifies the amount of information 
that message $C_i$ contains about $X$ for someone who runs the protocol with
the distribution corresponding to $Y$ in a superposition, and also possesses $B_i$ as quantum side-information. The first term does not 
immediately have such an intuitive interpretation. However, we 
can rewrite it as $I(C_i : R_Y  | Y B_i) = I(C_i : R_Y | R_X X A_i) = I (C_i : Y | R_X X A_i)$ since registers $R_X X A_i$ contains 
a purification of $\rho_i^{B_i C_i R_Y Y}$. It is then seen to be the superposed-classical input reverse information cost in round $i$, and hence corresponds to how much information Alice is forgetting about $Y$ if she runs the protocol with
the distribution corresponding to $X$ in a superposition, and also possesses $A_i$ as quantum side-information. It follows that $\QIC = \SCIC + \SCRIC$ (Note that this equality also formally holds for non-product distributions if we extend the definitions by using the corresponding CQMI terms).

The Information Flow Lemma can then be used to establish the link with SHIC, noting that for product distributions $I(X; Y R_Y)_{\rho_0} = I(X ; Y)_{\rho_i} = 0$, we obtain
\begin{align}
	\SHIC_{A \rightarrow B} (\Pi, \mu) & = \SCIC_{A \rightarrow B} (\Pi, \mu) - \SCRIC_{A \leftarrow B} (\Pi, \mu), \\
	\SHIC_{B \rightarrow A} (\Pi, \mu) & = \SCIC_{B \rightarrow A} (\Pi, \mu) - \SCRIC_{B \leftarrow A} (\Pi, \mu).
\end{align}

\subsubsection{General Distributions}

When considering non-product distributions, if Bob is to run his input in superposition, he should know (at least part of) Alice's input in order to ''break the correlations'' between their inputs, and allow him to generate the correct superposition consistent with Alice's input.
We consider how to do this for running only part of the input in superposition.
Notice that this encapsulates and extend both CIC, CRIC, HIC and their superposed variant at once.

Consider tensor product decomposition $X = X_1 \otimes X_2$ 
of Alice's input and $Y = Y_1 \otimes Y_2$ of Bob's input such that $X_1 Y_1$ 
and $X_2 Y_2$ are independent, i.e.~this gives a product 
decomposition $XY = X_1 Y_1 \otimes X_2 Y_2$. We can 
think of Bob running $Y_2$ in a quantum superposition, 
and so he also holds the purification $R_{X_2}$ of $X_2$ in order to 
generate the correct joint superposition consistent with Alice's input, 
while being given an actual classical input $Y_1$.
Alice is then also given a classical input in $X_1$ (and we can think of $X_2$ either as a classical input whose classical copy or purification is initially held by Bob, or as a superposition over classical inputs jointly held by Alice and Bob).
The corresponding hybrid information costs are defined as follows, with similar definitions  for Alice.

\begin{definition}
\label{charac:def:HCIC-HCRIC-HHIC}
	For a protocol $\Pi$ and an arbitrary decomposition $X = X_1 \otimes X_2$, $Y = Y_1 \otimes Y_2$ of the input space, and arbitrary distributions $\mu_1$ on $X_1 Y_1$ and $\mu_2$ on $X_2 Y_2$,  when running $\Pi$ on input distribution $\mu_1 \otimes \mu_2$, the \emph{hybrid-classical input information cost of the messages from Alice to Bob} (resp. \emph{from Bob to Alice}) is defined as
\begin{align*}
						&\HCIC_{A \rightarrow B} (\Pi, \mu_1, \mu_2) =  \sum_{i \geq 1, \ i \, odd} I (C_i ; X_1 | R_{X_2} R_{Y_2} Y_1 Y_2 B_i)
	\\
		& \quad\quad\quad\quad\quad\quad\quad \quad\quad =  \sum_{i \geq 1, \ i \, odd} I (C_i ; X_1 | X_2 R_{Y_2} Y_1 Y_2 B_i)
	\\
	\Big( \hbox{resp. } \quad &\HCIC_{B \rightarrow A} (\Pi, \mu_1, \mu_2) =  \sum_{i \geq 1, \ i \, even} I (C_i ; Y_1 | R_{Y_2} R_{X_2} X_1 X_2 A_i)
	\\
		& \quad\quad\quad\quad\quad\quad \quad\quad\quad =  \sum_{i \geq 1, \ i \, even} I (C_i ; Y_1 | Y_2 R_{X_2} X_1 X_2 A_i) \Big),
\end{align*}
the \emph{hybrid-classical input reverse information cost of the messages from Bob back to Alice} (resp. \emph{from Alice to Bob}) is defined as
\begin{align*}
	&\HCRIC_{A \leftarrow B} (\Pi, \mu_1, \mu_2) =  \sum_{i \geq 1, \ i \, even} I (C_i ; X_1 | R_{X_2} R_{Y_2} Y_1 Y_2 B_i)
	\\
		& \quad\quad\quad\quad\quad\quad\quad \quad\quad =  \sum_{i \geq 1, \ i \, even} I (C_i ; X_1 | X_2 R_{Y_2} Y_1 Y_2 B_i)
	\\
	\Big( \hbox{resp. } \quad &\HCRIC_{B \leftarrow A} (\Pi, \mu_1, \mu_2) =  \sum_{i \geq 1, \ i \, odd} I (C_i ; Y_1 | R_{Y_2} R_{X_2} X_1 X_2 A_i)
	\\
		& \quad\quad\quad\quad\quad\quad \quad\quad\quad =  \sum_{i \geq 1, \ i \, odd} I (C_i ; Y_1 | Y_2 R_{X_2} X_1 X_2 A_i) \Big),
\end{align*}
the \emph{hybrid-Holevo information cost from Alice to Bob} (resp. \emph{from Bob to Alice}) is defined as
\begin{align*}
	&\HHIC_{A \rightarrow B} (\Pi, \mu_1, \mu_2) = I (X_1^\prime ; R_{Y_2} Y_2^\prime B_{out} B^\prime | Y_1^\prime X_2^\prime )
\\
	\Big( \hbox{resp. } \quad &\HHIC_{B \rightarrow A} (\Pi, \mu_1, \mu_2) = I (Y_1^\prime ; R_{X_2} X_2^\prime A_{out} A^\prime | X_1^\prime Y_2^\prime  ) \Big).
\end{align*}
\end{definition}

Note that by the Information Flow Lemma and the fact that $X_1$ and $X_2$ (resp., $Y_1$ and $Y_2$) are independent, we get that
\begin{align}
	\HHIC_{A \rightarrow B} (\Pi, \mu_1, \mu_2) & = \HCIC_{A \rightarrow B} (\Pi, \mu_1, \mu_2) - \HCRIC_{A \leftarrow B} (\Pi, \mu_1, \mu_2), \\
	\HHIC_{B \rightarrow A} (\Pi, \mu_1, \mu_2) & = \HCIC_{B \rightarrow A} (\Pi, \mu_1, \mu_2) - \HCRIC_{B \leftarrow A} (\Pi, \mu_1, \mu_2).
\end{align}

We then say that Alice does not forget information if the $\HCRIC$ from Bob to Alice is $0$ for any decomposition of the inputs. More formally, we introduce the following definition.

\begin{definition}
\label{def:protnoforget}
	Given a protocol $\Pi$, we say that \emph{Alice (resp. Bob) does not forget information} in $\Pi$ if for any decomposition $X = X_1 \otimes X_2$, $Y = Y_1 \otimes Y_2$ of the input space, and any distributions $\mu_1$ on $X_1 Y_1$ and $\mu_2$ on $X_2 Y_2$, it holds that
\begin{align*}
	&\HCRIC_{A \leftarrow B} (\Pi, \mu_1, \mu_2) = 0 
\\
	\Big( \hbox{resp. } \quad &\HCRIC_{B \leftarrow A} (\Pi, \mu_1, \mu_2) =  0 \Big).
\end{align*}
We say that protocol $\Pi$ \emph{does not forget information} if both Alice and Bob do not forget information in $\Pi$.
\end{definition}

\begin{remark}
\label{charac:rem:equivNF}
	In particular, if a protocol $\Pi$ does not forget information, for any input distribution $\mu$, $\CRIC(\Pi, \mu) = 0$, and $\QIC(\Pi, \mu) = \HIC(\Pi, \mu) = \CIC(\Pi, \mu)$.
\end{remark}



\section{Forgetting Information in Classical Protocols}\label{sec:clforget}

We considered quantum protocols forgetting classical messages by viewing such messages as part of a quantum register, 
on which we could apply a reversible quantum operation in order to generate the subsequent message. In the same way, we can consider  a reversible classical computation paradigm where classical protocols can forget information. We will show that such an ability does not provide any advantage over protocols in the standard classical information complexity paradigm: for any protocol that can forget information, there exists a protocol that does not forget information with the same input-output behavior, the same amount of communication, and information cost at most that of the protocol that can forget information. In this section, all the protocols we consider are classical.

\subsection{Extending the Classical Setting : a New Characterization of IC}
Let us begin by deriving some alternative characterization of classical information complexity that will enable easier comparison to the quantum setting. Let us first state some definitions. In the sequel, unless otherwise specified, we denote $S_A, S_B$, and $R_{AB}$ the random variables corresponding respectively to the private coins of Alice, of Bob, and the public randomness. 
\begin{definition}
	A (standard) $r$-round classical protocol $\pi$ is defined by the sequence of its message functions such that~: for all odd $1 \leq i \leq r$,
	$m_i$ is a function of $(x, s_A, r_{AB}, m_{<i})$,
	and for all even $2 \leq i \leq r$,
	$m_i$ is a function of $(y, s_B, r_{AB}, m_{<i})$.
\end{definition}
The randomness of a protocol is contained on the one hand in the inputs $(X,Y)$ and on the other hand in the random coins $(S_A,S_B,R_{AB})$.
\begin{definition}
\label{forgetclassical:def:defIC}
	The (standard) information cost of a protocol $\pi$ with transcript $\Pi = M_1 \cdots M_r$ on input distribution $\mu$ is~:
$$
	IC (\Pi, \mu) = IC_{A \rightarrow B} (\Pi, \mu) + IC_{B \rightarrow A} (\Pi, \mu),
$$
where $IC_{A \rightarrow B} (\Pi, \mu) = I (X ; \Pi | R_{AB} Y)$, and $IC_{B \rightarrow A} (\Pi, \mu) = I(Y ; \Pi | R_{AB} Y)$ are respectively the information costs from Alice to Bob and from Bob to Alice, and $\Pi$ is the sequence of messages.
\end{definition}

We generalize the above definitions to the case where there is an additional random variable correlated with the input.
\begin{definition}
	Given a random variable $U$ with distribution $\mu$, we say that a joint random variable $UV$ is an extension of $U$, or that $V$ extends $U$, if the marginal of $UV$ on $U$ has distribution $\mu$. 

\end{definition}
Moreover, we say that $V$ is a copy of $U$ if $\PP(U=V)=1$.
\begin{lemma}
For any protocol $\Pi$, any input distribution $\mu$ on $XY$ and any extension  $XY X^\prime Y^\prime D$ of $XY$, where $X^\prime Y^\prime$ are copies of $XY$, it holds that:
\begin{align}
	IC (\Pi, \mu)
	&= \sum_{i\,:\, \hbox{odd}} I(X^\prime Y^\prime D ; M_i | R_{AB} S_B Y M_{<i}) +
\sum_{i\,:\, \hbox{even}} I(X^\prime Y^\prime D ; M_i | R_{AB} S_A X M_{<i}) \nonumber
	\\
	&= \sum_{i} \Big( I(X^\prime Y^\prime D ; M_i | R_{AB} S_B Y M_{<i})
+  I(X^\prime Y^\prime D ; M_i | R_{AB} S_A X M_{<i}) \Big) \label{eq:ICRev}
	\\
	&= I(X^\prime Y^\prime D ; \Pi | R_{AB} S_B Y) + I(X^\prime Y^\prime D ; \Pi | R_{AB} S_A X) \nonumber
	\end{align}
\end{lemma}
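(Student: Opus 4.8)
The plan is to prove the chain of equalities by first pinning down the single-term (third) form and then recovering the per-round (first and second) forms through routine identities. I use throughout the standing convention of the model that the protocol's coins $(S_A,S_B,R_{AB})$ are drawn independently of the input data, hence independently of the whole extension $(X,Y,X',Y',D)$; I also use that $X'=X$ and $Y'=Y$ almost surely — so that conditioning on $X$ (resp.\ $Y$) renders $X'$ (resp.\ $Y'$) deterministic — and that the transcript $\Pi=M_1\cdots M_r$ is a deterministic function of $(X,Y,S_A,S_B,R_{AB})$.

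The two bookkeeping equalities are immediate. For the first line equals the second: when $i$ is even, $M_i$ is a deterministic function of $(Y,S_B,R_{AB},M_{<i})$, whence $I(X'Y'D;M_i\mid R_{AB}S_BYM_{<i})=0$, and symmetrically $I(X'Y'D;M_i\mid R_{AB}S_AXM_{<i})=0$ for odd $i$; inserting these vanishing terms turns the first line into the second. For the second line equals the third: apply the chain rule for conditional mutual information along $\Pi=M_1\cdots M_r$, giving $\sum_i I(X'Y'D;M_i\mid R_{AB}S_BYM_{<i})=I(X'Y'D;\Pi\mid R_{AB}S_BY)$ and likewise with $(S_A,X)$ replacing $(S_B,Y)$.

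The substance is to show $I(X'Y'D;\Pi\mid R_{AB}S_BY)=IC_{A\to B}(\Pi,\mu)=I(X;\Pi\mid R_{AB}Y)$, the term $I(X'Y'D;\Pi\mid R_{AB}S_AX)=IC_{B\to A}(\Pi,\mu)$ following by the mirror argument and the sum then yielding the claim. I would strip the extra registers one at a time. Since $X'=X,\ Y'=Y$ a.s., $I(X'Y'D;\Pi\mid R_{AB}S_BY)=I(XD;\Pi\mid R_{AB}S_BY)$. By the chain rule this equals $I(X;\Pi\mid R_{AB}S_BY)+I(D;\Pi\mid R_{AB}S_BXY)$, and the last term vanishes: conditioned on $(R_{AB},X,Y,S_B)$ the transcript $\Pi$ is a deterministic function of $S_A$ alone, while the fresh private coins $S_A$ are independent of $D$ conditionally, so $D\perp\Pi\mid(R_{AB},X,Y,S_B)$. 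Finally I must drop $S_B$ from the conditioning, i.e.\ show $I(X;\Pi\mid R_{AB}S_BY)=I(X;\Pi\mid R_{AB}Y)$: expanding $I(XS_B;\Pi\mid R_{AB}Y)$ with the chain rule in both orders and using $I(S_B;X\mid R_{AB}Y)=0$, the difference of the two target quantities equals $I(S_B;X\mid R_{AB}\Pi Y)$, which is $0$ by the rectangle property of classical protocols — conditioned on $(R_{AB},\Pi)$ the posterior on $(X,S_A,Y,S_B)$ is proportional to $\mu(x,y)\,p(s_A)\,p(s_B)\,\mathbf{1}_{\mathcal A}(x,s_A)\,\mathbf{1}_{\mathcal B}(y,s_B)$, so after marginalizing $S_A$ and conditioning further on $Y$ it splits into a function of $x$ times a function of $s_B$, giving $X\perp S_B\mid(R_{AB},\Pi,Y)$.

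The only nontrivial ingredient is this last point — independence of Alice's input from Bob's private coins given the transcript, the public coins, and Bob's input — which is precisely the standard combinatorial-rectangle structure of classical communication protocols; everything else is chain-rule bookkeeping together with the observation that exact copies and deterministic functions are transparent to mutual information. I expect the main (purely expository) effort to lie in writing the rectangle step and the two-way chain-rule expansion carefully, not in any conceptual hurdle.
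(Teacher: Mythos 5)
Your proof is correct, and all three ingredients it relies on (transparency of exact copies, conditional independence of $D$ from the transcript given the inputs and coins, and the ability to drop the receiver's private coins from the conditioning) are sound. The route is organized in the opposite direction from the paper's: the paper anchors the chain at the \emph{first}, per-round form, showing term by term that $I(X'Y'D;M_i\mid R_{AB}S_BYM_{<i})$ collapses to $I(X;M_i\mid R_{AB}YM_{<i})$ for odd $i$ (peeling off $Y'$, then $D$ via the Markov property, then $S_B$ via a per-round independence claim), and only then adds the vanishing sender-side terms and chain-rules up to the transcript form; you instead anchor at the \emph{third}, whole-transcript form and do the stripping globally, which forces you to invoke the rectangle property in its global form $I(S_B;X\mid R_{AB}\Pi Y)=0$. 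The two are essentially equivalent in content -- the paper's per-round statement that Alice's message is independent of $S_B$ given $Y R_{AB} M_{<i}$ is itself an instance of the same rectangle structure -- but your version makes that dependence explicit and self-contained (the product-form posterior argument), whereas the paper's is terser and leaves the per-round independence as an assertion. Your handling of the two bookkeeping equalities (vanishing sender-conditioned terms, chain rule along $\Pi=M_1\cdots M_r$) matches the paper's exactly.
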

\begin{proof}
	For the first equality, let us consider the right-hand side. In any odd round $i$, we have~:
	\begin{align*}
	&I (X^\prime Y^\prime D; M_i | R_{AB} S_B Y M_{<i} )
	\\
	&=
	I (Y^\prime ; M_i | R_{AB} S_B Y M_{<i} )
	+
	I (X^\prime; M_i | Y^\prime R_{AB} S_B Y M_{<i} )
	+
	I (D; M_i | X^\prime Y^\prime R_{AB} S_B Y M_{<i} )
	\\
	&=
	I(X; M_i | R_{AB} S_B Y M_{<i})
	\\
	&=
	I(X; M_i | R_{AB} Y M_{<i}),
	\end{align*}
	where we used the following facts.
	Firstly, $I (Y^\prime ; M_i | R_{AB} S_B Y M_{<i} )=0$, since all the quantities are classical and $Y$ appears in the conditioning.
	Secondly, $I (D; M_i | XR_{AB} S_B Y M_{<i} ) = 0$; indeed, by the Markov property of $\Pi$, conditioned on $XYR_{AB}S_B M_{<i}$, $M_i$ is independent of $D$. 
Finally, conditioned on either of $XY R_{AB} M_{<i}$ or $Y R_{AB} M_{<i}$, the message $M_i$ generated by Alice is independent of $S_B$.
Similarly, in any even round $i$, we have~:
$$
	I (X^\prime Y^\prime D; M_i | X R_{AB} S_A M_{<i} ) = I(Y; M_i | X R_{AB} M_{<i}).
$$	
Summing over rounds and using the chain rule of conditional mutual information and Definition~\ref{forgetclassical:def:defIC} yields the first equality.

For the second equality, note that for any odd $i$
	\begin{align*}
	&I (X^\prime Y^\prime D; M_i | X R_{AB} S_A M_{<i} )
	\\
	&=
	I (X^\prime ; M_i | X R_{AB} S_A M_{<i} )
	+
	I (Y^\prime; M_i | X^\prime X R_{AB} S_A M_{<i} )
	+
	I (D; M_i | Y^\prime X^\prime X R_{AB} S_A M_{<i} )
	\\
	&=
	I(Y; M_i | X R_{AB} S_A M_{<i})
	\\
	&= 0,
	\end{align*}
in which the last equality follows since $M_i$ is a deterministic function of $X R_{AB} S_A M_{<i}$.
	Similarly, in any even round $i$, we have~:
	$$
	I (X^\prime Y^\prime D; M_i | R_{AB} S_B Y M_{<i} ) = 0.
	$$
	The last equality holds by the chain rule for conditional mutual information.
\end{proof}

The form~(\ref{eq:ICRev}) has a natural interpretation, which we will adopt to define information cost in the reversible classical computation paradigm that we study in the next subsection: it quantifies how much information message $M_i$ in round $i$ contains about  any extension of the input, conditional on the information already known at the receiver's side for one term, and on the sender's side for the other term.
Since communication in protocols in the reversible classical computation paradigm should be symmetric under time reversal, this will be the natural extension of IC that we will study in that paradigm.

\subsection{Reversible Classical Protocols}\label{subsec-revCP}
For notational simplicity, given two registers $I$ and $O$, we will denote $\circuit^{I \rightarrow O}$ a reversible circuit taking  $I$ as input and outputting in $O$.
\begin{definition}
\label{forgetclassical:def:reversibleproto}
A reversible $r$-round classical protocol taking $X,Y$ as inputs, with private randomness $S_A,S_B$ and public randomness $R_{AB}^A,R_{AB}^B$ (each player has a copy of the public randomness), and outputting in $A_{out}B_{out}$, is defined by a sequence of reversible circuits~:
	$\circuit_1^{X S_A R_{AB}^A \rightarrow A_1 M_1}$,
	$\circuit_2^{Y S_B R_{AB}^B M_1 \rightarrow B_2 M_2}$,
	$\circuit_3^{A_1 M_2 \rightarrow A_3 M_3}$,
	$\circuit_4^{B_2 M_3 \rightarrow B_4 M_4}$,
	$\cdots$,
	$\circuit_r^{A_{r-2} M_{r-1} \rightarrow A^\prime A_{out} M_r}$,
	$\circuit_{r+1}^{B_{r-1} M_r \rightarrow B^\prime B_{out} }$. The circuits are applied in turn by each player to produce the messages $M_i$ (plus some private data $A_i$ or $B_i$ respectively for $i$ odd and $i$ even).
\end{definition}
For simplicity, we relabel $A_i = A_{i-1}$ for odd $i$ and $B_i = B_{i-1}$ for even $i$.

As in the quantum case, we will often require that the players make a copy of their inputs at the outset of the protocol, and run the protocol on these copies while leaving their original input registers unmodified.
\begin{definition}
A reversible classical protocol is said to be safe if it leaves the input registers unmodified. 
The safe version of an arbitrary reversible protocol is one in which the players start by making safe copies of their inputs, and then run the protocol on the copies.
\end{definition}
We now define a notion of information cost associated to a reversible protocol.
\begin{definition}
\label{forgetclassical:def:defRIC}
Let $\Pi$ be a reversible protocol as per Definition~\ref{forgetclassical:def:reversibleproto}, let $\mu$ be an input distribution on $XY$, and let $XY X^\prime Y^\prime D$ be any extension of $XY$, with $X^\prime Y^\prime$ being copies of $XY$. The reversible information cost of $\Pi$ on input distribution $\mu$ is defined as~:
\begin{align}
\label{forgetclassical:eq:defRIC}
	\RIC(\Pi, \mu)
	=
	\sum_{i} I (X^\prime Y^\prime D; M_i | B_{i})
	+ \sum_{i} I (X^\prime Y^\prime D; M_i | A_{i}).
\end{align}
\end{definition}
Note that the sum is over all rounds for both terms of the right-hand side.
We first make sure that the above is well-defined, and does not depend on the choice of extension in $D$. Also, as in the quantum setting, we show that making safe copies does not increase RIC.
\begin{lemma}
\label{forgetclassical:lem:safeclassical}
	For any reversible protocol $\Pi$ and input distribution $\mu$, measuring information about $X^\prime Y^\prime$ is sufficient: for any extension $XY X^\prime Y^\prime D$ as above, it holds that
\begin{align}\label{eq:RICsumExt}
	\RIC(\Pi, \mu)
	=
	\sum_{i} I (X^\prime Y^\prime ; M_i | B_{i})
	+ \sum_{i} I (X^\prime Y^\prime ; M_i | A_{i}).
\end{align}
Moreover, denoting $\Pi^\prime$ the safe version of $\Pi$, it holds that
\begin{align}\label{eq:RICsafeIneq}
	\RIC(\Pi^\prime, \mu)
	\leq
	\RIC(\Pi, \mu).
\end{align}
\end{lemma}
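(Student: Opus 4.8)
The plan is to prove both statements round by round, exploiting two structural facts about reversible classical protocols: every register appearing in the protocol is a deterministic function of the joint randomness $(X,Y,S_A,S_B,R_{AB}^A,R_{AB}^B)$, and $X^\prime,Y^\prime$ are \emph{literal} copies of $X,Y$, i.e. $\PP(X^\prime=X)=\PP(Y^\prime=Y)=1$.

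For the first claim \eqref{eq:RICsumExt}, I would show that for every round $i$ one has $I(X^\prime Y^\prime D; M_i \mid A_i) = I(X^\prime Y^\prime; M_i \mid A_i)$, and likewise with $B_i$ in place of $A_i$; summing over $i$ then gives the identity and, in particular, shows that the right-hand side of Definition~\ref{forgetclassical:def:defRIC} does not depend on the chosen extension register $D$. By the chain rule, $I(X^\prime Y^\prime D; M_i \mid A_i) = I(X^\prime Y^\prime; M_i \mid A_i) + I(D; M_i \mid X^\prime Y^\prime A_i)$, so the point is to argue that the second term vanishes. Since $X^\prime Y^\prime$ is a copy of $XY$, conditioning on $X^\prime Y^\prime$ coincides with conditioning on $XYX^\prime Y^\prime$. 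Now $(S_A,S_B,R_{AB}^A,R_{AB}^B)$ is sampled independently of $(X,Y,X^\prime,Y^\prime,D)$, so $D$ remains independent of these coins after conditioning on $XY$; as $M_i$ and $A_i$ are deterministic functions of $(X,Y,S_A,S_B,R_{AB}^A,R_{AB}^B)$, this gives $I(D; M_i A_i \mid XY) = 0$ and hence $I(D; M_i \mid XYA_i) = 0$, as needed. The $B_i$ terms are handled identically.

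For the inequality \eqref{eq:RICsafeIneq}, the first part lets me evaluate both $\RIC(\Pi^\prime,\mu)$ and $\RIC(\Pi,\mu)$ with the extension-free formula. The safe version $\Pi^\prime$ prepends reversible copy circuits that adjoin fresh registers $A_X,B_Y$ holding copies of $X,Y$, and then runs $\Pi$ with $A_X,B_Y$ as input registers, leaving $X,Y$ untouched. Writing $\tilde A_i,\tilde B_i$ for the workspace registers of this embedded copy of $\Pi$, the joint law of $(X,Y,X^\prime,Y^\prime,\tilde A_i,\tilde B_i,M_i)$ under $\Pi^\prime$ equals that of $(X,Y,X^\prime,Y^\prime,A_i,B_i,M_i)$ under $\Pi$, while Alice's and Bob's round-$i$ registers in $\Pi^\prime$ are $A_i^{\Pi^\prime} = X\tilde A_i$ and $B_i^{\Pi^\prime} = Y\tilde B_i$. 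So it suffices to check, for each round, the per-round inequality $I(X^\prime Y^\prime; M_i \mid X A_i) \le I(X^\prime Y^\prime; M_i \mid A_i)$ together with its mirror image $I(X^\prime Y^\prime; M_i \mid Y B_i) \le I(X^\prime Y^\prime; M_i \mid B_i)$. Using $X^\prime=X$ almost surely, I would expand $I(X^\prime Y^\prime; M_i \mid A_i) = I(X^\prime; M_i \mid A_i) + I(Y^\prime; M_i \mid X^\prime A_i) = I(X^\prime; M_i \mid A_i) + I(X^\prime Y^\prime; M_i \mid X A_i) \ge I(X^\prime Y^\prime; M_i \mid X A_i)$, where the second equality uses that $X^\prime$ is constant once $X$ is fixed and the final step uses nonnegativity of conditional mutual information. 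Summing over rounds gives $\RIC(\Pi^\prime,\mu) \le \RIC(\Pi,\mu)$; this is the reversible-classical analogue of \thref{prop:safeQIC}.

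The main obstacle is not the information theory --- which reduces to a chain-rule expansion plus the observation that conditioning on an input neutralizes its literal copy --- but the bookkeeping in the second part: one must carefully identify the registers of $\Pi^\prime$ with those of $\Pi$ run on the copies, augmented by the untouched input registers $X,Y$, and track precisely which registers appear in the conditioning of each $\RIC$ term. Once that correspondence is set up correctly, the inequality is immediate.
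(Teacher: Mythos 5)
Your proof is correct and follows essentially the same route as the paper's: the identity \eqref{eq:RICsumExt} is the Markov property $I(D;M_i\mid X^\prime Y^\prime A_i)=0$ extracted via the chain rule, and the safety inequality \eqref{eq:RICsafeIneq} is the same chain-rule monotonicity argument, merely phrased by dropping $X$ (resp.~$Y$) from the conditioning using $X^\prime=X$ a.s., where the paper instead absorbs $XY$ into the extension on the left of the semicolon and invokes extension-independence.
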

\begin{proof}
By the Markov property of $\Pi$, it holds that, conditional on $X^\prime Y^\prime B_i$ or $X^\prime Y^\prime A_i$, $M_i$ and $D$ are independent. The equality~\eqref{eq:RICsumExt} follows.

For the safe protocol $\Pi^\prime$, let $X^{\prime \prime} Y^{\prime \prime}$ be safe copies made at the outset to take as input to $\Pi$. Then Alice holds $X$ throughout together with $A_i$, and Bob holds $Y$ throughout together with $B_i$. It holds that
$$
I (X^\prime Y^\prime D ; M_i | Y B_i) \leq I(X Y X^\prime Y^\prime D ; M_i |  B_i)
$$
and
$$
I (X^\prime Y^\prime D ; M_i | X A_i) \leq I(X Y X^\prime Y^\prime D ; M_i |  A_i).
$$
Then, relabeling inputs $XY$ to $\Pi$ as $X^{\prime \prime} Y^{\prime \prime}$, thinking of $X Y X^\prime Y^\prime D$ as an extension of input $X^{\prime \prime} Y^{\prime \prime}$, and applying~\eqref{eq:RICsumExt} to $\Pi^\prime$ and $\Pi$ leads to~\eqref{eq:RICsafeIneq}.
\end{proof}

We thus consider only safe protocols, denote Alice's and Bob's local memory registers as $X A_i$, $Y B_i$, respectively,
and use the following characterization of information cost for these protocols~: 
$$
	\RIC(\Pi, \mu) =	\sum_{i} I (X ; M_i | Y B_{i-1})
	+ \sum_{i} I (Y ; M_i | X A_{i-1}).
$$
For standard IC, we can restrict the sum measuring information about $X$ to odd messages, and the sum measuring information about $Y$ to even messages. The additional terms here are similar to those in the quantum setting and correspond to the cost of forgetting information in a classical reversible paradigm. We want to show that forgetting is in fact useless here. The following lemma will be useful towards this goal.
\begin{lemma}
\label{lem:reversible-classical-subadd}
The reversible information cost is subadditive~: for any two protocols $\Pi_1, \Pi_2$, and any joint input distribution $\mu_{12}$ on  $X_1Y_1X_2Y_2\,$,
$$
	\RIC \big(\Pi_1 \otimes \Pi_2, \mu_{12} \big) \leq \RIC (\Pi_1, \mu_1) + \RIC (\Pi_2, \mu_2),
$$
with $\mu_1$ the marginal of $\mu_{12}$ on $X_1 Y_1$, and $\mu_2$ the marginal of $\mu_{12}$ on $X_2 Y_2$.
\end{lemma}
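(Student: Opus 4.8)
The plan is to reduce the claim to a round‑by‑round inequality, using the characterization of $\RIC$ from Lemma~\ref{forgetclassical:lem:safeclassical} together with a conditional‑independence property that comes from the fact that, in a reversible protocol, every message and every memory register is a \emph{deterministic} function of the inputs and the randomness. Throughout I take $\Pi_1 \otimes \Pi_2$ to be the parallel composition: in each round $i$ the active player applies $\circuit_i^1 \otimes \circuit_i^2$, so the round‑$i$ message is $M_i = (M_i^1, M_i^2)$, Alice's memory is $A_i = (A_i^1, A_i^2)$, Bob's memory is $B_i = (B_i^1, B_i^2)$, and the two subprotocols use \emph{independent} private and public randomness, which I bundle into $W_1$ for $\Pi_1$ and $W_2$ for $\Pi_2$ (pad the shorter protocol with empty messages if the round counts differ). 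The two structural facts I will use are: (i) for each $j$, the tuple $(M_{\le i}^j, A_i^j, B_i^j)$ is a deterministic function of the ``system‑$j$ data'' $(X_j, Y_j, W_j)$; and (ii) $W_1$, $W_2$ and $(X_1 X_2 Y_1 Y_2)$ are mutually independent.

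From (i)--(ii) I would derive, for every round $i$, the conditional independences
\[
   M_i^1 \perp (X_2, Y_2, B_{i-1}^2) \ \big|\ (X_1, Y_1, B_{i-1}^1), \qquad
   M_i^2 \perp (X_1, Y_1, M_i^1, B_{i-1}^1) \ \big|\ (X_2, Y_2, B_{i-1}^2),
\]
together with the analogues in which the $A$'s replace the $B$'s. The point is that after conditioning on $(X_j,Y_j)$, everything carried by system $j$ is a function of $W_j$ alone, and $W_1 \perp (X_1 Y_1 X_2 Y_2, W_2)$, so the system‑$1$ bundle and the system‑$2$ bundle become independent given the pair of local inputs; this survives further conditioning on a register of one of the systems.

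Using the safe characterization $\RIC(\Pi,\mu) = \sum_i I(X;M_i|YB_{i-1}) + \sum_i I(Y;M_i|XA_{i-1})$ (or, in full generality, the extension version~\eqref{eq:RICsumExt} with coherent copies of the inputs, the argument being identical), it then suffices to prove, for every $i$,
\[
   I\big(X_1X_2 ; M_i^1 M_i^2 \,\big|\, Y_1Y_2 B_{i-1}^1 B_{i-1}^2\big)\ \le\ I\big(X_1; M_i^1 \,\big|\, Y_1 B_{i-1}^1\big) + I\big(X_2; M_i^2 \,\big|\, Y_2 B_{i-1}^2\big),
\]
and the symmetric statement with $X \leftrightarrow Y$, $A \leftrightarrow B$. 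I would expand the left side by the chain rule as $I(X_1X_2;M_i^1\,|\,Y_1Y_2B_{i-1}^1B_{i-1}^2) + I(X_1X_2;M_i^2\,|\,M_i^1 Y_1Y_2B_{i-1}^1B_{i-1}^2)$. For the first summand, add the nonnegative term $I(Y_2 B_{i-1}^2; M_i^1 | Y_1 B_{i-1}^1)$ and regroup to obtain $I(X_1X_2 Y_2 B_{i-1}^2; M_i^1 | Y_1 B_{i-1}^1) = I(X_1; M_i^1 | Y_1B_{i-1}^1) + I(X_2 Y_2 B_{i-1}^2; M_i^1 | X_1 Y_1 B_{i-1}^1)$, where the last term vanishes by the first conditional independence above; the second summand is handled identically with the roles of systems $1$ and $2$ exchanged, bounding it by $I(X_1 X_2 M_i^1 Y_1 B_{i-1}^1; M_i^2 | Y_2 B_{i-1}^2)$ and killing the residual cross term via the second conditional independence. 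Summing over all rounds, and running the same computation for the $Y$‑terms against Alice's registers, yields $\RIC(\Pi_1\otimes\Pi_2,\mu_{12}) \le \RIC(\Pi_1,\mu_1) + \RIC(\Pi_2,\mu_2)$.

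The main obstacle is purely bookkeeping: one has to pin down exactly which randomness registers are shared between the two subprotocols (they are not), verify the conditional independences rigorously from the deterministic/Markov structure of reversible protocols, and choose the order of the chain‑rule expansion so that each cross term is precisely of the form annihilated by one of these independences. Everything past that is a routine use of the chain rule and nonnegativity of conditional mutual information, and the inputs $(X_1Y_1)$ and $(X_2Y_2)$ being correlated causes no trouble because the independences only involve the \emph{randomness}, never the inputs.
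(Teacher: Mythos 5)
Your proposal is correct and follows essentially the same route as the paper's proof: a round-by-round chain-rule expansion of the joint RIC term, adding nonnegative CMI terms to regroup, and annihilating the cross terms using the fact that each subprotocol's registers are deterministic functions of its own inputs and its own (independent) randomness. The only cosmetic difference is that you phrase the cross-term cancellation as explicit conditional independences, whereas the paper obtains the same cancellation by appending the uncorrelated randomness registers and invoking invariance of CMI under local reversible processing before applying the chain rule.
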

\begin{proof}
	Consider an odd round $i$ (Bob is the receiver). The $i$-th term in $\RIC$ of $\Pi_1 \otimes \Pi_2$ on input $X_1 X_2 Y_1 Y_2$ with extension $X_1 X_2 Y_1 Y_2 D$ is~:
	\begin{align}
		&I (X_1 X_2 Y_1 Y_2 D; M_{1,i} M_{2,i} | B_{1,i-1}B_{2,i-1})
		\notag
		\\
		&= I (X_1 X_2 Y_1 Y_2 D; M_{1,i}  | B_{1,i-1}B_{2,i-1})
		+ I (X_1 X_2 Y_1 Y_2 D; M_{2,i} | M_{1,i} B_{1,i-1}B_{2,i-1}).
		\label{forgetclassical:eq:termRICiSubadd}
	\end{align}
	The $i$-th term in $\RIC$ of $\Pi_1$  on  inputs $X_1 Y_1$ with extension $X_1 Y_1 X_2 Y_2 D$ is~:
	\begin{align*}
		&I (X_1 Y_1 X_2 Y_2 D; M_{1,i} | B_{1,i-1})
		\\
		= \,& I (X_1 Y_1 X_2 Y_2 D A_{2,i-1} M_{2,i-1} B_{2,i-1}; M_{1,i} | B_{1,i-1})
		\\
		= \,& I (B_{2,i-1}; M_{1,i} | B_{1,i-1})
		+ I (X_1 Y_1 X_2 Y_2 D  ; M_{1,i} | B_{1,i-1} B_{2,i-1} )
		\\
		&\qquad + I (A_{2,i-1} M_{2,i-1} ; M_{1,i-1} | X_1 Y_1 X_2 Y_2 D  B_{1,i-1} B_{2,i-1}) 
		\\
		\geq \,& I (X_1 Y_1 X_2 Y_2 D  ; M_{1,i} | B_{1,i-1} B_{2,i-1} )
	\end{align*}
	which is the first term in~\eqref{forgetclassical:eq:termRICiSubadd}. Above, the first equality is by first appending uncorrelated registers $S_A S_B R_{AB}^A R_{AB}^B$, and then by invariance of conditional mutual information (CMI) under local reversible processing. The second equality is by the chain rule, and the inequality holds by non-negativity of the CMI.
	 
	To obtain the second term in~\eqref{forgetclassical:eq:termRICiSubadd}, let us rewrite the $i$-th term in $\RIC$ of $\Pi_2$  on  inputs $X_2 Y_2$ with extension $X_1 Y_1 X_2 Y_2 D$ as~:
	\begin{align*}
		&I (X_1 Y_1 X_2 Y_2 D; M_{2,i} | B_{2,i-1})
		\\
		= \,& I (X_1 Y_1 X_2 Y_2 D A_{1,i} M_{1,i} B_{1,i}; M_{2,i} | B_{2,i-1})
		\\
		= \,& I (M_{1,i} B_{1,i}; M_{2,i} | B_{2,i-1})
		+ I (X_1 Y_1 X_2 Y_2 D ; M_{2,i} | M_{1,i} B_{1,i} B_{2,i-1})
		\\
		&\qquad + I ( A_{1,i} ; M_{2,i} | X_1 Y_1 X_2 Y_2 D M_{1,i} B_{1,i} B_{2,i-1})
		\\
		\geq \,& I (X_1 Y_1 X_2 Y_2 D ; M_{2,i} | M_{1,i} B_{1,i-1} B_{2,i-1}),
	\end{align*}
	with similar arguments as above (and the fact that, since Bob is the receiver, $B_{1,i}=B_{1,i-1}$).
	 We similarly control
\begin{align}
	 I(X_1 X_2 Y_1 Y_2 D; M_{1,i} M_{2,i} & | A_{1,i} A_{2,i})
		\notag
		\\
		&\leq I (X_1 X_2 Y_1 Y_2 D; M_{1,i}  | A_{1,i})
		+ I (X_1 X_2 Y_1 Y_2 D; M_{2,i} | A_{2,i}).
\end{align}
	For any even round $i$, we obtain similar relationships between the various RIC terms. Summing over rounds yields the conclusion.
\end{proof}

\begin{theorem}
\label{thm:simu-classical-NF}
It is possible to simulate any reversible protocol $\Pi$ by a (standard) protocol $\Pi^\prime$ that does not forget information without increasing the information or the communication costs.
\end{theorem}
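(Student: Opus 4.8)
The plan is to reduce to safe protocols and then build, from a reversible $\Pi$, a \emph{standard} protocol $\Pi'$ with the same input--output channel that transmits only the information each party actually \emph{retains} in $\Pi$, rather than $\Pi$'s literal transcript.

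\textbf{Reduction.} By Lemma~\ref{forgetclassical:lem:safeclassical} I may assume $\Pi$ is safe, so that each player keeps an untouched copy of its own input throughout and $\RIC(\Pi,\mu)=\sum_i I(X;M_i\mid Y B_{i-1})+\sum_i I(Y;M_i\mid X A_{i-1})$; this already eliminates forgetting about one's \emph{own} input, so only forgetting about the other party's input remains.

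\textbf{Why the obvious attempt is not enough.} Letting every player retain its whole history and replay $\Pi$'s messages verbatim gives a standard protocol realising the same channel with the same communication, and it forgets nothing; but its information cost is $I(X;\Pi\mid R_{AB}Y)+I(Y;\Pi\mid R_{AB}X)$ (with $\Pi$ the transcript), which can strictly exceed $\RIC(\Pi,\mu)$. For instance, if $\Pi$ has Alice send $X\oplus R$ for a fresh private bit $R$ in one round and $R$ in a later round, a forgetful receiver learns nothing about $X$ from the first message, so $\RIC$ charges $0$ for it, yet the two messages together determine $X$. So $\Pi'$ must deliver only the ``surviving'' content.

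\textbf{Construction.} I would trace each message of $\Pi$ forward through the reversible circuits to the outputs. For a message $M_i$, say from Alice to Bob, use reversibility of Bob's downstream circuits to factor $M_i$, up to a local isometry on Bob's side, as $M_i=(M_i^{\mathrm{keep}},M_i^{\mathrm{drop}})$, where $M_i^{\mathrm{drop}}$ is the part that those circuits uncompute and push back out (into a later outgoing message of Bob's) rather than route into $B_{out}$ or into anything affecting $B_{out}$. In $\Pi'$, Alice retains $M_i^{\mathrm{drop}}$ locally and sends only $M_i^{\mathrm{keep}}$, and the protocol is otherwise run as in $\Pi$ with all players keeping their full histories; symmetrically on Bob's messages. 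Then (i) the outputs are unchanged, since in $\Pi$ they depended only on the ``keep'' parts; (ii) the communication does not increase; (iii) the $i$-th term of $IC(\Pi',\mu)$ is the information $M_i^{\mathrm{keep}}$ carries about the inputs, and one must show these sum to at most $\RIC(\Pi,\mu)$: the natural way is the classical specialisation of the Information Flow Lemma (Corollary~\ref{lem:rdindeplb-coro}, i.e.\ Lemma~\ref{lem:rdindeplb}, the identity behind~\eqref{eq:ICRev}), which lets one telescope the ``keep'' contributions and charge them against ``received minus forgotten'' for $\Pi$, using that the dropped parts are exactly what is received and then forgotten. (An induction on the number of rounds, peeling off the last round and re-attaching $\circuit_r,\circuit_{r+1}$ on the enlarged, history-keeping registers, is a possible alternative organisation with the same core step.)

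\textbf{Main obstacle.} The crux is the correctness and accounting of this peeling: making the factorisation $M_i=(M_i^{\mathrm{keep}},M_i^{\mathrm{drop}})$ well defined (it is an information-theoretic split, not a split into bit positions), proving that $M_i^{\mathrm{drop}}$ is genuinely irrelevant to all subsequent outputs, showing the factorisations for different rounds are mutually consistent (they interact, since one message feeds the next), and --- the delicate point --- carrying out what is essentially an \emph{exact}, zero-error compression of each message to its surviving content while not increasing communication. It is precisely here that one uses that $\RIC$ does not pay for information that is delivered and then forgotten: the standard protocol would have to pay for any such re-transmission through its transcript, so it must, and can, avoid sending it.
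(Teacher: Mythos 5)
Your reduction to safe protocols is fine, and your warning example is both correct and important: for the reversible protocol in which Alice sends $X\oplus R$, Bob returns it (so that his retained registers no longer contain it), and Alice later sends $R$, every term of $\RIC$ vanishes (each message is, at the moment it is sent or forgotten, independent of $X$ given the conditioning registers), while the transcript of a verbatim replay determines $X$, so the replay's standard $IC$ equals $H(X)$. You should be aware, however, that the paper's own proof takes exactly the route you reject: it defines $\Pi'$ as ``run $\Pi$ and additionally keep a copy of every message'', and argues via the subadditivity of $\RIC$ (Lemma~\ref{lem:reversible-classical-subadd}) that this does not increase the cost. Your example puts real pressure on the step of that argument which identifies $\sum_i \RIC(\Pi_1^i)$ with $\RIC(\Pi,\mu)$, because the one-round protocols $\Pi_1^i$ take the live local registers (which contain private randomness such as $R$) as their own inputs, and $\RIC$ measures information about a protocol's own inputs. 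So you are proposing a genuinely different route, and for a genuine reason.

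The difficulty is that your route is a plan rather than a proof, and the step you yourself label ``the crux'' is where all the content lies. The factorisation $M_i=(M_i^{\mathrm{keep}},M_i^{\mathrm{drop}})$ is an information-theoretic split, and there is no reason it can be realised as an exact, zero-error recoding of $M_i$ into two parts of total length $|M_i|$: what the receiver forgets about $M_i$ is only defined relative to his downstream reversible circuits, which may mix $M_i$ with his own data before pushing part of it back out (Bob may return $M_1\oplus S_B$ rather than $M_1$), so ``the forgotten part of $M_1$'' need not even be a function of $M_1$, and extracting it exactly runs into the usual obstructions to exact functional decompositions of correlated variables. You also need, and do not supply: (i) a proof that the dropped parts never influence $A_{out}B_{out}$, even though a piece of information can be forgotten, re-transmitted, and re-learned before the outputs are produced; (ii) a proof that the modified protocol remains executable, since Bob's circuits take the full $M_i$ as input and you deliver only $M_i^{\mathrm{keep}}$; and (iii) the telescoping accounting bounding $\sum_i I(X;M_i^{\mathrm{keep}}\mid \cdot\,)$ by $\RIC(\Pi,\mu)$, which cannot be a purely syntactic routing argument because $\RIC$ charges for information about the inputs, not for bits of messages (your own example shows a message can carry zero bits about $X$ when sent and become fully informative later). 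Until the factorisation is constructed and these points are established, the proposal does not prove the theorem.
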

\begin{proof}
	Let $\Pi$ be a reversible protocol. 
	We assume, without blow up in the information and the communication costs that the protocol makes local copies of the inputs (see Lemma~\ref{forgetclassical:lem:safeclassical}). 
	We define $\Pi^\prime$ as follows: the players run $\Pi$, but with each party makes a copy of the message in each round and not further acts on that copy.
Then, at round $i$, we can view the action of the protocol as the combined action of two one-round protocols : $\Pi_1^i$, which is a reversible protocol implementing the new message by taking the local registers of the reversible protocol as input, and $\Pi_2^i$, which contains the previous messages as side information and does not send any message. Then, we use the subadditivity of RIC (see Lemma~\ref{lem:reversible-classical-subadd}) on these two protocols. Summing over the rounds, we obtain the desired simulation, since these yield the corresponding RIC of the reversible protocol and its standard version.
\end{proof}



\section{Disjointness: Speed-up for Quantum Protocols needs Forgetting Information}
\label{sec:disj}

In light of what we saw for classical protocols that forget information,  the phenomenon of forgetting information in a quantum protocol  might appear useless, or even costly, at first sight. A legitimate question is: \emph{given any safe quantum protocol implementing a classical task, potentially forgetting information, is there a protocol
 that does not forget information and accomplishes the same task at a similar information cost?}
 We give a strong negative answer to this question in the case of the Disjointness problem, showing that the ability to forget information is a necessary quantum feature to obtain any speed-up for computing disjointness.

Recently, the notion of $\QIC$ was used by Braverman et al.~\cite{BGKMT15} to prove an 
optimal lower bound, up to logarithmic terms, on the bounded-round quantum communication 
complexity of the disjointness function for $n$-bit inputs, defined as: for all $x,y \in \ZO^n$,
\begin{align*}
\DISJ_n(x,y) = \neg \left(\OR_{i \in [n]} (x_i \, \AND \, y_i)\right). 
\end{align*}
The authors proved that, for a given number $r$ of rounds of communication, the quantum communication complexity is $\QCC^r (\DISJ_n) \in \tilde{\Omega} (\frac{n}{r} + r)$. 
We adapt their proof to show that, if we only allow quantum protocols that  do not forget information, 
then the round dependence disappear and we completely lose the quadratic quantum speed-up for computing disjointness.  
This establishes the fact that, in contrast to the case for classical information cost, the ability to forget information 
is a necessary feature of quantum protocols.

The high-level idea of the proof in Ref.~\cite{BGKMT15} can be described as follows. The $\QIC$ of any protocol solving $\DISJ_n$ 
is lower bounded by $n$ times the QIC of a protocol solving $\AND$, in which the information is measured with respect to 
any distribution having zero mass on $(1,1)$ input. The lower bound on the bounded-round quantum communication  for disjointness then follows from the fact that any protocol solving $\AND$ must have $\QIC$ 
at least $\tilde{\Omega} (\frac{1}{r})$ on such distributions.
This lower bound for $\AND$ is in turn proven by reducing back to disjointness, 
for which they prove that $\QIC (\DISJ_m) \in \Omega (\sqrt{m})$ (for any $m \in \NN$), and then constructing 
a low-information protocol for disjointness by applying coordinate-wise 
some low-information protocol for $\AND$. 
The authors were interested in the regime $m \in \tilde{\Theta} (r^2)$.
By appropriately subsampling, we can ensure that the QIC of the 
constructed protocol is close to $m$ times that of the AND protocol 
on distributions with zero-mass on $(1,1)$ inputs. The remaining ingredient is a bound on the 
continuity of QIC in the input distribution.

In fact, this continuity argument is the only place where round complexity comes into play. 
For the AND function, it states that a $r$-round protocol $\Pi$ run on 
an input distribution with mass $w$ on $(1,1 )$
input has QIC which is $(r \cdot H(w))$-close to the one of $\Pi$ run 
on some input distribution with $0$-mass on $(1,1 )$-input.
Note that this factor of $r$ is not present for classical information cost 
(unless we allow for forgetting information, as in Section~\ref{sec:clforget}, in which case it is also there in general) and, 
at an intuitive level, it can be thought of as arising from the possibility of quantum protocols transmitting $r$ 
times the same information about the $(1,1)$ input. In particular, it is not there for quantum protocols that do 
not forget information, and this is the reason why we can lift the proof of Ref.~\cite{BGKMT15} to a linear lower bound for such protocols. 
We formalize this intuition below.

\begin{definition}
	We denote $\T^{r, NF} (f, \epsilon)$ the set of $r$-round protocols that solve $f$ with error at most $\epsilon$ and do not forget information as per Definition~\ref{def:protnoforget}.
\end{definition}
\begin{definition}
	We denote $\QCC^{r, NF} (f, \epsilon)$ (resp. $\QIC^{r, NF} (f, \epsilon)$) the minimal communication (resp. information) cost achieved by a $r$-round quantum protocol solving $f$ with error at most $\epsilon$, and without forgetting information -- that is:
	$$
		\QCC^{r, NF} (f, \epsilon) = \min_{\Pi \in  \T^{r, NF} (f, \epsilon)} \QCC(\Pi),
		\qquad
		\QIC^{r, NF} (f, \epsilon) =  \inf_{\Pi \in  \T^{r, NF} (f, \epsilon)} \max_\mu \QIC(\Pi, \mu).
	$$
\end{definition}
We prove that any protocol solving $\DISJ_n$ without forgetting information must have communication $\Omega(n)$.
\begin{theorem}\label{thm:disj-NF-n}
	$$
	\QCC^{r, NF} (\DISJ_n, 1/3) \in \Omega (n).
	$$
\end{theorem}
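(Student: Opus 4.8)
The plan is to adapt the proof of the bounded-round lower bound $\QCC^r(\DISJ_n) \in \tilde{\Omega}(n/r + r)$ of Braverman et al.~\cite{BGKMT15} at the single point where the number of rounds enters their argument, exploiting Remark~\ref{charac:rem:equivNF}: a protocol $\Pi$ that does not forget information satisfies $\QIC(\Pi, \mu) = \HIC(\Pi, \mu)$ for every input distribution $\mu$. Since each message register contributes to $\QIC$ at most twice its number of qubits, $\QIC(\Pi, \mu) \le 2\,\QCC(\Pi)$ for every $\mu$; hence it is enough to show that, for every $r$-round $\Pi$ that solves $\DISJ_n$ with error $1/3$ and does not forget information, one has $\QIC(\Pi, \mu) \in \Omega(n)$ for some $\mu$, with the implicit constant independent of $r$.

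First I would carry out the standard information-complexity direct sum from $\DISJ_n = \neg\OR_{i \in [n]}\AND$ to $\AND$, exactly as in~\cite{BGKMT15}. Fix a distribution $\mu_0$ on $\ZO^2$ with $\mu_0(1,1) = 0$. Given such a $\Pi$, embedding an $\AND$ instance into a coordinate $j$, with the remaining coordinates sampled from $\mu_0$ and revealed to both players (so that the value of $\DISJ_n$ is determined by coordinate $j$ alone), yields a single-coordinate protocol $\Pi_j$ solving $\AND$ with error $1/3$; the direct-sum inequality of~\cite{BGKMT15}, whose main ingredient is additivity of CQMI on product inputs (Lemma~\ref{prelim:lem:CQMIfacts}), then gives $\QIC(\Pi, \mu_0^{\otimes n}) \ge \sum_{j=1}^{n} \QIC(\Pi_j, \mu_0)$. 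The point to stress is that each $\Pi_j$ \emph{still does not forget information}: Definition~\ref{def:protnoforget} quantifies over \emph{all} tensor decompositions of the input, and this embedding exhibits one such decomposition of $\Pi$'s input. Hence $\max_\mu \QIC(\Pi, \mu) \ge \QIC(\Pi, \mu_0^{\otimes n}) \ge n \cdot \min_j \QIC(\Pi_j, \mu_0)$, and the theorem reduces to a \emph{constant} lower bound on $\QIC(\Pi', \mu_0)$ --- for a suitably chosen $\mu_0$ --- valid for every $r$-round non-forgetting $\Pi'$ solving $\AND$ with error $1/3$.

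The crux is this $\AND$ lower bound, and here I would replay the chain of~\cite{BGKMT15}, which reduces it back to disjointness: from a low-$\QIC$ non-forgetting protocol $\Pi_{\AND}$ one constructs, by coordinatewise composition together with their subsampling (and error reduction), a protocol $\Pi_{\DISJ_m}$ whose $\QIC$ on the hard $\DISJ_m$-distribution is essentially $m$ times $\QIC(\Pi_{\AND}, \mu_w)$, where $\mu_w$ is $\mu_0$ with a small mass $w$ moved onto the $(1,1)$ input; this construction preserves the non-forgetting property (parallel repetition and coordinatewise composition, again via monotonicity and additivity of CQMI on product states), and comparison with $\QIC(\DISJ_m) \in \Omega(\sqrt{m})$~\cite{BGKMT15} then forces $\QIC(\Pi_{\AND}, \mu_0)$ from below. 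The only $r$-dependent step in this chain is the continuity of $\QIC$ in the input distribution, i.e.\ the passage from $\QIC(\Pi_{\AND}, \mu_w)$ to $\QIC(\Pi_{\AND}, \mu_0)$: for a general $r$-round protocol this correction is $O\big(r \cdot H(w)\big)$ --- it controls $\CRIC$, the part of $\QIC$ that may be re-transmitted up to $r$ times --- which is exactly why~\cite{BGKMT15} obtains only $\tilde{\Omega}(1/r)$ for $\AND$. For a non-forgetting $\Pi_{\AND}$ the correction carries \emph{no} factor of $r$: $\QIC(\Pi_{\AND}, \cdot) = \HIC(\Pi_{\AND}, \cdot) = I(X ; B_{out}B^\prime | Y) + I(Y ; A_{out}A^\prime | X)$ is a sum of two conditional mutual informations in which one subsystem ($X$ or $Y$) has dimension $2$, evaluated on a \emph{fixed} channel applied to input states at trace distance $O(w)$, so the Alicki--Fannes--Winter continuity estimate bounds it by $O(w) + O(H(w)) = O(H(w))$. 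Substituting this round-free bound into the computation of~\cite{BGKMT15} upgrades $\tilde{\Omega}(1/r)$ to $\Omega(1)$; combined with the direct sum this gives $\QIC(\Pi, \mu_0^{\otimes n}) \ge cn$ for an absolute constant $c > 0$, whence $\QCC^{r, NF}(\DISJ_n, 1/3) \ge \tfrac{c}{2}\, n \in \Omega(n)$.

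I expect the main obstacle to be bookkeeping rather than a new idea: one must check carefully that the non-forgetting property --- a statement about \emph{every} tensor decomposition of the inputs (Definition~\ref{def:protnoforget}) --- is inherited under single-coordinate restriction, parallel repetition, and coordinatewise composition, which follows from monotonicity and additivity of CQMI on product states but has to be spelled out; and one must make the Alicki--Fannes--Winter estimate quantitative, checking that the only dimension entering it is the $O(1)$-dimensional $\AND$ input, so that the continuity correction is $O(H(w))$ and genuinely independent of the number of rounds. With these two points in hand, the remainder of the argument is that of~\cite{BGKMT15}.
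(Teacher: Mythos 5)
Your proposal is correct and follows essentially the same route as the paper: a direct sum reducing $\QCC^{r,NF}(\DISJ_n)$ to $\QIC$ of $\AND$ on zero-mass-on-$(1,1)$ distributions, the reduction back to $\QIC(\DISJ_m)\in\Omega(\sqrt m)$ via coordinatewise composition, and the key observation that for non-forgetting protocols $\QIC=\HIC$ turns the $r\cdot H(w)$ continuity correction into a round-free $H(w)$, so $m$ can be taken constant. The only (minor) divergence is that you invoke Alicki--Fannes--Winter on the two boundary CQMI terms of $\HIC$, whereas the paper proves the same round-free bound $\QIC(\Pi,\mu)\le\QIC(\Pi,\mu_0)+H(w)$ by an exact quasi-convexity argument with a classical selector register (Lemma~\ref{lem:quasi-conv-improved}); both work, and your bookkeeping concern about non-forgetting being preserved under the reductions is exactly what the paper's appendix checks.
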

First, we can obtain the following result by going over the proof of the corresponding result (Lemma~4.20) in Ref.~\cite{BGKMT15} and restricting our attention to protocols that do not forget information. 
The proof, given for completeness, is deferred to the Appendix (see Appendix~\ref{subsec:proof-lem-QCC-DISJ-QIC-AND}). We require an additional definition.
\begin{definition}

	We denote $\QIC_0^{r, NF} (\AND, \epsilon)$ the minimal information cost on input distributions with no support on $(1,1)$ inputs achieved by a $r$-round quantum protocol solving $AND$ with error at most $\epsilon$, and without forgetting information -- that is:
	$$
		\QIC_0^{r, NF} (\AND, \epsilon) =  \inf_{\Pi \in  \T^{r, NF} (\AND, \epsilon)} \max_{\mu_0} \QIC(\Pi, \mu_0),
	$$
in which the maximum is taken over all input distribution satisfying $\mu_0 (1, 1) = 0$.
\end{definition}

\begin{lemma}\label{lem:QCC-DISJ-QIC-AND}
	$\QCC^{r, NF} (\DISJ_n, 1/3) \geq n \cdot \QIC_0^{r, NF} (\AND, 1/3)$.
\end{lemma}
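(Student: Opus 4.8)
### Proof Plan for Lemma~\ref{lem:QCC-DISJ-QIC-AND}

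\textbf{Overall strategy.} The plan is to follow the reduction from \cite{BGKMT15}: take any $r$-round quantum protocol $\Pi$ solving $\DISJ_n$ with error $1/3$ that does not forget information, and extract from it a single-coordinate protocol $\Pi_{\AND}$ for $\AND$ with error $1/3$, such that $\QCC(\Pi) \geq n \cdot \QIC(\Pi_{\AND}, \mu_0)$ for a well-chosen distribution $\mu_0$ with $\mu_0(1,1)=0$. The extra twist over \cite{BGKMT15} is to check that the no-forgetting property is inherited by $\Pi_{\AND}$, so that the bound is in fact against $\QIC_0^{r,NF}(\AND,1/3)$.

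\textbf{Key steps, in order.} First I would recall that $\QIC$ lower-bounds $\QCC$ (this is one of the basic properties of QIC from \cite{Tou15} cited in the excerpt), so it suffices to show $\QIC(\Pi,\mu) \geq n\cdot\QIC_0^{r,NF}(\AND,1/3)$ for a suitable input distribution $\mu$ on $\ZO^n\times\ZO^n$. Second, I would choose $\mu = \mu_0^{\otimes n}$ where $\mu_0$ is a distribution on $\ZO\times\ZO$ with $\mu_0(1,1)=0$ that is ``hard'' for $\AND$; since $\DISJ_n(x,y) = \bigwedge_i \neg(x_i\wedge y_i)$ and each coordinate avoids the $(1,1)$ input, $\Pi$ correctly computes all $n$ coordinate-ANDs with error $1/3$ on this product distribution. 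Third — the heart of the reduction — I would use a direct-sum / superadditivity argument for $\QIC$ on product distributions: by the chain rule for CQMI together with the product structure of $\mu$, one shows $\QIC(\Pi, \mu_0^{\otimes n}) \geq \sum_{j=1}^n \QIC(\Pi, \mu_0^{\otimes n})$ restricted to coordinate $j$, which equals $n$ times the QIC of the embedded single-coordinate protocol $\Pi_{\AND}^{(j)}$ obtained by fixing the other $n-1$ coordinates according to $\mu_0$ (here the inputs on the other coordinates, and their purifying registers, are absorbed into the players' private memory and the reference register, which is legitimate since $\QIC$ is defined with respect to an arbitrary purification and the Information Flow Lemma / Corollary~\ref{lem:rdindeplb-coro} applies with classical extension registers). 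Fourth, I would verify that each $\Pi_{\AND}^{(j)}$ solves $\AND$ with error $1/3$ on $\mu_0$, uses at most $r$ rounds, and — using Definition~\ref{def:protnoforget} and Remark~\ref{charac:rem:equivNF} — inherits the no-forgetting property: since fixing coordinates is just a particular input decomposition $X = X_j \otimes X_{-j}$, the $\HCRIC$ terms of $\Pi$ being zero for \emph{all} decompositions forces the $\HCRIC$ (equivalently $\CRIC$) terms of $\Pi_{\AND}^{(j)}$ to be zero as well. Finally, averaging over $j$ (or taking the best coordinate) and taking the infimum over protocols gives $\QCC^{r,NF}(\DISJ_n,1/3) \geq \QIC(\Pi,\mu_0^{\otimes n}) \geq n\cdot \QIC_0^{r,NF}(\AND,1/3)$.

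\textbf{Main obstacle.} The delicate point is the third step: making the direct-sum bound for $\QIC$ on product distributions rigorous while simultaneously tracking which registers play the role of the receiver's memory, the sender's memory, and the purification/reference register for the single-coordinate protocol. One must be careful that conditioning on the other coordinates' inputs and their quantum copies does not decrease the relevant CQMI terms — this is where the ``no forgetting'' hypothesis (which kills the $\CRIC$ terms, so that $\QIC$ collapses to $\HIC$ which is monotone and well-behaved under such conditioning) is exactly what makes the argument go through cleanly, mirroring how \cite{BGKMT15} handle it but now leveraging Remark~\ref{charac:rem:equivNF}. A secondary technical nuisance is handling the parity of $r$ and the bookkeeping of which party holds $C_r$, but this is routine and already dealt with in the conventions of Section~\ref{sec:ifl-prelim}. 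Since the full details closely parallel the proof of Lemma~4.20 in \cite{BGKMT15}, I would defer them to the appendix as the excerpt indicates.
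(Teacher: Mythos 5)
Your proposal follows essentially the same route as the paper: reduce $\QCC^{r,NF}$ to $\QIC^{r,NF}$, restrict to product distributions $\mu_0^{\otimes n}$ with $\mu_0(1,1)=0$, and invoke the embedding/direct-sum argument of~\cite{BGKMT15} (the paper's Lemmas~\ref{lem:QICandQICdisj} and~\ref{lem:QICand-maxinf}), while checking that the embedded single-coordinate $\AND$ protocol inherits the no-forgetting property via the universal quantifier over input decompositions in Definition~\ref{def:protnoforget}. One small correction: the superadditivity of $\QIC$ over coordinates on product distributions does not need the no-forgetting hypothesis --- that hypothesis is only needed so that the extracted $\AND$ protocol lies in $\T^{r,NF}(\AND,1/3)$ (and, later in the proof of Theorem~\ref{thm:disj-NF-n}, for the round-independent continuity bound), not to make the direct-sum step itself go through.
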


Furthermore, we adapt the proof of Corollary 4.9 in Ref.~\cite{BGKMT15} for protocols not forgetting information and obtain the following result.
The proof is deferred to the Appendix (see Appendix~\ref{subsec:proof-lem-QICmu0}).
\begin{lemma}
		\label{lem:QICmu0}
		Suppose we have a protocol $\Pi$ for $\AND$ which does not forget information. Then, for any input distribution $\mu$ not concentrated on $(1,1)$,
		$$
			\QIC(\Pi, \mu) \leq \QIC(\Pi, \mu_0) + H(w)
		$$
		(independently of the number of rounds in $\Pi$),
		where $w = \mu(1,1) \leq 1/2$, $\mu_0(1,1) = 0$, $\mu_0(x,y) = \frac{1}{1-w} \mu(x,y)$ for $(x,y) \neq (1,1)$.
\end{lemma}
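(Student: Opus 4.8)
Since $\Pi$ does not forget information, Remark~\ref{charac:rem:equivNF} collapses its quantum information cost on any input distribution $\nu$ to the terminal Holevo quantity $\QIC(\Pi,\nu)=\HIC(\Pi,\nu)=I(X;B_{out}B'\,|\,Y)_{\rho_\nu}+I(Y;A_{out}A'\,|\,X)_{\rho_\nu}$. The whole point of the argument is that $\HIC$ refers only to the state \emph{at the end} of the protocol and carries no sum over rounds: the continuity estimate of~\cite{BGKMT15} that, for a general $r$-round protocol, must be run round by round over the $r$ terms of $\QIC$ (and thus pays a factor $r$) here has to be invoked only once. So the plan is: (i) use the Remark to pass to $\HIC$; (ii) compare $\HIC(\Pi,\mu)$ with $\HIC(\Pi,\mu_0)$ by a single application of an $H(w)$-type bound; (iii) translate back to $\QIC$.

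For step (ii), write $\mu=(1-w)\mu_0+w\,\delta_{(1,1)}$ with $w=\mu(1,1)\le 1/2$. Because $\Pi$ is fixed, its terminal c-q state is affine in the input, $\rho_\nu=\sum_{x,y}\nu(x,y)\,\kb{xy}{xy}^{XY}\otimes\sigma_{x,y}$ with the $\sigma_{x,y}$ independent of $\nu$, so $\rho_\mu=(1-w)\rho_{\mu_0}+w\,\rho_{(1,1)}$. I would adjoin a classical flag $W$ recording whether the input equals $(1,1)$; since $\mu_0$ has no mass there, $W$ coincides with $\AND(X,Y)$ on the support, $\Pr[W=1]=w$, hence $H(W)=H(w)$; call the extended state $\tilde\rho$. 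On $\tilde\rho$, a chain-rule expansion of each directional Holevo term over $W$ — using that $W$ is a deterministic function of $(X,Y)$ (so $H(W\,|\,XY)=0$), that conditioning on $W=1$ makes $X$ (resp.\ $Y$) deterministic, and that conditioning on $W=0$ is exactly the run of $\Pi$ on $\mu_0$ (which is legitimate because $\Pi$ does not read $W$) — yields the exact identities
\[
\HIC_{A\to B}(\Pi,\mu)=(1-w)\,\HIC_{A\to B}(\Pi,\mu_0)+I(W;B_{out}B'\,|\,Y)_{\tilde\rho},\qquad
\HIC_{B\to A}(\Pi,\mu)=(1-w)\,\HIC_{B\to A}(\Pi,\mu_0)+I(W;A_{out}A'\,|\,X)_{\tilde\rho}.
\]
Summing, using $1-w\le 1$ and nonnegativity of conditional mutual information, and bounding the two residual terms — each a Holevo quantity about the single classical bit $W$ — by $H(w)$ via a short binary-entropy estimate (where $w\le 1/2$ enters), gives $\QIC(\Pi,\mu)=\HIC(\Pi,\mu)\le\HIC(\Pi,\mu_0)+H(w)=\QIC(\Pi,\mu_0)+H(w)$, with no dependence on the number of rounds.

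The delicate point, and where the no-forgetting hypothesis is genuinely needed, is exactly this last bound: inserting the $w$-mass at $(1,1)$ must cost only $H(w)$ \emph{once}. For a protocol allowed to forget, the same information about the $(1,1)$-input can be learnt, discarded and re-sent up to $r$ times, which is precisely what produces the $r\cdot H(w)$ term in the general continuity bound of~\cite{BGKMT15}; routing everything through $\HIC$ removes that redundancy and localizes the estimate to a single application of the $W$-chain-rule. Two further checks are routine but should be stated: that adjoining the flag $W$ leaves all the relevant information quantities unchanged, and that the affine decomposition of $\rho_\mu$ really does identify the $W=0$ branch with the $\mu_0$-execution of $\Pi$.
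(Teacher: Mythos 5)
Your route is essentially the paper's: the paper first proves a round-independent quasi-convexity bound for non-forgetting protocols by passing to $\HIC$ via Remark~\ref{charac:rem:equivNF}, adjoining a classical selector register, and applying the chain rule once, and then specializes it to $\mu=(1-w)\mu_0+w\,\delta_{(1,1)}$; you inline that argument with the flag $W=\mathbf{1}[(X,Y)=(1,1)]$ in place of a generic selector. Your two displayed identities are correct (they follow from $I(W;B_{out}B'\,|\,XY)=0$ since $W$ is a deterministic function of the classical $(X,Y)$, from the vanishing of the $W=1$ branch, and from the identification of the $W=0$ branch with the $\mu_0$-run of $\Pi$), and the reduction $\QIC=\HIC$ is exactly where the no-forgetting hypothesis is used, as in the paper.

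The gap is in your last step. You bound \emph{each} of the two residuals $I(W;B_{out}B'\,|\,Y)$ and $I(W;A_{out}A'\,|\,X)$ by $H(W)=H(w)$ and then sum; that yields $\QIC(\Pi,\mu)\le\QIC(\Pi,\mu_0)+2H(w)$, not $+H(w)$ as you conclude. This is not a repairable looseness in your estimate: take the non-forgetting protocol for $\AND$ in which Alice sends a copy of $x$, Bob keeps it and returns a copy of $z=x\wedge y$, and both output $z$ (every message is retained by its sender, so all $\HCRIC$ terms vanish for every decomposition). On the uniform distribution ($w=1/4$) one computes $\QIC(\Pi,\mu)=\HIC(\Pi,\mu)=H(X|Y)+H(Z|X)=3/2$ while $\QIC(\Pi,\mu_0)=2/3$, and the two residuals are $H(W|Y)=H(W|X)=1/2$, summing to $1>H(1/4)\approx 0.81$; so the inequality with a single $H(w)$ is false and the correct constant is $2$. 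The paper's own proof of Lemma~\ref{lem:quasi-conv-improved} has the same slip --- it bounds only the $A\to B$ direction by $H(p)$, and the symmetric $B\to A$ term contributes a second $H(p)$ upon summation. The weaker bound $2H(w)$ is what both arguments actually establish, and it is all that is needed downstream: Theorem~\ref{thm:disj-NF-n} only requires $m\cdot H(w)\in o(\sqrt m)$ for $w\in O(\lg^4(m)/m)$, which a factor of $2$ does not affect. You should state the conclusion with $2H(w)$ (or, equivalently, with the sharper residual $H(W|X)+H(W|Y)$).
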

	
A protocol that does not forget information can be boosted without forgetting information or increasing the number of round, similarly to Lemma~4.15 of Ref.~\cite{BGKMT15}.

\begin{lemma}\label{lem:boosterror}
For any function $f$, any bound on the number of round $r$ and any error parameter $\epsilon>0$, the following holds:
\begin{align}
\QIC^{r, NF} (f, \epsilon) \leq O (\lg 1/\epsilon) \QIC^{r, NF} (f, 1/3).
\end{align}
\end{lemma}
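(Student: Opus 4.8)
The plan is to follow the standard error-reduction-by-majority-vote argument, as in Lemma~4.15 of Ref.~\cite{BGKMT15}, while verifying that the construction preserves the ``does not forget information'' property as well as the round bound. Let $\Pi$ be an $r$-round protocol solving $f$ with error $1/3$ that does not forget information, and with $\QIC(\Pi,\mu)\le \QIC^{r,NF}(f,1/3)+\delta$ for all $\mu$, for arbitrarily small $\delta>0$. First I would fix $k = \Theta(\lg 1/\epsilon)$ and define $\Pi^\prime$ to be the protocol that runs $k$ independent copies of $\Pi$ \emph{in parallel} (all $k$ copies' $i$-th messages are bundled into a single message in round $i$), using $k$ independent copies of the pre-shared entanglement, $k$ fresh copies of the inputs made safely at the outset (so we stay in the safe regime, and by Proposition~\ref{prop:safeQIC} this does not increase QIC), and then at the very end has each party locally compute the majority of the $k$ outputs into the actual output register. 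Since the copies run in lockstep, $\Pi^\prime$ is still an $r$-round protocol. By a Chernoff bound, the majority is correct except with probability at most $\epsilon$ (for an appropriate constant in $k=\Theta(\lg 1/\epsilon)$), so $\Pi^\prime$ solves $f$ with error $\epsilon$.

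Next I would bound $\QIC(\Pi^\prime,\mu)$ for an arbitrary input distribution $\mu$. Because the $k$ copies act on disjoint registers and the only correlation among them is through the common classical input (which is copied $k$ times), each round-$i$ information term of $\Pi^\prime$ splits, using the chain rule and the tensor-additivity and classical-conditioning properties of CQMI from Lemma~\ref{prelim:lem:CQMIfacts}, into a sum of the corresponding round-$i$ terms of the $k$ individual copies, each run on input $\mu$ (the final local majority computation is a control-isometry on the output registers and contributes no QIC term since no message is sent afterward, and by safeness the input registers are untouched). Hence $\QIC(\Pi^\prime,\mu) = \sum_{j=1}^k \QIC(\Pi,\mu) = k\cdot\QIC(\Pi,\mu) \le k\bigl(\QIC^{r,NF}(f,1/3)+\delta\bigr)$. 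Taking the supremum over $\mu$ and letting $\delta\to 0$ gives $\max_\mu \QIC(\Pi^\prime,\mu) \le O(\lg 1/\epsilon)\cdot \QIC^{r,NF}(f,1/3)$.

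The one point requiring genuine care is that $\Pi^\prime$ must still belong to $\T^{r,NF}(f,\epsilon)$, i.e.\ it must not forget information in the sense of Definition~\ref{def:protnoforget}: for every product decomposition $X=X_1\otimes X_2$, $Y=Y_1\otimes Y_2$ and every $\mu_1\otimes\mu_2$, we need $\HCRIC_{A\leftarrow B}(\Pi^\prime,\mu_1,\mu_2)=\HCRIC_{B\leftarrow A}(\Pi^\prime,\mu_1,\mu_2)=0$. I expect this to follow from the same tensorization: each round-$i$ reverse term of $\Pi^\prime$ decomposes, via Lemma~\ref{prelim:lem:CQMIfacts} and the chain rule (conditioning on the extra independent copies of the relevant input part, which only decreases or leaves unchanged the relevant CQMI and which are classical), into a sum of the corresponding $\HCRIC$ terms of the individual copies of $\Pi$ run on $\mu_1\otimes\mu_2$, each of which vanishes by hypothesis since $\Pi$ does not forget information; the final local majority step, being a control-isometry applied after all messages, sends no message and so adds no reverse term. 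This is the main obstacle in the sense that it is the only step not already covered verbatim by~\cite{BGKMT15}, but it is a routine verification using the CQMI identities already collected in Lemma~\ref{prelim:lem:CQMIfacts}. Combining the three points yields $\QIC^{r,NF}(f,\epsilon)\le \max_\mu\QIC(\Pi^\prime,\mu) \le O(\lg 1/\epsilon)\,\QIC^{r,NF}(f,1/3)$, as claimed.
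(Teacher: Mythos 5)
Your overall strategy --- parallel repetition of $k=\Theta(\lg 1/\epsilon)$ copies with a terminal local majority vote, which preserves the round count --- is exactly the argument the paper intends; the paper itself gives no proof and only points to Lemma~4.15 of~\cite{BGKMT15}. However, two of your steps are not as clean as claimed. First, the exact additivity $\QIC(\Pi',\mu)=k\cdot\QIC(\Pi,\mu)$ is false: the $k$ copies are run on perfectly correlated (identical) copies of the input, not on independent inputs, so the global state is not a tensor product and the first item of Lemma~\ref{prelim:lem:CQMIfacts} does not apply. A concrete counterexample is the safe protocol of Example~\ref{ex:forget} (Alice sends a copy of her uniformly random input, with $X,Y$ independent): one copy has $\QIC=\lg|X|$, but two parallel copies still have $\QIC=\lg|X|$, since $I(X^1 X^2;R_X)=H(X)$ and not $2H(X)$ once Alice retains a classical copy. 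What is true, and all you need, is the subadditivity $\QIC(\Pi',\mu)\le k\cdot\QIC(\Pi,\mu)$, obtained from the chain rule $I(\vec{C}_i;R_X R_Y|Y\vec{Y}\vec{B}_i)=\sum_j I(C_i^j;R_X R_Y|Y\vec{Y}\vec{B}_i C_i^{<j})$ together with an argument that the extra conditioning on the other copies' registers does not push each term above its single-copy value; this still requires a short argument, since those registers are correlated with the input.

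Second, and more importantly, the preservation of the no-forgetting property is the only genuinely new ingredient relative to~\cite{BGKMT15}, and your treatment of it is too quick. After the chain rule, the $j$-th term of an $\HCRIC$ sum for $\Pi'$ has the form $I(C_i^j;X_1|R_{X_2}R_{Y_2}Y_1 Y_2\,\vec{Y}\,\vec{B}_i\,C_i^{<j})$, i.e.\ it conditions on the working registers and messages of the \emph{other} copies, which are correlated with $X_1$ through the shared input. The hypothesis that $\Pi$ does not forget only gives the vanishing of the single-copy terms $I(C_i^j;X_1|R_{X_2}R_{Y_2}Y_1 Y_2 B_i^j)$ evaluated on the single-copy state, and conditioning a CQMI on additional correlated systems can change it in either direction, so ``each of which vanishes by hypothesis'' does not follow directly from Lemma~\ref{prelim:lem:CQMIfacts}. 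One has to actually carry out the reduction --- for instance by exhibiting the other copies' registers as part of an admissible extension in the sense of Definition~\ref{def:protnoforget}, or by a purity/duality argument moving the conditioning to the sender's side where, given the classical inputs, the copies factorize. This is fixable, but it is the crux of the lemma and should not be dismissed as a routine verification.
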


We make use of the following lower and upper bounds proven in Ref.~\cite{BGKMT15} (the upper bound follows from the proof of  their Lemma~6.1) on the QIC of computing $\DISJ_m$ for some parameter $m \in \mathbb{N}$.

\begin{lemma}\label{lem:LB-QIC-DISJ}
	$\QIC (\DISJ_m, 1/3) \in \Omega (\sqrt{m})$.
\end{lemma}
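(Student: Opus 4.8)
Since Lemma~\ref{lem:LB-QIC-DISJ} is quoted from~\cite{BGKMT15}, the plan is to recover their argument, which combines the characterization of $\QIC$ as an amortized quantity with a strong communication lower bound for Disjointness. The key point to keep in mind is that $\QIC \leq 2\,\QCC$ always, so an ordinary communication lower bound for $\DISJ_m$ does \emph{not} directly give a lower bound on $\QIC(\DISJ_m)$; one genuinely needs to go through many copies.

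First I would invoke the fact, established in~\cite{Tou15}, that $\QIC$ equals the amortized quantum communication complexity: for the lower bound we only need the ``easy'' direction, namely that any protocol $\Pi$ computing $f$ with $\QIC(\Pi,\mu)=c$ can be used (via per-round quantum information compression, amortized over $n$ independent instances) to solve $f^{\otimes n}$ on $\mu^{\otimes n}$ with communication $n c + o(n)$ at a comparable error. Hence $\QIC(f,\mu,\epsilon) \geq \tfrac1n \QCC(f^{\otimes n},\mu^{\otimes n},\epsilon') - o(1)$ for $\epsilon'$ close to $\epsilon$ and all large $n$. It therefore suffices to show $\QCC(\DISJ_m^{\otimes n},\mu^{\otimes n},1/3) = \Omega(n\sqrt m)$ for a suitable hard product distribution $\mu$.

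Second, this many-copy lower bound is exactly what a (strong) direct product theorem for the quantum communication complexity of Disjointness supplies: starting from Razborov's bound $\QCC(\DISJ_m,1/3)=\Omega(\sqrt m)$ as the base case, the strong direct product theorem of Klauck--\v{S}palek--de Wolf gives that any quantum protocol using $o(n\sqrt m)$ qubits solves $n$ independent copies of $\DISJ_m$ with success probability only $2^{-\Omega(n)}$; in particular, solving all $n$ copies with constant success probability costs $\Omega(n\sqrt m)$. Dividing by $n$ and letting $n\to\infty$ yields $\QIC(\DISJ_m,1/3)=\Omega(\sqrt m)$. As an alternative route that avoids amortization, one can instead observe that the generalized-discrepancy / approximate-$\gamma_2$-norm method underlying Razborov's bound is itself additive under tensoring and can be shown to lower-bound $\QIC$ directly, which gives the same conclusion.

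The main obstacle is making the amortization step quantitatively honest: the $\QIC$-to-$\QCC$ compression is lossy per round, so one must control both the $o(n)$ overhead and the error blow-up, and in particular keep the per-instance error a fixed constant when invoking the direct product bound. This is precisely why the \emph{strong} direct product theorem is the right tool here --- it tolerates any constant (indeed exponentially small) success probability, so it comfortably absorbs the losses from compression. The base case $\QCC(\DISJ_m,1/3)=\Omega(\sqrt m)$ is Razborov's theorem, which we are entitled to cite rather than reprove.
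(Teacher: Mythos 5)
The paper does not prove this lemma at all: it is imported verbatim from Ref.~\cite{BGKMT15}, so there is no in-paper argument to compare against. Your reconstruction matches the architecture of the cited proof: the inequality $\QIC(f,\mu,\epsilon)\geq \tfrac1n \QCC(f^{\otimes n},\mu^{\otimes n},\epsilon') - o(1)$ coming from the compression/amortization theorem of~\cite{Tou15}, combined with the strong direct product theorem of Klauck, \v{S}palek and de Wolf for Disjointness (whose base case is Razborov's $\Omega(\sqrt m)$ bound). So the route is the right one, and you correctly identify that a plain $\QCC$ lower bound is useless here because $\QIC \leq \QCC$ only goes the wrong way.

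One quantitative step is glossed over, and it is the only place the argument can actually break. The amortized simulation produces a protocol whose $n$ outputs are each correct with probability about $1-\epsilon$; with $\epsilon = 1/3$ the probability that \emph{all} $n$ copies are simultaneously correct is only about $(2/3)^n = 2^{-0.585 n}$, whereas the strong direct product theorem only forbids success probability above $2^{-\gamma n}$ for some fixed constant $\gamma$ determined by the theorem. If $\gamma < 0.585$ there is no contradiction and the argument as written proves nothing. Saying the SDPT ``tolerates exponentially small success probability'' is not enough; it tolerates it only down to its own specific exponent. The standard fix (and the one used in~\cite{BGKMT15}) is to first apply the error-reduction property $\QIC(f,\epsilon_0) \leq O(\log(1/\epsilon_0))\,\QIC(f,1/3)$ to pass to a per-copy error $\epsilon_0$ small enough that $(1-\epsilon_0)^n \geq 2^{-\gamma n/2}$, run the amortization and the SDPT at error $\epsilon_0$, and then absorb the constant factor $O(\log(1/\epsilon_0))$ into the $\Omega(\cdot)$. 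You should make this step explicit. Finally, your proposed ``alternative route'' --- that the approximate-$\gamma_2$/generalized-discrepancy bound lower-bounds $\QIC$ directly without amortization --- is not an established fact and should not be presented as a fallback; the additivity of that bound is exactly what powers Sherstov's SDPT, i.e., it feeds back into the amortized route rather than replacing it.
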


\begin{lemma}\label{lem:QIC-DISJ-AND}
For any $m$, any protocol $\Pi_A$ computing $\AND$ with error $1/m^2$, and any $w \in  O(\lg^4(m) / m)$,
$$
	\QIC (\DISJ_m, 2/m) \leq m  \cdot  \max_{\mu_w} \QIC (\Pi_A, \mu_w) + o (\sqrt{m}),
$$
in which $\mu_w$ ranges over all distributions with weight at most $w$ on the $(1,1 )$-input.
\end{lemma}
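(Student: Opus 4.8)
The plan is to build from $\Pi_A$ a protocol $\Pi_D$ for $\DISJ_m$ by applying $\Pi_A$ coordinate-wise, following the template of Lemma~6.1 of Ref.~\cite{BGKMT15}. Concretely, $\Pi_D$ runs $m$ independent copies of $\Pi_A$ in parallel, the $i$-th copy on the pair $(x_i,y_i)$ with a fresh copy of $\Pi_A$'s pre-shared entangled state; at the end both players hold the $m$ outputs $z_1,\dots,z_m$ (each supposed to equal $x_i\wedge y_i$) and each locally outputs $\neg\bigvee_i z_i$, so no communication is needed beyond the $m$ copies of $\Pi_A$. Since $\Pi_A$ errs with probability at most $1/m^2$ on every input, a union bound gives error at most $m\cdot m^{-2}=1/m$ for this raw version of $\Pi_D$, leaving a slack up to $2/m$ to be spent on the subsampling step below.

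For the information cost, I would fix an arbitrary input distribution $\nu$ on $\DISJ_m$ inputs, purify it in $R_XR_Y=\bigotimes_i R_{X_i}R_{Y_i}$, and observe that the $i$-th copy of $\Pi_A$ acts only on the registers attached to coordinate $i$. Using the chain rule and the additivity of CQMI on independent registers (Lemma~\ref{prelim:lem:CQMIfacts}), this lets me split $\QIC(\Pi_D,\nu)$ into the sum of the per-copy contributions, each of which is the QIC of $\Pi_A$ run on coordinate $i$. When $\nu$ is a product distribution $\bigotimes_i\nu_i$ these decouple exactly, giving $\QIC(\Pi_D,\nu)=\sum_i\QIC(\Pi_A,\nu_i)$; for correlated $\nu$ there are extra cross terms coming from coordinate $i$'s correlation with the other coordinates' reference registers, which is precisely why a preprocessing step is needed.

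That preprocessing is the subsampling: using public randomness, $\Pi_D$ first reduces the given instance to one in which, coordinate by coordinate, the effective distribution seen by the corresponding copy of $\Pi_A$ is a product distribution with mass at most $w$ on the $(1,1)$-input. Instances that are ``obviously not disjoint'' --- those carrying large $(1,1)$-weight on enough coordinates, so that a short search-type subroutine finds a $(1,1)$-coordinate --- are answered directly at cost $o(\sqrt m)$; on the remaining instances only a vanishing fraction of coordinates are ``heavy'', the corresponding input bits are communicated outright at cost $o(\sqrt m)$, and then every coordinate is ``light'' and effectively independent of the others. A continuity accounting shows that forcing each coordinate's $(1,1)$-mass down to $\le w$ changes each copy's QIC by at most $H(w)$, so the total overhead is at most $m\,H(w)$, and the choice $w\in O(\lg^4 m/m)$ makes $m\,H(w)=O(\lg^5 m)=o(\sqrt m)$; the extra error introduced is $o(1/m)$, keeping the total below $2/m$. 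Combining the pieces, for every $\nu$ one gets $\QIC(\Pi_D,\nu)\le \sum_i\QIC(\Pi_A,\nu_i)+o(\sqrt m)\le m\cdot\max_{\mu_w}\QIC(\Pi_A,\mu_w)+o(\sqrt m)$ with each $\nu_i$ of $(1,1)$-mass $\le w$, and since $\Pi_D$ solves $\DISJ_m$ with error at most $2/m$ this bounds $\QIC(\DISJ_m,2/m)$ as claimed.

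I expect the main obstacle to be exactly this subsampling/preprocessing step: it must handle arbitrary correlated and possibly heavy input distributions uniformly, and the reduction to per-coordinate light product distributions must cost only $o(\sqrt m)$ in both information and error. The quantum bookkeeping in the per-coordinate decoupling of $\QIC(\Pi_D,\nu)$ is delicate, because conditioning on the other coordinates' reference registers does not behave like classical conditioning; this is where the additivity of CQMI on independent registers and a careful design of the public-coin reduction carry the argument.
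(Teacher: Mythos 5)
Your overall template is the right one, and it matches the source the paper itself points to: this lemma is not proved in the paper at all, but imported from the proof of Lemma~6.1 of Ref.~\cite{BGKMT15}, whose outline (coordinate-wise application of $\Pi_A$ plus a subsampling/search preprocessing, with exact additivity of the CQMI terms across coordinates for product states) is exactly what you describe. Your error analysis is fine (in fact the raw coordinate-wise protocol already has worst-case error $\le 1/m$ by a union bound on the one-sided failure modes), and your arithmetic $m\,H(w)=O(\lg^5 m)=o(\sqrt m)$ for $w\in O(\lg^4 m/m)$ is correct, although the $H(w)$ continuity step does not actually belong in this lemma: the right-hand side is already stated in terms of $\max_{\mu_w}$ over distributions with $(1,1)$-mass at most $w$, so no continuity in the distribution is needed here (that step is Lemma~\ref{lem:QICmu0}, applied afterwards).

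The genuine gap is in the step you yourself flag: the reduction of $\QIC(\Pi_D,\nu)$ for an \emph{arbitrary correlated} $\nu$ to a sum of per-coordinate terms each bounded by $\max_{\mu_w}\QIC(\Pi_A,\mu_w)$. Your preprocessing (an intersection search plus ``communicating the heavy coordinates outright'') controls at best the per-coordinate $(1,1)$-mass; it does nothing to remove cross-coordinate correlations, so the assertion that afterwards ``every coordinate is light and effectively independent of the others'' is unsupported and false in general (e.g.\ $\nu$ supported entirely on disjoint, highly correlated inputs such as Razborov-type distributions with fixed set sizes passes your preprocessing untouched while remaining far from product). Moreover, ``heaviness'' of a coordinate is a property of the unknown distribution $\nu$, not of the input the protocol actually holds, and under a correlated $\nu$ all $m$ coordinates can simultaneously have $(1,1)$-mass far above $w$, so ``communicating the heavy input bits outright'' is neither well defined as a protocol step nor within an $o(\sqrt m)$ budget. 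What is actually needed — and what your sketch does not supply — is a chain-rule argument showing that each conditional term $I(C_j^{(i)};R\,|\,B_j C_j^{(<i)})$ in the decomposition of a round of $\Pi_D$ is dominated by a single-copy QIC term of $\Pi_A$ on some admissible distribution in the class $\mu_w$, after conditioning on the other coordinates; this is delicate precisely because the conditional $(1,1)$-mass of a single coordinate can exceed $w$ even when the marginal mass does not, and because quantum conditioning on the other coordinates' purifying registers does not behave classically. Until that decoupling is carried out, the inequality $\QIC(\Pi_D,\nu)\le\sum_i\QIC(\Pi_A,\nu_i)+o(\sqrt m)$ that your proof rests on is not established for non-product $\nu$.
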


Optimizing over protocols $\Pi_A \in \T^{r, NF} (\AND, 1 / m^2)$ in Lemma~\ref{lem:QIC-DISJ-AND} and 
combining with Lemma~\ref{lem:QICmu0},
we get, for any $r \geq 1$,
\begin{align*}
	\QIC(\DISJ_m, 2/m)
	&\leq m  \cdot \Big(  \QIC_0^{r, NF} (\AND, 1/m^2) + H(w)\Big) + o (\sqrt{m}),
\end{align*}
where the l.h.s. is independent of $r$.
 Moreover, by Lemma~\ref{lem:LB-QIC-DISJ}, the left-hand side belongs to $\Omega (\sqrt{m})$, so by further combining with Lemma~\ref{lem:boosterror}, we can rewrite this as
\begin{align}
\Omega \left(\frac{1}{\sqrt{m} \lg m}\right) \leq \QIC^{r, NF} (\AND, 1/3).
\end{align}
The r.h.s. is independent of $m$, so by fixing $m$ to a large enough constant, we get, for any number of round $r$,
$$
	\QIC^{r, NF} (\AND, 1/3) \in \Omega (1).
$$
Hence, by Lemma~\ref{lem:QCC-DISJ-QIC-AND}, for any $n$,
$$
	\QCC^{r, NF} (\DISJ_n) \in \Omega (n),
$$
which concludes the proof of Theorem~\ref{thm:disj-NF-n}.



\section{Quantum Simulation of Classical Protocols}
\label{sec:quantum-simu-classical}

We now study how to quantumly simulate  classical protocols, and how the corresponding QIC behaves. By simulating, we mean that there is a quantum protocol with the same input-output behavior. 
It turns out that we can always find a quantum simulation with the same information cost as the classical protocol; it is even possible to build this quantum simulation such that it does not forget information.

For the reader's convenience, we deal successively with deterministic protocols, public coin protocols, and protocols with private coins. The latter needs a special care and we give a more detailed explanation on the construction.

\paragraph{Deterministic protocols.}
Let us consider a classical deterministic (i.e., which does not depend on private or shared randomness) protocol $\Pi$. We define the protocol $\Pi_0$ which is similar to $\Pi$ except that Alice and Bob keep local 
copies of their inputs
and of the messages, possibly padding messages with $0$'s such that the order of speech is known in 
advance to both
and independent of the inputs.
\begin{remark}
This might affect the communication cost of the protocol, but does not 
change the information
cost or the input-output behavior.
\end{remark}

Now, we define $\Pi_0^*$, the quantum simulation of $\Pi_0$ (hence it simulates $\Pi$ as well).
To generate their quantum messages, Alice and Bob run as unitary a classical 
reversible circuit implementing
the protocol in each round, and measure the output registers at the end.
\begin{lemma}
	The quantum simulation $\Pi_0^*$ has the same input-output behavior and information cost as the 
original deterministic protocol $\Pi$, and the same communication cost as the padded protocol $\Pi_0$.
\end{lemma}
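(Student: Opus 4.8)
The plan is to establish the three claimed equalities — input-output behaviour, information cost, communication cost — separately, since each has a different flavour. For the communication cost, the claim is essentially by construction: $\Pi_0^*$ sends the same registers $M_i$ (now as quantum registers of the same dimension) in each round as $\Pi_0$, so $\QCC(\Pi_0^*) = CC(\Pi_0)$. For the input-output behaviour, I would argue that running a classical reversible circuit coherently and then measuring the output register in the computational basis yields exactly the same distribution on outputs as running the circuit classically: at every stage the global state of $\Pi_0^*$, conditioned on the classical inputs $(x,y)$, is a product of computational-basis states exactly tracking the classical state of $\Pi_0$ on $(x,y)$ (the reversible circuits are permutations of basis states, and messages are never in superposition since no private randomness is involved). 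Hence measuring at the end reproduces $\Pi_0(x,y) = \Pi(x,y)$.

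The substantive point is the information cost equality. Here I would invoke the purification-based definition of $\QIC$ (Definition~\ref{prelim:def:QIC}) together with the Information Flow Lemma machinery — or, more directly, Propositions~\ref{thm:HicCicCric} and~\ref{thm:QicCicCric}. The key structural observation is that $\Pi_0^*$ never forgets information: because the protocol is deterministic and each player retains a copy of their input and of every message, at round $i$ the sender's register $A_i$ (or $B_i$) literally contains copies of all previous messages and the input, so the relevant $\CRIC$ terms $I(C_i; Y | X A_i)$ (for $i$ odd) vanish — $C_i$ is a deterministic function of $X, A_i$ and carries no information about $Y$ given them. By Proposition~\ref{thm:QicCicCric}, $\QIC(\Pi_0^*, \mu) = \CIC(\Pi_0^*, \mu) + \CRIC(\Pi_0^*, \mu) = \CIC(\Pi_0^*, \mu)$. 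Then I would check that $\CIC(\Pi_0^*, \mu)$, whose $i$-th odd term is $I(C_i; X | Y B_i)_{\rho_i}$, equals the corresponding classical term $I(M_i; X | Y M_{<i})$: on the classical input $(x,y)$ the quantum state $\rho_i$ is a computational-basis state encoding $(x, y, m_{\le i})$ along with the quantum copies $R_X, R_Y$, so $C_i, B_i, R_X$ are all classical, and the CQMI reduces to the classical conditional mutual information of the corresponding random variables in $\Pi_0$. Summing over rounds gives $\CIC(\Pi_0^*,\mu) = IC(\Pi_0, \mu) = IC(\Pi, \mu)$, where the last equality is the content of the Remark (padding and keeping copies does not change $IC$).

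A cleaner way to organize the same argument, which I would probably prefer for the write-up, is: everything in $\Pi_0^*$ conditioned on $(x,y)$ is in a computational basis state, so every $\QIC$ term $I(C_i; R_X R_Y | B_i)_{\rho_i}$ can be evaluated round-by-round using Lemma~\ref{prelim:lem:CQMIfacts} by conditioning on $X Y$ (which is classical) and then noting that, given $(x,y)$, the purifying registers $R_X R_Y$ are in a fixed product state uncorrelated with everything, so $I(C_i; R_X R_Y | B_i X Y) = 0$; the chain rule then reduces $I(C_i; R_X R_Y | B_i)$ to $I(C_i; XY | B_i)$, which in turn, using that $B_i$ contains a copy of $Y$ and (for the sender) $A_i$ contains a copy of $C_i$, collapses to the classical information term. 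I expect the main obstacle to be purely bookkeeping: carefully specifying which registers are held where in $\Pi_0^*$ (the coherent copies of the inputs, the retained message copies, and the purification registers $R_X, R_Y$), and making sure the "conditioned on $(x,y)$ everything is a basis state" claim is airtight through the measurement at the end — but there is no conceptual difficulty, since no randomness is ever introduced and reversible circuits act as basis permutations.
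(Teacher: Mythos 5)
Your proposal is correct and follows essentially the same route as the paper, whose entire justification is the one-line remark that every register involved is classical (so the quantum costs reduce to the classical $IC$) and that nothing is forgotten (so $\CRIC = 0$ and $\QIC = \CIC = \HIC$). You simply make explicit the bookkeeping the paper leaves implicit — that tracing out the retained copies renders $X$, $Y$ and the messages classical, and that $C_i$ being a deterministic function of the sender's retained registers kills the $\CRIC$ terms — so there is nothing to add.
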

The fact that the information cost is unchanged 
follows by noticing that each register is classical in HIC, which is equal to the IC of the 
classical protocol, and also HIC = CIC which are then also equal to QIC.

\paragraph{Public Coin Protocols.}
Let us now consider a classical protocol $\Pi$ with shared randomness. As above, we define  a classical protocol $\Pi_0$ similar to $\Pi$ where the players first make a local copy of the shared randomness, and then  pad their messages with $0$'s such that the
order of speech is known in advance to both, independently not only of the input, but also of the randomness.

Then we define the quantum simulation protocol $\Pi_0^*$ by having Alice and Bob use pure shared entanglement 
to simulate in a canonical way the shared randomness: make two coherent (quantum), perfectly correlated copies of the random strings,
a copy being given to Alice and the other one given to Bob. In this way, if either copy is traced out, the other copy is classical and
distributed exactly as the corresponding local copy of the shared randomness. 

Viewing a classical protocol with shared 
randomness as one which is an average over deterministic protocols with fixed random strings, they
can then run the corresponding classical deterministic protocol.
\begin{lemma}
The input-output behavior and the information cost of the quantum
simulation protocol $\Pi_0^*$
is the same as for the original public coin classical protocol $\Pi$, and the communication cost is the same as that of the padded protocol $\Pi_0$.
\end{lemma}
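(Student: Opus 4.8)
The plan is to reduce to the already-established deterministic case by conditioning on the shared random string, and then to check that every register entering a QIC term of $\Pi_0^*$ carries only classical information, so that $\QIC$ collapses to the classical $IC$.

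\textbf{Input-output behavior and communication cost.} View $\Pi$ as an average over deterministic protocols $\Pi^r$ indexed by the value $r$ of the shared string, with weight $p(r)$. The coherent-copy gadget prepares $\sum_r \sqrt{p(r)}\,\ket{r}^{R_{AB}^A}\ket{r}^{R_{AB}^B}$; tracing out either half leaves a classical register distributed as $p$, and conditioned on a fixed value $r$ the remainder of $\Pi_0^*$ is exactly the deterministic quantum simulation of $\Pi^r$. Since the two randomness registers are only ever used as control registers (they are never disturbed and, where relevant, are carried along to the output), the channel implemented by $\Pi_0^*$ is the $p$-average of the channels of the deterministic simulations, which by the deterministic lemma equal the channels of the $\Pi^r$; averaging recovers the channel of $\Pi$ (equivalently of $\Pi_0$, since padding does not affect input-output behavior). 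The communication registers of $\Pi_0^*$ are precisely the padded message registers of $\Pi_0$, so $\QCC(\Pi_0^*) = CC(\Pi_0)$.

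\textbf{Information cost.} Fix an odd round $i$ and consider the term $I(C_i;R_X R_Y\mid B_i)_{\rho_i}$. I claim the marginal of the global pure state $\rho_i$ on $R_X R_Y C_i B_i$ is diagonal in the computational basis. Indeed, the global state is $\sum_{x,y,r}\sqrt{\mu(x,y)p(r)}\,\ket{x}^{R_X}\ket{y}^{R_Y}\otimes\ket{\phi_{x,y,r}}$, where $\ket{\phi_{x,y,r}}$ is a computational-basis state recording the deterministic classical data of $\Pi_0$ on $(x,y,r)$ — in particular it contains a basis-copy of $x$ and of $r$ inside Alice's register $A_i$, and a basis-copy of $y$ and of $r$ inside Bob's register $B_i$. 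Tracing out $A_i$ decoheres the branches in $x$ and in $r$; the surviving marginal moreover holds $y$ redundantly (in $R_Y$ and in $B_i$), so it is automatically diagonal in $y$ too. Hence the term is a classical conditional mutual information. In $\Pi_0^*$ we have $B_i=(y,r,m_{<i},\text{Bob's local data}(y,r,m_{<i}))$ and $C_i=m_i$, so this term equals $I(M_i;X\mid R_{AB}\,Y\,M_{<i})$ (the $R_X R_Y$ slot reduces to $X$ because $Y$ is already in the conditioning). For even $j$ the quantity $I(M_j;X\mid R_{AB}\,Y\,M_{<j})$ vanishes, since $M_j$ is a deterministic function of $(Y,R_{AB},M_{<j})$; therefore, by the chain rule, summing the odd-$i$ terms gives $I(\Pi;X\mid R_{AB}\,Y)=IC_{A\rightarrow B}(\Pi,\mu)$, and symmetrically the even-$i$ terms sum to $IC_{B\rightarrow A}(\Pi,\mu)$, so $\QIC(\Pi_0^*,\mu)=IC(\Pi,\mu)$. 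The same computation shows $\CRIC(\Pi_0^*,\mu)=0$ and $\QIC=\HIC=\CIC$, i.e.\ $\Pi_0^*$ does not forget information (cf.\ Remark~\ref{charac:rem:equivNF}).

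\textbf{Main obstacle.} The only delicate point is ruling out that the coherent shared-randomness gadget injects genuinely quantum correlations into the QIC terms, despite the global entanglement $\sum_r\sqrt{p(r)}\ket{rr}$. This is exactly what the classicality claim above settles: each QIC term traces out a register ($A_i$ for odd rounds, $B_i$ for even rounds) that contains a basis-copy of the random string, which decoheres the random-string branches, while the non-traced party's input is encoded redundantly among the remaining registers. Once classicality is in hand, the rest is the same round-by-round bookkeeping (chain rule, plus using Lemma~\ref{prelim:lem:CQMIfacts} to handle the classical control registers) as in the deterministic case, and the claims about the channel and the communication cost follow immediately from the ``average of deterministic protocols'' viewpoint together with the previously proved deterministic lemma.
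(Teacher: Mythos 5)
Your treatment of the channel and of the communication cost is fine, and the overall architecture (reduce to the deterministic case, then argue that the relevant registers are classical) is the right one. The genuine gap is in the classicality claim that carries the whole information-cost computation: the reduced state on $R_X R_Y C_i B_i$ is \emph{not} diagonal in the computational basis. Tracing out $A_i$ does decohere the $x$-, $r$- and $m_{\le i}$-branches, because $A_i$ holds basis copies of those data; but no copy of $y$ lives on Alice's side, so nothing decoheres the $y$-branches. Holding $y$ ``redundantly'' in two registers that are both \emph{retained} ($R_Y$ and $Y\subseteq B_i$) does not make the state diagonal in $y$ --- it leaves a coherent state of the form $\sum_y\sqrt{\mu(y|x)}\ket{y}^{R_Y}\ket{y}^{Y}\ket{\cdot}^{B_i}$, entangled across the $R_Y\,|\,Y B_i$ cut; decoherence requires tracing out a copy, not merely possessing one. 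Consequently $I(C_i;R_XR_Y\mid B_i)$ is not a priori a classical CMI, and the step ``the $R_XR_Y$ slot reduces to $X$ because $Y$ is already in the conditioning'' is exactly where the error hides: $R_Y$ is not $Y$, and the discarded piece $I(C_i;R_Y\mid R_X Y B_i)$ is precisely the forgetting ($\CRIC$) term that separates $\QIC$ from $\CIC$.

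The conclusion is nevertheless true, and the repair is the decomposition the paper uses elsewhere (Proposition~\ref{thm:QicCicCric}): write $I(C_i;R_XR_Y\mid B_i)=I(C_i;R_X\mid B_i)+I(C_i;R_Y\mid R_X B_i)$. The first term \emph{is} genuinely classical --- there both $A_i$ and $R_Y$ are traced out, so every surviving register is decohered --- and it equals the classical term $I(M_i;X\mid Y,R_{AB},M_{<i})$. For the second term, use purity of the global state to flip the conditioning to the sender's side, $I(C_i;R_Y\mid R_X B_i)=I(C_i;R_Y\mid A_i)$; now Bob's copies of $y$ are traced out, the state on $C_iR_YA_i$ is diagonal, and the term equals $I(M_i;Y\mid X,R_{AB},M_{<i})=0$ because, once the shared string is fixed, $M_i$ is a deterministic function of Alice's local view. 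Summing over rounds gives $\QIC(\Pi_0^*,\mu)=\CIC(\Pi_0^*,\mu)=IC(\Pi,\mu)$ together with $\CRIC(\Pi_0^*,\mu)=0$, which is the statement you wanted and is essentially the route the paper's own (terse) proof takes via $\HIC=\CIC=\QIC$. So the bookkeeping and the conclusion are right, but as written the central justification is incorrect and needs to be replaced by this chain-rule-plus-purity argument.
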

Once again, the fact that the information cost is unchanged 
follows by noticing that each register is classical in HIC, which is equal to the IC of the 
classical protocol, and also HIC = CIC which are then also equal to QIC.

\paragraph{Protocols with Private Randomness.}
The case of classical protocol that also have private randomness is the most tricky to handle. As a first attempt, the private randomness 
can be simulated
in a way similar to public randomness as described above, except that now both coherent copies of the random strings are 
given to the same party (the one who owns this private random string in the classical protocol). However, these registers do not 
look like classical  registers in the different information costs, and the above argument for classical protocols with only public randomness cannot be used to argue that the information remains unchanged.

Instead, we use a two-step procedure to obtain a quantum simulation protocol for which we can more easily show that the information cost is maintained. The first step consists in giving a classical simulation protocol of the original protocol in which the private randomness is in some canonical form. In the second step, we simulate quantumly this intermediate classical protocol by applying similar arguments as for classical protocols with only public randomness.

\underline{Step 1~: canonical classical simulation.} 
Consider a classical protocol $\Pi$. Let us first define a canonical transformation which provides another classical protocol, denoted $\widetilde \Pi$, in a particular form. 
For this canonical classical simulation, the idea is to  use a lot of fresh private 
randomness in each round, which directly encodes the distribution over messages in each round in a way 
which is consistent with the local information (input, shared randomness, and previous messages) of the sender. 
More precisely, say in round $i$ in $\Pi$, Alice is to generate message $M_i$ as a 
deterministic function of her input $X$, the shared randomness $R_{AB}$, her private randomness $S_A$, 
and the previous messages $M_{<i} = M_1 \dots M_{i-1}$.

For a given (partial) view $(x,r,m_{<i})$ of Alice at round $i$ (excluding her private randomness), consider the random variable $M_i^{x, r, m_{<i}}$ 
obtained by "averaging" the private randomness $s_A$, that is~: for any fixed message $m$,
$$
	\Pr[M_i^{x, r, m_{<i}} = m] = \PP_{S_A}[m_i(x, S_A, r, m_{<i}) = m].
$$
 Then the canonical simulation protocol $\widetilde \Pi$ 
uses in round $i$ the following random variable (which is given to Alice as fresh private randomness)~:
$$
	T_i^A = \bigotimes_{x, r, m_{<i}} M_i^{A,x, r, m_{<i}} \,,
$$ 
that is, independent copies of the random variable $M_i$ corresponding to each possible local view $(x, r, m_{<i})$. 
At round $i$, Alice considers her actual local view ($x, r, m_{<i}$), and 
sends the message corresponding to $M_i^{A,x, r, m_{<i}}$, that is, the element of her  
private randomness $T_i^A$ corresponding to her actual local view (the other parts of the private 
randomness $T_i^A$ are left untouched). Bob acts similarly, with some fresh private randomness $T^B_i$ at each even round $i$.
We denote $T^A = \otimes_{i \, odd} T_i^A$ and $T^B = \otimes_{i \, even} T_i^B$.

\begin{lemma}

	In this canonical classical simulation, both the information cost and the communication cost are unchanged~: for any input distribution $\mu$,
	$$
		IC(\widetilde \Pi, \mu) = IC(\Pi, \mu),
		\qquad CC(\widetilde \Pi) = CC(\Pi).
	$$
	Moreover,
the distribution of the joint random variable $XYRM_{\leq n}$ for the whole $n$-round protocol is also unchanged, 
and thus so is the input-output behavior.
\end{lemma}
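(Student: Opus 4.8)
The plan is to prove the three assertions in order of dependence: first that the joint law of $(X,Y,R_{AB},M_{\le n})$ is unchanged, then to deduce from it the equality of information costs and the preservation of the input--output channel, while the equality of communication costs is immediate and separate.

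\medskip
\noindent\textbf{Step 1 (transcript law).} I would argue by induction on the number $i$ of messages already produced, the invariant being that $(X,Y,R_{AB},M_{\le i})$ has the same distribution under $\Pi$ and under $\widetilde\Pi$. The case $i=0$ holds because the inputs are drawn from $\mu$ and the public randomness from its prior, independently, in both protocols. For the step, take $i+1$ odd, so Alice produces $M_{i+1}$ from her view $(X,R_{AB},M_{\le i})$. In $\widetilde\Pi$ the fresh bundle $T^A_{i+1}$ is, by construction, independent of all randomness used in rounds $\le i$, hence of $(X,Y,R_{AB},M_{\le i})$; Alice sends the component of $T^A_{i+1}$ indexed by her view, and distinct components are independent with the prescribed laws, so for every value $(x,r,m_{\le i})$ of that view,
\[
  \Pr_{\widetilde\Pi}[\, M_{i+1}=m \mid X=x,\ Y=y,\ R_{AB}=r,\ M_{\le i}=m_{\le i} \,]
  = \Pr[\, M_{i+1}^{A,x,r,m_{\le i}}=m \,],
\]
which by the definition of the canonical form equals $\Pr_{\Pi}[\, M_{i+1}=m \mid X=x,\ R_{AB}=r,\ M_{\le i}=m_{\le i} \,]$. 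On the $\Pi$ side, $M_{i+1}$ is generated by Alice from her view and her private coins only, and by the Markov property of the protocol (conditioned on the public transcript prefix and the public randomness, Alice's side is independent of Bob's), conditioning this law additionally on $Y$ does not change it. Therefore the conditional law of $M_{i+1}$ given $(X,Y,R_{AB},M_{\le i})$ is the same under $\Pi$ and $\widetilde\Pi$, and combined with the induction hypothesis this yields equality of the joint laws of $(X,Y,R_{AB},M_{\le i+1})$. Even rounds are symmetric, and $i=n$ gives the claim.

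\medskip
\noindent\textbf{Step 2 (the rest).} The communication cost is unchanged because $\widetilde\Pi$ transmits the same message registers $M_i$, over the same alphabets, as $\Pi$; only the local rule producing them differs. The information cost $IC(\cdot,\mu)=I(X;M_{\le n}\mid R_{AB}Y)+I(Y;M_{\le n}\mid R_{AB}X)$ is a functional of the joint law of $(X,Y,R_{AB},M_{\le n})$ only --- it never refers to the private coins --- so Step~1 gives $IC(\widetilde\Pi,\mu)=IC(\Pi,\mu)$. Finally, each party's output is (without loss of generality) a deterministic function of that party's final view, hence of $(X,Y,R_{AB},M_{\le n})$, so the joint law of $(X,Y,A_{out},B_{out})$, i.e. the input--output channel, is also preserved.

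\medskip
\noindent The main obstacle is the induction of Step~1, and within it the point that the fresh per-round randomness in $\widetilde\Pi$ must faithfully reproduce the original message law given the sender's view --- a Kuhn-type realization-equivalence fact, valid precisely because each party has perfect recall of its input, of the public randomness, and of the messages seen so far. Two standard ingredients must be made explicit for the bookkeeping to be rigorous: the independence of the fresh bundle $T^A_{i+1}$ from everything generated earlier, and the rectangle (Markov) property of protocols, which is what allows one to condition the sender's next message on the receiver's input and randomness without changing its distribution.
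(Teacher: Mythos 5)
The paper states this lemma without proof, so there is nothing to compare your argument against; your induction on the transcript prefix, using the independence of the fresh bundles $T^A_{i+1}$ and the rectangle (Markov) property to insert $Y$ into the conditioning, is exactly the natural route. The communication-cost and $IC$ claims in your Step~2 are fine, since both are functionals of the law of $(X,Y,R_{AB},M_{\le n})$ alone.

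The one place where your proof is not actually a proof is the line ``which by the definition of the canonical form equals $\Pr_{\Pi}[M_{i+1}=m \mid X=x, R_{AB}=r, M_{\le i}=m_{\le i}]$.'' The paper defines $\Pr[M_i^{x,r,m_{<i}}=m]=\PP_{S_A}[m_i(x,S_A,r,m_{<i})=m]$, i.e.\ it averages $S_A$ over its \emph{prior}. That quantity equals the conditional law of $M_{i+1}$ given the sender's view only when $S_A$ is independent of the view, which fails as soon as the sender's earlier messages depend on $S_A$. Concretely: if $S_A$ is a uniform bit and Alice sends $M_1=S_A$ and later $M_3=S_A$, the prior-averaged simulation makes $M_1$ and $M_3$ independent uniform bits, so the transcript law is \emph{not} preserved and the lemma is false under that reading. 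The statement (and your induction) is saved only if $M_i^{x,r,m_{<i}}$ is defined by averaging $S_A$ over its \emph{posterior} given the view $(x,r,m_{<i})$ --- equivalently, by taking $M_i^{x,r,m_{<i}}$ to have law $\Pr_\Pi[M_i=\cdot \mid X=x, R_{AB}=r, M_{<i}=m_{<i}]$. This is precisely the Kuhn/perfect-recall point you name in your closing paragraph, but in the body you dispatch it ``by definition'' rather than either adopting the posterior-averaged definition explicitly or verifying that the two averages coincide (they do not in general). Fix that one identification --- state the canonical per-view law as the conditional law of $M_i$ given the sender's view --- and the rest of your argument goes through verbatim.
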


\underline{Step 2~: quantum simulation.}
We consider a protocol $\widetilde \Pi_0$ in which the messages of $\widetilde \Pi$ are padded so that the order of speech is independent of the inputs and both public and private randomness. For the quantum simulation protocol $\widetilde \Pi_0^*$, private randomness is simulated by giving two coherent
 local copies to the player and letting him or her work on one of them.
 \begin{lemma}
 \label{lem:simu-quantum}
The input-output behavior and the information cost of the quantum
simulation protocol $\widetilde{\Pi}_0^*$
is the same as for the original classical protocol $\Pi$ with private randomness, and the communication cost is the same as that of the padded protocol $\widetilde{\Pi}_0$.

\end{lemma}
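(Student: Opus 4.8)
The plan is to prove the three assertions—equality of communication cost, of input–output behaviour, and of information cost—separately, with only the last one requiring real work. The communication cost is immediate: in $\widetilde\Pi_0^*$ each message register $C_i$ is a quantum register holding a coherent copy of the corresponding padded classical message of $\widetilde\Pi_0$, hence of the same dimension, so $\QCC(\widetilde\Pi_0^*)=CC(\widetilde\Pi_0)$. For the input–output behaviour the key elementary fact is that a coherent copy-pair $\sum_s\sqrt{p(s)}\ket{s}\ket{s}$ becomes a genuinely classical register (distributed as $p$ and classically correlated with everything else) as soon as its partner copy is traced out. Running $\widetilde\Pi_0^*$ on the classical input $\rho_\mu^{XY}$ and tracing out the discarded registers $A'B'$ traces out one copy of every randomness pair, so the reduced state on $A_{out}B_{out}$ is exactly the classical output distribution produced by the reversible circuits of $\widetilde\Pi_0$; by the Step~1 lemma this equals the distribution of $\widetilde\Pi$, which equals that of $\Pi$. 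Throughout we take $\widetilde\Pi_0^*$ to be safe, which by Proposition~\ref{prop:safeQIC} does not affect $\QIC$.

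For the information cost I would first show that $\widetilde\Pi_0^*$ \emph{does not forget information} in the sense of Definition~\ref{def:protnoforget}. By the canonical construction of Step~1, in round $i$ the message $M_i$ is literally the sub-register $M_i^{A,x,r,m_{<i}}$ of the sender's fresh private randomness $T_i^A$ selected by her current view $(x,r,m_{<i})$; to transmit it she applies a control-isometry coherently copying that sub-register into $C_i$ while keeping $T_i^A$, and (as in the padded deterministic case) she also retains local copies of all messages. Hence the content of $C_i$ is a function of the sender's post-round memory register, so $H(C_i\mid A_i)=0$ for odd $i$ and $H(C_i\mid B_i)=0$ for even $i$; since every $\HCRIC$ term conditions on the sender's memory register, all of them vanish. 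By Remark~\ref{charac:rem:equivNF} (equivalently Propositions~\ref{thm:HicCicCric} and~\ref{thm:QicCicCric}) this gives $\QIC(\widetilde\Pi_0^*,\mu)=\HIC(\widetilde\Pi_0^*,\mu)$ for every $\mu$.

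It then remains to show $\HIC(\widetilde\Pi_0^*,\mu)=IC(\widetilde\Pi,\mu)$, which with the Step~1 lemma $IC(\widetilde\Pi,\mu)=IC(\Pi,\mu)$ finishes the proof. The main claim is that the marginal state $\rho^{X\,Y\,B_{out}B'}$ of $\widetilde\Pi_0^*$ factors as $\sigma_{\mathrm{cl}}\otimes\tau$, where $\sigma_{\mathrm{cl}}$ is a classical state on $X$, $Y$, and Bob's classical view (his copy of $R_{AB}$ and all messages $M_{\leq n}$), and $\tau$ is a pure state on the components of $T^B$ Bob never used—fresh untouched uniform registers, hence in tensor product with everything. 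Indeed, tracing out $R_X,R_Y$ decoheres $X,Y$; tracing out all of Alice's registers decoheres Bob's copy of $R_{AB}$ (Alice holds the partner copy) and every message on Bob's side (for odd $i$ Alice keeps a partner copy of $M_i$, for even $i$ Alice received $C_i=M_i$); and, crucially, because every even message is a sub-register of the corresponding $T_i^B$, the only coherence surviving inside Bob's registers is the inert $\tau$ factor—this is exactly where Step~1 is needed, since a naive coherent simulation of private randomness would make the messages depend lossily on $T^B$ and the factorization would fail. Given the claim, $\HIC_{A\rightarrow B}(\widetilde\Pi_0^*,\mu)=I(X;B_{out}B'\mid Y)_\rho=I(X;R_{AB}M_{\leq n}\mid Y)_{\sigma_{\mathrm{cl}}}=I(X;\Pi\mid R_{AB}Y)_{\widetilde\Pi}=IC_{A\rightarrow B}(\widetilde\Pi,\mu)$, using $R_{AB}\perp(X,Y)$; the symmetric computation handles $\HIC_{B\rightarrow A}$, and summing concludes. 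The main obstacle is making this factorization claim fully rigorous—carefully tracking which partner copies get traced out and verifying that, thanks to the canonical form of Step~1, no residual coherence connects Bob's classical view to the inert fresh-randomness block.
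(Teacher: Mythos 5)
Your route is genuinely different from the paper's: the paper computes the $\CIC$ terms of $\widetilde\Pi_0^*$ round by round, shows each equals the corresponding term of $IC(\widetilde\Pi,\mu)$, then shows $\HIC=\CIC$ by the same conditioning argument and deduces $\CRIC=0$ and $\QIC=\CIC=\HIC=IC$; you instead try to establish the no-forgetting property first (to get $\QIC=\HIC$) and then evaluate $\HIC$ once, on the final state. The second half of your argument --- the factorization of $\rho^{XYB_{out}B'}$ into a classical part and an inert pure block of unused fresh randomness, using that every used component of $T^B$ has a coherent partner among the messages held by (and traced out with) the other party --- is exactly the structural observation at the heart of the paper's proof, and it is sound modulo the bookkeeping you yourself flag.

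The genuine gap is in your justification of the no-forgetting step. You argue that because the sender retains a coherent copy of the message, ``the content of $C_i$ is a function of the sender's post-round memory register, so $H(C_i\mid A_i)=0$,'' and conclude that the $\HCRIC$ terms, which condition on $A_i$, vanish. For quantum registers the inference $H(C\mid A)=0\Rightarrow I(C;E\mid A)=0$ is false, even for classical $E$: take $\rho^{ECA}=\tfrac12\kb{0}{0}^E\otimes\kb{\Phi^+}{\Phi^+}^{CA}+\tfrac12\kb{1}{1}^E\otimes\kb{\Phi^-}{\Phi^-}^{CA}$, for which $H(C\mid A)=0$ but $I(C;E\mid A)=1$. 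Moreover a coherent copy is not a classical copy --- the whole point of Example~\ref{ex:forget} in the paper is that coherent copies can carry \emph{more} correlation, not less --- so retaining a partner copy of $C_i$ in $A_i$ does not by itself kill the conditional mutual information. The correct argument for $\CRIC(\widetilde\Pi_0^*,\mu)=0$ is the same classical-versus-pure-product analysis you use for $\HIC$, applied on the sender's side at round $i$: conditioned on the sender's classical view $(x,r,m_{<i})$ (classical because all its coherent partners sit with the other party or the reference and are traced out), the pair consisting of $C_i$ and its partner inside $A_i$ is a \emph{pure} state determined by that view alone (this is precisely where the canonical form of Step~1 is needed), hence in tensor product with, and uncorrelated with, $Y$; only then does $I(C_i;Y\mid XA_i)=0$ follow. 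With that repair your proof goes through and is equivalent in content to the paper's; as written, the step quoted above is not a valid quantum-information-theoretic deduction.
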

\begin{proof}
We first focus on the CIC term. Consider for instance the third round (Alice is the sender). Dropping the ancilla qubits for brevity, the global quantum state just after Bob receives the third message is then~:
$$
	\rho_3 = \rho^{X, R_X, R_{AB}^A, T^A, M^A_{\leq 3}, M^M_3, Y, R_Y, R_{AB}^B, T^B, M^B_{<3}}
$$
where $M^A_i$ and $M^B_i$ denote respectively Alice and Bob's copy of the $i$-th message, whereas $M^M_3$ is the register that is sent over from Alice to Bob. The third term appearing in CIC is~:
\begin{align}
\label{quantsimu:eq:thirdTermCIC}
	I(M^M_3 : X | Y, R_{AB}^B, T^B, M^B_{<3}),
\end{align}
where the CQMI is evaluated on the quantum state~:
\begin{align*}
	&\rho^{X, M^M_3, Y, R_{AB}^B, T^B, M^B_{<3}}
	\\
	= \,& \Tra{R_X, R_{AB}^A, T^A, M^A_{\leq 3}, R_Y} (\rho_3)
	\\
	= \,& \sum_{x,y,r, m_{\leq 3}} p(x,y,r, m_{\leq 3}) \ket{x,y,r, m_{\leq 3}}\bra{x,y,r, m_{\leq 3}} \otimes \rho^{x, m_3, y,r, T^B, m_{<3}}
\end{align*}
for some family of quantum states $(\rho^{x, m_3, y,r, T^B, m_{< 3}})_{y,r, m_{\leq 3}}$. For the last equality, we used the fact that the registers $X, M^M_3, Y, R_{AB}^B, M^B_{<3}$ are in a classical state, since the registers  $R_X , R_{AB}^A, T^A, M^A_{\leq 3}, R_Y$ are traced out. Furthermore, recall that in the classical protocol $\widetilde \Pi$, the random variable $T^B$ is defined as~:
\begin{align*}
	T^B 
	&= T^B_2 \otimes \left( \bigotimes_{i \geq 2} T^B_{2i} \right)
	= \left( \bigotimes_{y, r, m_{1}} M_2^{B,y, r, m_{1}} \right) \otimes \left( \bigotimes_{i \geq 2} T^B_{2i} \right).
\end{align*}
In the third round of the quantum protocol, since the registers $R_{AB}^A, M^A_{\leq 3}, Y'$ are already traced out, the quantum state can actually be decomposed as~:
\begin{align*}
	&\rho^{X, M^M_3, Y, R_{AB}^B, T^B, M^B_{<3}}
	\\
	= \,& \left(\sum_{x,y,r, m_{\leq 3}} p(x,y,r, m_{\leq 3}) \ket{x,y,r, m_{\leq 3}}\bra{x,y,r, m_{\leq 3}} \otimes \rho^{x, m_3, y,r, T_2^B, m_{<3}}\right)
	\otimes \left(\bigotimes_{i \geq 2} \rho^{ T^B_{2i}} \right).
\end{align*}
Hence, by Lemma~\ref{prelim:lem:CQMIfacts}, the term~\eqref{quantsimu:eq:thirdTermCIC} can be written
\begin{align}
	I(M^M_3 : X | Y, R_{AB}^B, T^B, M^B_{<3}) = \bbE_{y, r, m_{<3}}\left[ I(M^M_3 : X | T_2^B )_{\rho^{X, M^M_3, y, r, T_2^B, m_{<3}}}\right],
\end{align}
with 
$$
	\rho^{X, M^M_3, y, r, T_2^B, m_{<3}} = \sum_{x, m_{3}} p(x, m_{3}) \ket{x, m_{3}}\bra{x, m_{3}} \otimes \rho^{x, m_3, y,r, T_2^B, m_{<3}},
$$
where we use the fact that $X$ and $M^M_3$ are classical since $R_X$ and $M^A_3$ were traced out. The $T_2^B$ is still quantum, but it has a special structure: either $M_2^{B, y^\prime r^\prime m_1^\prime}$ does not correspond to the actual view $(y, r, m_1)$ of Bob, and so it remains in a pure state, or else it corresponds but Alice possesses a coherent copy of $M_2^{B, y, r, m_1}$, and so Bob's copy is classical once we trace Alice's copy out.
It follows that 
\begin{align}
	I(M^M_3 : X | Y, R_{AB}^B, T^B, M^B_{<3}) & = \bbE_{y, r, m_{<3}}\left[ I(M^M_3 : X | M_2^{B, y, r, m_1} )_{\rho^{X, M^M_3, y, r, M_2^{B, y, r, m_1}, m_{<3}}}\right] \\
	& = I(M^M_3 : X | M_2^{B, Y, R, M_1} ),
\end{align}
as in $IC (\widetilde \Pi_0)$.

More generally, consider an odd round $i$ (Bob is the receiver). We can see that, conditioning on the classical part $(y, r, m_{<i})$, 
all of the quantum registers corresponding to the private randomness $T^B$ on Bob's side fall into two categories~:
\begin{itemize}
	\item either they have never been used (for $j \geq i$, all of $T^B_j$, or for $j \leq i$, the coordinates of $T^B_j$ which did not correspond to the actual view of Bob at round $j$), and so remain  in
a pure state in product form and can be eliminated from the CQMI term,
	\item or else they have been used but correspond to one of at least some quantum copies of a message previously sent to the other party (the coordinates of $T^B_j$ for $j \leq i$, $j$ odd, corresponding to the local view $(y, r, m_{<j})$ of Bob at round $j$, hence to a message $M_j^{B,x, r, m_{<j}}$ sent by Bob to Alice). In the CQMI, since one party's registers are traced out, this term of CIC is classical.
\end{itemize}
Using the chain rule, we see that the $i$-th term in CIC for the quantum simulation is equal to the $i$-th term in the information cost of the classical protocol $\widetilde \Pi$.
Similar arguments hold also for any even round. Hence $\CIC(\widetilde \Pi_0^*, \mu) = IC(\widetilde \Pi, \mu)$.
Finally, we can see that $\HIC(\widetilde \Pi_0^*, \mu) = \CIC(\widetilde \Pi_0^*, \mu)$ by using the chain rule in an order so as to be able to apply the above argument to the quantum registers corresponding to private randomness.
This implies $\CRIC (\widetilde \Pi_0^*, \mu) = 0$, 
and $\QIC(\widetilde \Pi_0^*, \mu) = \CIC(\widetilde \Pi_0^*, \mu) = \HIC(\widetilde \Pi_0^*, \mu) = IC(\widetilde \Pi_0, \mu) = IC(\Pi, \mu)$.
\end{proof}

\begin{remark}

In particular for classical protocols, $IC_0 (\AND) \in \Omega (1)$ (and then also $CC (\DISJ_n) \in \Omega (n)$ by a standard direct sum argument akin to Lemma~\ref{lem:QCC-DISJ-QIC-AND}) follow by using such a quantum 
simulation that does not forget information and using the result $QIC^{r, NF} \in \Omega (1)$ from the previous section.
Surprisingly, the main ingredients going into this proof of the linear lower bound on the classical communication complexity of disjointness are a $\sqrt{n}$  lower bound on the quantum information complexity and a $\sqrt{n}$ upper bound on the quantum communication complexity of disjointness, two $\sqrt{n}$ bounds.

\end{remark}


\section{Clean Protocols, IP, and Random Functions}
\label{sec:IPrandfct}

\subsection{Clean Protocols and Phase Encoding of the Output}

The development in this section follows that of Refs~\cite{CvDNT99, MW07}. The Information Flow Lemma (see Lemma~\ref{lem:rdindeplb}) allows us to translate their arguments about QCC to QIC. The link with IC follows by the general simulation procedure of classical protocols maintaining IC (see Lemma~\ref{lem:simu-quantum}).

Given a Boolean function $f$ and any protocol $\Pi$ computing $f$ with zero-error, 
we will construct a so-called \emph{clean} protocol $\Pi^\prime$ also computing $f$ 
with zero-error, but restoring all registers, except for an output qubit, to their original state.
Then, using similar ideas, we define a protocol $\Pi^{\prime \prime}$ where the output is in the phase.

\paragraph{Clean protocol $\Pi^\prime$.}

The action of $\Pi$, if we do not trace out the $A^\prime$, $B^\prime$ registers, is given by the sequence of unitaries $U_{1}, U_{2}, \dots, U_{r}, U_{r+1}$ applied by Alice and Bob in turns. Hence, denoting $U_\Pi =  U_{r+1} U_r \cdots U_2 U_1$, the state at the end of a run of $\Pi$ on input $(x, y)$ is of the form
\begin{align}
	& U_\Pi \Big(\ket{x}^X \ket{y}^Y \ket{\psi}^{T_A T_B}\Big)
	\nonumber
	\\
	& = \, \ket{x}^X \ket{y}^Y \ket{f (x, y)}^{B_{out}} \ket{\phi_{xy}}^{A_{out} A^\prime B^\prime},
	\label{randfct:eq:afterPi}
\end{align}
for some state $\ket{\phi_{xy}}$ depending on both $x$ and $y$.

We define the protocol $\Pi^{\prime}$ as the protocol whose global action is given by $U_\Pi^\dagger CNOT_{B_{out} \rightarrow B_{out}^\prime} U_\Pi$ which uses an additional ancillary qubit $\ket{0}^{B_{out}^\prime}$.
In other words, the players start by running $\Pi$, which leads to the state~\eqref{randfct:eq:afterPi}.
Then, Bob applyies a $CNOT$ 
gate from $B_{out}$ to $B_{out}^\prime$, which gives the state

\begin{align}
CNOT_{B_{out} \rightarrow B_{out}^\prime} \Big(\ket{x}^X & \ket{y}^Y  \ket{f (x, y)}^{B_{out}} \ket{\phi_{xy}}^{A_{out} A^\prime B^\prime} \ket{0}^{B_{out}^\prime}\Big)
	\nonumber
	 \\
	& = \ket{x}^X \ket{y}^Y \ket{f (x, y)}^{B_{out}} \ket{\phi_{xy}}^{A_{out} A^\prime B^\prime} \ket{f(x, y)}^{B_{out}^\prime}.
	\nonumber
\end{align}

To clean the working registers, the players run the protocol whose action is $U_\Pi^\dagger$, and they obtain

\begin{align}
U_\Pi^\dagger \Big(\ket{x}^X & \ket{y}^Y \ket{f (x, y)}^{B_{out}} \ket{\phi_{xy}}^{A_{out} A^\prime B^\prime} \Big) \ket{f(x, y)}^{B_{out}^\prime}
	\nonumber
	\\
	& = \ket{x}^X \ket{y}^Y \ket{\psi}^{T_A T_B} \ket{f(x, y)}^{B_{out}^\prime}.
	\nonumber
\end{align}

So the overall action of $\Pi^\prime$ is
\begin{align}
U_\Pi^\dagger CNOT_{B_{out} \rightarrow B_{out}^\prime} & U_\Pi \Big(\ket{x}^X  \ket{y}^{Y} \ket{\psi}^{T_A T_B} \ket{0}^{B_{out}^\prime}\Big) \\
& = \ket{x}^X \ket{y}^Y \ket{\psi}^{T_A T_B} \ket{f(x, y)}^{B_{out}^\prime}.
\end{align}

\begin{remark}
Notice that if $QCC_{A \rightarrow B} (\Pi) = a$ (the communication from Alice to Bob), 
$QCC_{B \rightarrow A} (\Pi) = b$ (the communication from Bob to Alice), 
then $QCC_{A \rightarrow B} (U_\Pi^\dagger) = b $, $QCC_{B \rightarrow A} (U_\Pi^\dagger) = a$; 
hence $QCC_{A \rightarrow B} (\Pi^\prime) = a+ b = QCC(\Pi)$.
We will later argue something similar for information of zero-error protocols.
\end{remark}

\paragraph{Protocol $\Pi^{\prime \prime}$ with output in the phase.}

We define $\Pi^{\prime \prime}$ similarly to $\Pi^\prime$, except that the ancilla register $B_{out}^\prime$ is originally in the state
$
	\ket{-}  = \frac{1}{\sqrt{2}} (\ket{0} - \ket{1}),
$
instead of $\ket{0}$ as in $\Pi^\prime$.
As a consequence, instead of recording $f(x, y)$ in the computational basis of $B_{out}^\prime$, the players ``record'' it in the phase. 
So, after running $\Pi$, the players apply $CNOT_{B_{out} \rightarrow B_{out}^\prime}$ to obtain

\begin{align}
&CNOT_{B_{out} \rightarrow B_{out}^\prime} \left(\ket{x}^X \ket{y}^Y \ket{f (x, y)}^{B_{out}} \ket{\phi_{xy}}^{A_{out} A^\prime B^\prime} \ket{-}^{B_{out}^\prime}\right) 
	\nonumber
	\\
	= \,& \ket{x}^X \ket{y}^Y \ket{f (x, y)}^{B_{out}} \ket{\phi_{xy}}^{A_{out} A^\prime B^\prime} (-1)^{f(x, y)} \ket{-}^{B_{out}^\prime}.
	\nonumber
\end{align}

Thus, running $U_\Pi^\dagger$ and bringing out the phase, we get

\begin{align}
&(-1)^{f(x, y)} U_\Pi^\dagger \left( \ket{x}^X \ket{y}^Y \ket{f (x, y)}^{B_{out}} \ket{\phi_{xy}}^{A_{out} A^\prime B^\prime} \right) \ket{-}^{B_{out}^\prime} 
	\nonumber
	\\
	= \, &(-1)^{f(x, y)} \ket{x}^X \ket{y}^Y \ket{\psi}^{T_A T_B} \ket{-}^{B_{out}^\prime}.
	\nonumber
\end{align}

\subsection{Relating to $QIC (\Pi, \mu)$}

These two protocols, $\Pi^{\prime}$ and $\Pi^{\prime \prime}$, have the same communication cost, and in particular: 
$$QCC_{A \rightarrow B} (\Pi^{\prime \prime}) = QCC_{A \rightarrow B} (\Pi^{\prime}) = QCC (\Pi).$$
We now study their information cost and show the following result.

\begin{proposition}
For any input distribution $\mu$, Boolean function $f$, and
 any zero-error protocol $\Pi$ for $f$, 
$$QIC (\Pi, \mu) = QIC_{A \rightarrow B} (\Pi^\prime, \mu) = QIC _{A \rightarrow B} (\Pi^{\prime \prime }, \mu).$$
\end{proposition}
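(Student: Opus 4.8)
The plan is to treat $\Pi^\prime$ and $\Pi^{\prime\prime}$ uniformly as three‑phase $2r$‑message protocols: a run of the unitary part $U_\Pi$ of $\Pi$ (not discarding $A^\prime,B^\prime$), a single local gate $CNOT_{B_{out} \rightarrow B_{out}^\prime}$ applied by Bob to a fresh qubit $B_{out}^\prime$ prepared in $\ket 0$ (for $\Pi^\prime$) or $\ket -$ (for $\Pi^{\prime\prime}$), and a run of $U_\Pi^\dagger$. The middle phase carries no communication, and running $U_\Pi^\dagger$ reverses the sender/receiver of every message, so the Alice‑to‑Bob messages of either protocol are exactly the odd‑indexed messages $C_1,C_3,\dots,C_{r-1}$ of the forward run together with the even‑indexed messages $C_r,C_{r-2},\dots,C_2$ of the backward run. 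It therefore suffices to show that the forward group of terms sums to $\QIC_{A \rightarrow B}(\Pi,\mu)$ and the backward group sums to $\QIC_{B \rightarrow A}(\Pi,\mu)$; adding them gives $\QIC(\Pi,\mu)$. This is the same forward/backward bookkeeping already used in the Alternative Proof Sketch of Proposition~\ref{prop:QICleqCIC}, and the only genuinely new point is to check that the extra register $B_{out}^\prime$ does not perturb any term.

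For the forward group this is immediate: throughout that phase $B_{out}^\prime$ is a product factor in state $\ket0$ (resp.\ $\ket-$), so for odd $i$ the relevant term is $I(C_i;R_XR_Y|YB_iB_{out}^\prime)=I(C_i;R_XR_Y|YB_i)$ on $\rho_i$, i.e.\ the $i$-th term of $\QIC_{A \rightarrow B}(\Pi,\mu)$. For the backward group, the key structural fact is that $U_\Pi U_\Pi^\dagger=\mathbb 1$ and $U_\Pi(\ket x\ket y\ket\psi)=\ket x\ket y\ket{f(x,y)}^{B_{out}}\ket{\phi_{xy}}$, so applying $U_\Pi^\dagger$ to the post-$CNOT$ state revisits exactly the forward states $\rho_k$ in reverse order, each decorated by the register $B_{out}^\prime$ in its unchanged state. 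Hence for even $j$ the backward Alice‑to‑Bob term for $C_j$ is $I(C_j;R_XR_Y|YB_jB_{out}^\prime)$ evaluated on the decorated $\rho_j$, and what we must prove is that this equals $I(C_j;R_XR_Y|YB_j)_{\rho_j}$, which by purity of $\rho_j$ and Lemma~\ref{prelim:lem:CQMIfacts} equals $I(C_j;R_XR_Y|XA_j)_{\rho_j}$, the $j$-th term of $\QIC_{B \rightarrow A}(\Pi,\mu)$.

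For $\Pi^{\prime\prime}$ this identification is painless: since $B_{out}$ holds exactly $f(x,y)$, applying $CNOT_{B_{out}\rightarrow B_{out}^\prime}$ to $\ket-^{B_{out}^\prime}$ leaves $B_{out}^\prime$ in the product state $\ket-$ and only attaches the phase $(-1)^{f(x,y)}$ to each $(x,y)$-branch; this phase is a diagonal unitary $Z$ acting only on $R_XR_Y$, and $U_\Pi^\dagger$ commutes with it. So the backward states are $Z\rho_jZ^\dagger\otimes\kb{-}{-}^{B_{out}^\prime}$, and since $R_XR_Y$ is one of the three arguments of the CQMI and $B_{out}^\prime$ is a product factor, both decorations drop out and the term is $I(C_j;R_XR_Y|YB_j)_{\rho_j}$, as wanted.

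The case $\Pi^\prime$ is where I expect the real work to lie, and it is the only place the zero‑error hypothesis enters. After the $CNOT$, $B_{out}^\prime$ genuinely holds $f(x,y)$, so the decorated $\rho_j$ is $\tilde\rho_j:=V\rho_jV^\dagger$ where $V\colon R_XR_Y\to R_XR_YB_{out}^\prime$ coherently writes $f(x,y)$ into $B_{out}^\prime$ — this is an isometry precisely because $f$ is single‑valued, and it matches the post-$CNOT$ state only because $\Pi$ is zero‑error, so $B_{out}$ carried the deterministic value $f(x,y)$. The plan is then: (i) by isometric invariance of CQMI applied to $V$ on the reference system, $I(C_j;R_XR_YB_{out}^\prime|YB_j)_{\tilde\rho_j}=I(C_j;R_XR_Y|YB_j)_{\rho_j}$; (ii) by the chain rule the left side equals $I(C_j;B_{out}^\prime|YB_j)_{\tilde\rho_j}+I(C_j;R_XR_Y|YB_jB_{out}^\prime)_{\tilde\rho_j}$; and (iii) $I(C_j;B_{out}^\prime|YB_j)_{\tilde\rho_j}=0$, since by purity of $\tilde\rho_j$ and Lemma~\ref{prelim:lem:CQMIfacts} it equals $I(C_j;B_{out}^\prime|XA_jR_XR_Y)_{\tilde\rho_j}$, and conditioned on $R_XR_Y$ the register $B_{out}^\prime$ has zero entropy (being obtained from $R_XR_Y$ by the isometry $V$), so that CQMI vanishes. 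Combining (i)–(iii) gives $I(C_j;R_XR_Y|YB_jB_{out}^\prime)_{\tilde\rho_j}=I(C_j;R_XR_Y|YB_j)_{\rho_j}$ for every even $j$. Summing the forward and backward groups then yields
\[
\QIC_{A \rightarrow B}(\Pi^\prime,\mu)=\QIC_{A \rightarrow B}(\Pi^{\prime\prime},\mu)=\QIC_{A \rightarrow B}(\Pi,\mu)+\QIC_{B \rightarrow A}(\Pi,\mu)=\QIC(\Pi,\mu).
\]
The main obstacle is step (iii) of the $\Pi^\prime$ analysis — controlling the effect of carrying the redundant output copy $B_{out}^\prime$ through Bob's memory during the backward run — and the conceptual input is exactly that zero‑error makes $B_{out}^\prime$ a deterministic function of the reference registers, so it contributes nothing once the references are in the conditioning.
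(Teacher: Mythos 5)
Your treatment of $\Pi^{\prime\prime}$ and of the forward half of both protocols is sound and matches the paper's argument. The gap is in step (iii) of your analysis of $\Pi^\prime$: the claim $I(C_j;B_{out}^\prime\,|\,YB_j)_{\tilde\rho_j}=0$ is false in general. Your justification implicitly uses the classical fact that if $B$ is a deterministic function of a conditioning register then $I(A;B|C)=0$; this fails for coherent copies in a globally pure state. Concretely, take $f(x,y)=x$ on one uniform bit with $Y$ trivial, and a four\mbox{-}message safe protocol in which Alice sends a coherent copy of $X$ as $C_1$, Bob returns it as $C_2$ without keeping anything, Alice sends it again as $C_3$, and Bob copies it into $B_{out}$ before returning $C_4$. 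In the backward run of $\Pi^\prime$, when Alice is about to resend $C_2$, the global state is the GHZ-type state $\tfrac{1}{\sqrt2}\sum_x\ket{xxxx}^{XR_XC_2B_{out}^\prime}$ with Bob holding only $B_{out}^\prime$. Then $I(C_2;B_{out}^\prime\,|\,YB_2)=I(C_2;B_{out}^\prime)=1$, not $0$ (and your purity-dual form $I(C_2;B_{out}^\prime\,|\,XA_2R_XR_Y)$ also equals $1$ on this pure state — this is exactly the phenomenon exhibited in the paper's Example~\ref{ex:forget}). Combining your correct steps (i) and (ii) with this, your backward term $I(C_2;R_X\,|\,YB_2B_{out}^\prime)_{\tilde\rho_2}$ evaluates to $0$, whereas the corresponding term of $\QIC_{B\rightarrow A}(\Pi,\mu)$ is $I(C_2;R_X\,|\,XA_2)_{\rho_2}=1$. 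So the identity you are trying to prove for even $j$ is simply not true, and summing your terms would give $\QIC_{A\rightarrow B}(\Pi^\prime,\mu)=2$ against $\QIC(\Pi,\mu)=3$ for this protocol.

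The root of the problem is a bookkeeping choice: you place $B_{out}^\prime$ in Bob's conditioning registers during the backward run, whereas the paper's proof explicitly moves it to the other side of the CQMI, viewing $B_{out}^\prime$ as part of an enlarged purification register $R^\prime=R_X^\prime R_Y^\prime B_{out}^\prime$ (regenerated by $U_f$ acting on $R_XR_Y$ and a fresh $\ket{0}$, which is where zero-error enters). Under that convention the backward terms are $I(C_j;R_XR_YB_{out}^\prime\,|\,YB_j)_{\tilde\rho_j}$, which equal $I(C_j;R_XR_Y\,|\,YB_j)_{\rho_j}$ directly by isometric invariance of CQMI on its second argument — your step (i) alone — with no cross term to dispose of. By the chain rule the two accountings differ by exactly $I(C_j;B_{out}^\prime\,|\,YB_j)\geq 0$, which is the quantity you cannot make vanish. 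So the fix is not to repair step (iii) but to abandon it: adopt the paper's convention that the output copy $B_{out}^\prime$ is delivered to the reference rather than retained as Bob's side information, and conclude via isometric invariance as in step (i).
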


\begin{proof}
It is clear for the first half of $\Pi^{\prime}$ and $\Pi^{\prime \prime}$, when running $U_\Pi$ forward, that the corresponding information costs are $QIC_{A \rightarrow B} (\Pi, \mu)$. We now argue that for the second half, when running $U_\Pi^\dagger$, the corresponding information cost is $QIC_{B \rightarrow A} (\Pi, \mu)$.

Consider first the clean protocol $\Pi^\prime$, and view register $B_{out}^\prime$,
containing a copy of $f(x, y)$ while running $U_{\Pi}^\dagger$, as an additional part of a purification register
$R^\prime$ for protocol $\Pi$: $R^{\prime} = R_X^\prime R_Y^\prime B_{out}^\prime$. This is justified as follows. We can instead think of $B_{out}^\prime$ as being generated, after running $U_\Pi$ with $\ket{0}$ in $B_{out}^\prime$ (and thus registers $R_X^\prime R_Y^\prime = R_X R_Y$ purify registers $XY A_i B_i C_i$ for that part), by applying $U_f$, defined such that $U_f (\ket{x} \ket{y} \ket{0} ) = \ket{x} \ket{y} \ket{f(x, y)}$, to the registers $R_X^\prime R_Y^\prime B_{out}^\prime$. Since $\Pi$, and hence $\Pi^\prime$, is a zero error protocol, and $f$ is a function, the resulting states at that point and when further applying $U_\Pi^\dagger$ will then be the same in this modified purified view of $\Pi$ as in the clean protocol $\Pi^\prime$, and thus the $QIC$'s are also the same. But then the $QIC$'s are also identical  to the ones in which we run $U_\Pi$ and then $U_\Pi^\dagger$ without making a copy in $B_{out}^\prime$, since the global states are the same up to  unitary $U_f$ being appied to purification registers $R_X R_Y$ and the uncorrelated state $\ket{0}$ in $B_{out}^\prime$.

We can apply a similar reasoning to the protocol $\Pi^{\prime \prime}$ in which $f(x,y)$ is recorded in the phase, since we can similarly think of $(-1)^{f(x, y)} \ket{-}^{B_{out}^\prime}$ as being part of the reference $R^{\prime \prime} = R_X^{\prime \prime} R_Y^{\prime \prime} B_{out}^\prime$, with $B_{out}^\prime$ remaining in state $\ket{-}$, and the phase information now being generated by applying $U_{f, phase}$, defined such that $U_{f, phase} (\ket{x} \ket{y}) = (-1)^{f(x, y)} \ket{x} \ket{y}$, on registers $R_X R_Y$.

Finally, notice that if we run $U_\Pi$ and then $U_\Pi^\dagger$ without acting on the output, then, using the duality relation $I (R_X R_Y ; C_i |  Y B_i) = I (R_X R_Y ; C_i | X A_i) $, we get that $QIC_{A \rightarrow B}$ of $\Pi^\prime$  and $\Pi^{\prime \prime}$ in the $U_\Pi^\dagger$ part is $QIC_{B \rightarrow A} (\Pi, \mu)$, so
\begin{align*}
QIC_{A \rightarrow B} (\Pi^{\prime \prime }, \mu) & = QIC_{A \rightarrow B} (\Pi^{\prime }, \mu) \\
	& = QIC_{B \rightarrow A} (\Pi^{\prime \prime }, \mu) \\
	& = QIC_{B \rightarrow A} (\Pi^{\prime  }, \mu) \\
	& = QIC_{A \rightarrow B} (\Pi, \mu) + QIC_{B \rightarrow A} (\Pi, \mu) \\
	& = QIC (\Pi, \mu).
\end{align*}
\end{proof}

\subsection{Information Lower Bound}

To get a tractable lower bound 
on $QIC_{A \rightarrow B} (\Pi^{\prime \prime}, \mu) = QIC(\Pi, \mu)$, we focus
on total functions and on product distributions $\mu = \mu_X \otimes \mu_Y$ on $XY$, and
we apply the Information Flow Lemma. Taking the purified view, we have in $\Pi^{\prime \prime}$
\begin{align}
	\ket{\rho_{f, \mu}^\prime}^{X R_X Y R_Y T_A T_B B_{out}^\prime}
	& =
	\left(\sum_{x, y} (-1)^{f(x, y)}\sqrt{\mu_X (x)} \sqrt{\mu_Y (y)} \ket{xxyy}^{XR_X Y R_Y} \right)\ket{\psi}^{T_A T_B} \ket{-}^{B_{out}^\prime},
\end{align}
in which we emphasize the dependance of $\ket{\rho_{f, \mu}^\prime}$ on the function $f$ and the product distribution $\mu$.
\begin{proposition}
\label{prop:QICgeqCMI}
We have the following lower bound:
	$$
		QIC (\Pi , \mu)
		\geq
		I (R_X ; Y  R_Y)_{\rho_{f, \mu}^\prime}.
	$$
\end{proposition}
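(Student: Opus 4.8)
The plan is to derive the bound from the Information Flow Lemma (Lemma~\ref{lem:rdindeplb}), applied to the phase protocol $\Pi^{\prime\prime}$. By the preceding proposition, $QIC(\Pi,\mu) = QIC_{A \rightarrow B}(\Pi^{\prime\prime},\mu)$, so it suffices to lower bound $QIC_{A \rightarrow B}(\Pi^{\prime\prime},\mu)$. I would view $\Pi^{\prime\prime}$ as an interactive quantum protocol on the classical input registers $X,Y$, with the canonical purification register $R_X R_Y$ containing quantum copies of $XY$. Since $\mu = \mu_X \otimes \mu_Y$ is a product distribution, this purification factorizes into independent parts living on $X R_X$ and on $Y R_Y$, and the global state at the end of $\Pi^{\prime\prime}$ is exactly $\rho_{f,\mu}^\prime$. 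Note that in $\Pi^{\prime\prime}$ Alice speaks first (the first operation is $U_1$) and that both workspaces are restored by the end, so Bob's final register $B^\prime B_{out}$ still contains his input $Y$.

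The key step is to apply Corollary~\ref{lem:rdindeplb-coro} to $\Pi^{\prime\prime}$ with the untouched external registers $E_1 = R_X$ and $E_2 = R_Y$ (a legitimate choice of extension, since $R_X R_Y$ is a purification of $\rho_\mu^{XY}$). This gives
\begin{align*}
	&I(R_X ; B^\prime B_{out} | R_Y)_{\rho_{f,\mu}^\prime} - I(R_X ; Y | R_Y)_{\rho_0} \\
	&\qquad = \sum_{i \text{ odd}} I(R_X ; C_i | R_Y B_i)_{\rho_i} - \sum_{i \text{ even}} I(R_X ; C_i | R_Y B_i)_{\rho_i}.
\end{align*}
Three routine observations then finish the argument. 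First, $I(R_X ; Y | R_Y)_{\rho_0} = 0$, because $\rho_0$ is a tensor product across $\{X R_X\}$ and $\{Y R_Y\}$; this is the one place where the product structure of $\mu$ is used on the input side. Second, the even-round sum is nonnegative and can be dropped, while for each odd $i$ the chain rule gives $I(R_X ; C_i | R_Y B_i) \leq I(C_i ; R_X R_Y | B_i)$; summing over odd rounds gives $\sum_{i \text{ odd}} I(R_X ; C_i | R_Y B_i) \leq QIC_{A \rightarrow B}(\Pi^{\prime\prime},\mu)$, whence $QIC_{A \rightarrow B}(\Pi^{\prime\prime},\mu) \geq I(R_X ; B^\prime B_{out} | R_Y)_{\rho_{f,\mu}^\prime}$. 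Third, since $Y$ is a subregister of $B^\prime B_{out}$, the chain rule gives $I(R_X ; B^\prime B_{out} | R_Y)_{\rho_{f,\mu}^\prime} \geq I(R_X ; Y | R_Y)_{\rho_{f,\mu}^\prime}$; and since $R_X$ (resp.\ $R_Y$) is perfectly correlated with $X$ (resp.\ $Y$) in the computational basis, tracing out $X$ and $Y$ in $\rho_{f,\mu}^\prime$ leaves $R_X R_Y$ in the product state $\big(\sum_x \mu_X(x)\kb{x}{x}\big)\otimes\big(\sum_y \mu_Y(y)\kb{y}{y}\big)$, so $I(R_X ; R_Y)_{\rho_{f,\mu}^\prime} = 0$ and therefore $I(R_X ; Y | R_Y)_{\rho_{f,\mu}^\prime} = I(R_X ; Y R_Y)_{\rho_{f,\mu}^\prime}$. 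Chaining these inequalities yields $QIC(\Pi,\mu) \geq I(R_X ; Y R_Y)_{\rho_{f,\mu}^\prime}$.

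The only genuinely delicate point is choosing the right instantiation of the Information Flow Lemma and recognising that the product structure of $\mu$ is exactly what forces the two correction terms to vanish: $I(R_X ; Y | R_Y)$ at the start and $I(R_X ; R_Y)$ at the end. Everything else is chain-rule bookkeeping together with monotonicity of the conditional quantum mutual information under discarding part of a register. A secondary thing to keep straight is that only the $A \rightarrow B$ half of $QIC$ for $\Pi^{\prime\prime}$ is needed, and that in $\Pi^{\prime\prime}$ Bob's final register genuinely retains $Y$, which is what allows the output-side term to dominate $I(R_X ; Y | R_Y)$.
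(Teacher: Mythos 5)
Your proof is correct and follows essentially the same route as the paper: both apply Corollary~\ref{lem:rdindeplb-coro} to $\Pi^{\prime\prime}$ with $E_1 = R_X$, $E_2 = R_Y$, use the product structure of $\mu$ to kill $I(R_X;Y|R_Y)_{\rho_0}$ and $I(R_X;R_Y)_{\rho_{f,\mu}^\prime}$, drop the even-round sum, and relate the odd-round terms to the $\QIC_{A\rightarrow B}$ terms by the chain rule. The only cosmetic difference is that you pass from $I(R_X;B^\prime B_{out}|R_Y)$ to $I(R_X;Y|R_Y)$ by monotonicity, where the paper notes equality because $T_B B_{out}^\prime$ ends in a pure product state.
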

\begin{proof}
Notice that $B_{out}^\prime$ remains in the pure state $\ket{-}$ throughout, independently of $x$ and $y$, and we can remove that register from all the information terms below.
We have successively the following chain:
\begin{align}
			QIC (\Pi , \mu) = \,& QIC_{A \rightarrow B} (\Pi^{\prime \prime} , \mu)
			\nonumber
			\\
			\geq \,& \sum_{i~odd} I(R_X R_Y ; C_i | Y B_i)_{\rho_{i, \mu}^\prime}
			\nonumber
			\\
			\geq \,& \sum_{i~odd} I(R_X ; C_i | R_Y Y B_i)_{\rho_{i, \mu}^\prime}
			\nonumber
			\\
			\geq \,& \sum_{i~odd} I(R_X ; C_i | R_Y Y B_i)_{\rho_{i, \mu}^\prime} - \sum_{i~even} I (R_X ; C_i | R_Y Y B_i)_{\rho_{i, \mu}^\prime}
			\nonumber
			\\
			= \,& I (R_X ; Y T_B B_{out}^\prime | R_Y)_{\rho_{f, \mu}^\prime} - I(R_X ; Y | R_Y)_{\rho_{0, \mu}^\prime} 
			\label{randfct:eq:appliCoroPiprime}
			\\
			= \,& I (R_X ; Y T_B B_{out}^\prime | R_Y)_{\rho_{f, \mu}^\prime}
			\label{randfct:eq:ICproduct0}
			\\
			= \,& I (R_X ; Y  R_Y)_{\rho_{f, \mu}^\prime} \,,
			\label{randfct:eq:ICproduct02}
\end{align}
where equality~\eqref{randfct:eq:appliCoroPiprime} is obtained by application of the Information Flow Lemma under the form of Corollary~\ref{lem:rdindeplb-coro}, with $E_1 = R_X$, $E_2 = R_Y$ in $\Pi^{\prime \prime}$.
Equality~\eqref{randfct:eq:ICproduct0} holds since $\mu = \mu_X \otimes \mu_Y$ is a product distribution, 
so $\ket{\rho_{0, \mu}^\prime} = \left(\sum_x \sqrt{\mu_X (x)} \ket{xx}^{X R_X}\right) \left(\sum_y \sqrt{\mu_Y (y)} \ket{yy}^{Y R_Y}\right).$
As for equality~\eqref{randfct:eq:ICproduct02}, notice first that $I(R_X; Y T_B B_{out}^\prime  | R_Y)_{\rho_{f, \mu}^\prime} = I(R_X ; Y | R_Y)_{\rho_{f, \mu}^\prime}$ (with the same arguments as above). 
Moreover, $I(R_X ; R_Y)_{\rho_{f, \mu}^\prime} = I(X ; Y)_{\rho_{f, \mu}^\prime} = 0$, since this is a classical product state on $XY$, so $I(R_X ; Y | R_Y)_{\rho_{f, \mu}^\prime} = I(R_X ; Y  R_Y)_{\rho_{f, \mu}^\prime}$.
\end{proof}

Contrasting $\rho_{f, \mu}^\prime$ to $\rho_{0, \mu}^\prime$, if the $Y R_Y$ registers contain information about the $X$ register, it must be ``encoded in the phase'' $(-1)^{f(x, y)}$ somehow. 
Another way to think about it, in the spirit of what was done in~\cite{CvDNT99, MW07}, is as follow: Alice is given a classical random 
variable $X$ distributed according to $\mu_X$, Bob locally prepares 
the pure state $\sum_y \sqrt{\mu_Y} \ket{y}^Y \ket{y}^{R_Y}$ 
corresponding to $\mu_Y$, and Alice and Bob run $\Pi^\prime$. 
Bob ends up with registers $YR_Y$ (and $T_B B_{out}^\prime$, 
which were restored to state $\ket{\psi}^{T_A T_B} \ket{-}^{B_{out}^\prime} $, 
independent of $x$) of $\rho_{f, \mu}^\prime$, which we now view as the output 
of a ``noisy'' classical-quantum communication channel with input register $X$, 
in which the different phases allows to (at least partially, depending on $f$ and $\mu_Y$) distinguish the pure 
states 
\begin{align}
\ket{\rho_{f, x, \mu_Y}^\prime}^{Y R_Y} = \sum_y (-1)^{f(x, y)} \sqrt{\mu_Y} \ket{y}^Y \ket{y}^{R_Y},
\end{align} 
corresponding to each $x$. The (channel) Holevo information $\max_{\mu_X} I(X ; Y R_Y)_{\rho_{f, \mu}^\prime}$ is a known asymptotically 
achievable bound for classical communication over such noisy channels, giving an alternate proof 
sketch of $I (X ; Y R_Y)_{\rho_{f, \mu}^\prime} \leq 2 QCC_{A \rightarrow B} (\Pi^\prime)$ 
(also using the optimality of super-dense coding; the factor of two disappear if the messages 
are classical, and also if we do not allow for pre-shared entanglement in $\Pi$).

Now, 
$$I (X ; Y R_Y)_{\rho_{f, \mu}^\prime} = H (Y R_Y)_{\rho_{f, \mu}^\prime} - H (Y R_Y | X)_{\rho_{f, \mu}^\prime}\,,$$
and
$$H (Y R_Y | X)_{\rho_{f, \mu}^\prime} = \bbE_X H (Y R_Y)_{\rho_{f, x, \mu_Y}^\prime} = 0 \,,$$ 
since $\ket{\rho_{f, x, \mu_Y}^\prime}^{Y R_Y}$ is a pure state for each $x$.
Notice that $I (R_X ; Y R_Y)_{\rho_{f, \mu}^\prime} = H (Y R_Y)_{\rho_{f, \mu}^\prime}$ only depends on $\mu_X$, $\mu_Y$, and $f$:

\begin{align}
\rho_{{f, \mu}}^{\prime Y R_Y} = \sum_x \mu_X (x) \kb{\rho_{f, x, \mu_Y}^\prime}{\rho_{f, x, \mu_Y}^\prime}^{Y R_Y}.
\end{align}
From Proposition~\eqref{prop:QICgeqCMI} and the discussion above we obtain the following lower bound:
\begin{equation}
\label{eq:QICgeqH}
		QIC (\Pi , \mu)
		\geq
		H (Y R_Y)_{\rho_{f, \mu}^\prime}.
\end{equation}

\subsection{Inner Product function}

The case of the Inner Product function was studied using a similar argument in Ref.~\cite{CvDNT99}.
Let us consider $f (x, y) = IP_n (x, y) = x \cdot y$ on $\lg|X| = \lg |Y| = n$ bits, and take $\mu_X = \mu_Y$ 
the uniformly random distribution. If Bob is given register $R_Y$ together with $Y$ of $\rho_{f, x, \mu_Y}^\prime$ and 
applies first $(CNOT^{\otimes n})_{Y \rightarrow R_Y}$ and then $H^{\otimes n}$ on $Y$, he 
gets, for any fixed $x$ on Alice's side,

\begin{align}
&(H^{\otimes n})^Y (CNOT^{\otimes n})_{Y \rightarrow R_Y} \left( 2^{-n/2} \sum_y (-1)^{x \cdot y} \ket{yy}^{Y R_Y}\right)
		\nonumber
		\\ 
		= \, & (H^{\otimes n})^Y  \left(2^{-n/2} \sum_y (-1)^{x \cdot y} \ket{y}^{Y }\right) \ket{0^n}^{R_Y} 
		\nonumber
		\\
		= \, & \ket{x}^Y \ket{0^n}^{R_Y},
		\nonumber
\end{align}
since $H^{\otimes n}$ is self-inverse and $H^{\otimes n} \ket{x} = 2^{-n/2} \sum_y (-1)^{x \cdot y} \ket{y}$.
By isometric invariance of von Neumann entropy, $H(Y R_Y)_{\rho_{f, \mu}^\prime} = H ( X^\prime) = n$, for $X^\prime$ 
a classical copy of $X$. We get 
$$\QIC (IP_n, \nu,  0) \geq n,$$
with $\nu$ the uniform distribution on the inputs.
Since we only assumed that Bob can compute the function value in our lower bound, we get a matching upper bound for such protocols, and $\QIC (IP_n, \nu, 0)= n$.

\subsection{Random Functions}

The argument of Ref.~\cite{CvDNT99} for the IP function was extended in Ref.~\cite{MW07} to the study of arbitrary (total) Boolean function, 
and in particular to argue about the quantum communication complexity of a random Boolean function. 
They showed, for $\nu$ the uniform distribution on $n + n$ bit inputs (i.e.~$\lg|X| = \lg |Y| = n$), that a uniformly random Boolean function $f$, 
(a function chosen by picking $f(x, y)$ uniformly at random in $\{ 0, 1\}$ 
for each pair $(x, y)$), satisfies with high probability $H (Y R_Y)_{\rho_{f, \mu}^\prime} \geq n (1 - o(1))$, 
and thus $\QCC (f, \nu, 0) \geq n (1 - o(n))$. Moreover, for small enough constant $\epsilon > 0$, 
they  also show using a continuity argument that $\QCC (f, \nu, \epsilon) \in \Omega (n)$. Thus, most Boolean functions 
have essentially a linear quantum communication complexity.

We focus on the case $\epsilon = 0$, and extend their results for $QIC$ of a random function.
We use the following result proved in Ref.~\cite{MW07}. Here, $H_2$ is the R\'enyi entropy of order $2$, $\nu$ is the uniform distribution on $2n$-bit strings, and the probability is taken over the random choice of $f$, also picked uniformly at random in $\{0, 1 \}$ for each of the $2^{2n}$ pairs $(x, y)$.

\begin{theorem}
\begin{align*}
\Pr_f \big[H_2 (Y R_Y)_{\rho_{f, \nu}^\prime} < (1 - \delta) n \big] \leq \exp (- (2^{\delta n} - 1)^2 / 2),
\end{align*}
where the probability is uniform over Boolean functions of $n+n$ bits.
\end{theorem}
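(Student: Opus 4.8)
The plan is to pass from the Rényi‑$2$ entropy to a purity and then establish concentration of that purity. Writing $\ket{v_x}=2^{-n/2}\sum_{y}(-1)^{f(x,y)}\ket{y}^{Y}\ket{y}^{R_Y}$, one has $\rho_{f,\nu}^{\prime}=2^{-n}\sum_{x}\kb{v_x}{v_x}$ and $\langle v_x|v_{x'}\rangle=2^{-n}S_{x,x'}$, where $S_{x,x'}:=\sum_{y}(-1)^{f(x,y)+f(x',y)}$. Expanding the purity and separating the diagonal gives $\mathrm{Tr}\big((\rho_{f,\nu}^{\prime})^{2}\big)=2^{-2n}\sum_{x,x'}|\langle v_x|v_{x'}\rangle|^{2}=2^{-n}+2^{-4n}\sum_{x\neq x'}S_{x,x'}^{2}$; since $H_2(YR_Y)_{\rho_{f,\nu}^{\prime}}=-\log\mathrm{Tr}\big((\rho_{f,\nu}^{\prime})^{2}\big)$, the event $\{H_2<(1-\delta)n\}$ is exactly $\{\Sigma>2^{3n}(2^{\delta n}-1)\}$ with $\Sigma:=\sum_{x\neq x'}S_{x,x'}^{2}$. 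The one structural fact to exploit is that for each fixed pair $x\neq x'$ the bits $f(x,y)\oplus f(x',y)$, $y\in\ZO^{n}$, are i.i.d.\ uniform, so $S_{x,x'}$ is a Rademacher sum of length $2^{n}$; the different $S_{x,x'}$ are however not jointly independent, which is the source of the difficulty.

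For the concentration step I would route through the operator norm of the random sign matrix $G=\big((-1)^{f(x,y)}\big)_{x,y}\in\{\pm1\}^{2^{n}\times2^{n}}$. Since $\sum_{x,x'}S_{x,x'}^{2}=\mathrm{Tr}\big((GG^{\mathsf{T}})^{2}\big)=\sum_{i}\sigma_{i}(G)^{4}\le\|G\|_{\mathrm{op}}^{2}\,\|G\|_{F}^{2}=2^{2n}\|G\|_{\mathrm{op}}^{2}$ and the diagonal contributes $\sum_{x}S_{x,x}^{2}=2^{3n}$, the event $\{\Sigma>2^{3n}(2^{\delta n}-1)\}$ implies $\|G\|_{\mathrm{op}}^{2}>2^{(1+\delta)n}$, so it suffices to bound $\Pr\big[\|G\|_{\mathrm{op}}>2^{(1+\delta)n/2}\big]$. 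Two standard devices do this: either discretize the unit sphere of $\mathbb{R}^{2^{n}}$ by an $\epsilon$-net $\mathcal{N}$ of size $e^{O(2^{n})}$, use $\|G\|_{\mathrm{op}}\le(1-\epsilon)^{-2}\max_{u,v\in\mathcal{N}}\langle u,Gv\rangle$, and apply Hoeffding to each fixed bilinear form $\langle u,Gv\rangle=\sum_{x,y}u_{x}v_{y}(-1)^{f(x,y)}$, which is a Rademacher sum whose weight vector has $\ell^{2}$-norm $\|u\|\,\|v\|=1$, hence tail $e^{-t^{2}/2}$; or invoke Talagrand's convex concentration for the $1$-Lipschitz convex map $G\mapsto\|G\|_{\mathrm{op}}$ together with the standard bound $\mathbb{E}\|G\|_{\mathrm{op}}=O(2^{n/2})$. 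Either way one gets a tail of the form $\exp\!\big(O(2^{n})-c\,2^{(1+\delta)n}\big)$; in the regime $\delta n\to\infty$ relevant here (where $(1-\delta)n=n(1-o(1))$) the term $c\,2^{(1+\delta)n}$ swamps the $O(2^{n})$ cost, and since $2^{(1+\delta)n}=2^{n}\cdot2^{\delta n}\ge 2^{2\delta n}\ge(2^{\delta n}-1)^{2}$ whenever $\delta\le1$, the resulting bound is stronger than, hence implies, the claimed $\exp\!\big(-(2^{\delta n}-1)^{2}/2\big)$ (for $\delta\ge1$ the statement is vacuous since $H_2\ge0$). A cruder route, worth recording first, is to control all $\binom{2^{n}}{2}$ overlaps at once: if $|\langle v_x|v_{x'}\rangle|\le2^{-n/2}\sqrt{2^{\delta n}-1}$ for every $x\neq x'$ then already $\mathrm{Tr}\big((\rho_{f,\nu}^{\prime})^{2}\big)\le 2^{-n}2^{\delta n}$, so a union bound over pairs plus Hoeffding for each $S_{x,x'}$ yields a weaker version of the statement carrying a $\mathrm{poly}(2^{n})$ prefactor.

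The main obstacle is that the purity, equivalently $\|GG^{\mathsf{T}}\|_{F}^{2}$, is a \emph{quadratic} (bilinear) function of the independent bits $\{f(x,y)\}$, so scalar Hoeffding cannot be applied to it directly; the discretization of the sphere — or, equivalently, the convexity/Lipschitz structure of the operator norm, or a moment computation tracking conditional variances — is exactly the device that converts the problem into a family of genuine Rademacher sums. The remainder is bookkeeping: choosing the net radius, tracking the approximation constant $(1-\epsilon)^{-2}$, and matching the exponent against $(2^{\delta n}-1)^{2}/2$ so that the net cost is absorbed; this closes precisely in the range of $\delta$ used in the applications, and it should be noted up front that the statement is only meaningful — and is only used — when $2^{\delta n}$ is large.
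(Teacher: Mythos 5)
Note first that the paper does not contain a proof of this theorem: it is imported verbatim from Ref.~\cite{MW07} (``We use the following result proved in Ref.~\cite{MW07}''), so there is no in-paper argument to compare yours against, and I can only assess your proposal on its own terms. Your reductions are all correct: the passage from $H_2$ to the purity, the identity $\mathrm{Tr}\big((\rho_{f,\nu}^{\prime})^2\big)=2^{-n}+2^{-4n}\sum_{x\neq x'}S_{x,x'}^2$, the reformulation of the event as $\Sigma>2^{3n}(2^{\delta n}-1)$, and the bound $\sum_i\sigma_i(G)^4\le\|G\|_{\mathrm{op}}^2\,\|G\|_{F}^2$ reducing everything to a tail bound on $\|G\|_{\mathrm{op}}$. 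In the only regime where the paper invokes the theorem ($\delta=1/\sqrt n$, so $2^{\delta n}\to\infty$ and $(1-\delta)n\to\infty$), the net-plus-Hoeffding estimate does yield $\exp\big(O(2^n)-c\,2^{(1+\delta)n}\big)$ with $c\,2^{(1+\delta)n}$ dominating both the net cost and $(2^{\delta n}-1)^2/2$, so your argument suffices for the random-function corollary.

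The genuine gap is in the final comparison, where you assert that the bound obtained is stronger than the stated one whenever $\delta\le1$. Two regimes break. (i) Small $\delta n$: the theorem is already nontrivial at, say, $\delta n=2$ (it claims $e^{-9/2}\approx 0.01$), but there your reduction requires $\|G\|_{\mathrm{op}}>2^{\delta n/2}\sqrt{2^n}=2\sqrt{2^n}$, which sits exactly at the typical value $\mathbb{E}\|G\|_{\mathrm{op}}\approx 2\sqrt{2^n}$ of a random sign matrix (and below it once the $(1-\epsilon)^{-2}$ net loss is accounted for); the operator-norm route therefore returns only a constant-probability bound for all $\delta n$ below some absolute constant. (ii) $\delta$ close to $1$: there $(2^{\delta n}-1)^2/2$ is itself $\Theta(2^{(1+\delta)n})$ with constant tending to $1/2$, so a tail $\exp\big(O(2^n)-c\,2^{(1+\delta)n}\big)$ with $c=(1-\epsilon)^4/2<1/2$ does not imply the claim; your inequality $2^{(1+\delta)n}\ge(2^{\delta n}-1)^2$ compares raw exponents but ignores exactly the multiplicative constant and the additive net cost that decide the issue when the two exponents are of the same order. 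What you actually prove is therefore an asymptotic statement valid for $n$ large and $\delta$ in an intermediate range, not the clean non-asymptotic bound as stated. If you want the stated constant, the natural fix is to apply a concentration inequality directly to a quantity that is convex and $1$-Lipschitz in the $2^{2n}$ signs (a Schatten norm of $G$, or $\|GG^{\mathsf{T}}\|_{F}^{1/2}$), avoiding the lossy detour through $\|G\|_{\mathrm{op}}$; otherwise, state and prove the weaker version your argument delivers, which is all the paper needs.
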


Since $H_2 \leq H$, we get the following theorem by taking $\delta = 1 / \sqrt{n}$ above and using~\eqref{eq:QICgeqH}.
\begin{theorem}
\begin{align*}
\Pr_f \big[QIC (f, \nu, 0) < (1 - 1/ \sqrt{n})n \big] \leq  \exp (- (2^{\sqrt{n}} - 1)^2 / 2),
\end{align*}
where the probability is uniform over Boolean functions of $n+n$ bits.
\end{theorem}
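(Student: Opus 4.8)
The plan is to obtain this statement as an immediate consequence of two ingredients already established: the information lower bound $QIC(\Pi,\mu) \ge H(YR_Y)_{\rho_{f,\mu}'}$ of Proposition~\ref{prop:QICgeqCMI} (equivalently~\eqref{eq:QICgeqH}), valid for any zero-error protocol $\Pi$ computing a total Boolean function $f$ and any product distribution $\mu = \mu_X \otimes \mu_Y$; and the concentration estimate on the R\'enyi-$2$ entropy $H_2(YR_Y)_{\rho_{f,\nu}'}$ stated in the preceding theorem (taken from Ref.~\cite{MW07}).

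First I would pass from the per-protocol bound to a per-function bound on the complexity measure. Since~\eqref{eq:QICgeqH} holds for \emph{every} zero-error protocol $\Pi$ for $f$, taking the infimum over such protocols gives $QIC(f,\nu,0) \ge H(YR_Y)_{\rho_{f,\nu}'}$, where $\rho_{f,\nu}'$ is the state defined in the previous subsection with $\mu = \nu$, the uniform distribution on $n+n$ bits. Here it matters that $\nu$ is a product distribution, so that the hypotheses of Proposition~\ref{prop:QICgeqCMI} are met; note also that $\rho_{f,\nu}'$ depends only on $f$ and $\nu$, not on any protocol, so this is a clean lower bound attached to each $f$.

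Second, I would use the elementary inequality $H_2 \le H$ between the order-$2$ R\'enyi entropy and the von Neumann entropy of the same state to deduce $QIC(f,\nu,0) \ge H_2(YR_Y)_{\rho_{f,\nu}'}$ for every $f$. Hence the event $\{\,QIC(f,\nu,0) < (1 - 1/\sqrt{n})n\,\}$ is contained in $\{\,H_2(YR_Y)_{\rho_{f,\nu}'} < (1 - 1/\sqrt{n})n\,\}$. Finally I would invoke the preceding theorem with $\delta = 1/\sqrt{n}$, so that $(1-\delta)n = (1-1/\sqrt{n})n$ and $2^{\delta n} = 2^{\sqrt{n}}$, which bounds the probability of the latter event by $\exp(-(2^{\sqrt{n}}-1)^2/2)$; chaining the two containments yields the claim. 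There is essentially no real obstacle: the only points deserving a line of justification are that $\nu$ is a product distribution (needed for Proposition~\ref{prop:QICgeqCMI} to apply), that the infimum over protocols preserves the pointwise inequality, and that $H_2 \le H$ holds for every state; the substitution $\delta n = \sqrt{n}$ is then routine bookkeeping.
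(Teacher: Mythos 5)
Your proposal is correct and follows exactly the paper's argument: apply the lower bound $QIC(\Pi,\nu) \ge H(YR_Y)_{\rho_{f,\nu}'}$ from~\eqref{eq:QICgeqH} (valid since $\nu$ is a product distribution), pass to the infimum over zero-error protocols, use $H_2 \le H$, and invoke the concentration theorem of Ref.~\cite{MW07} with $\delta = 1/\sqrt{n}$. Nothing is missing.
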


Hence, except with negligible probability over the choice of a random function $f$,
$$QIC (f, \nu, 0) \geq n (1 - o(1)).$$

\subsection{Non-Zero Error and Classical Protocols}

Using the quantum simulation of classical protocols maintaining the classical IC that we gave in Section~\ref{sec:quantum-simu-classical}, the above result also implies a bound for any classical protocol. Moreover, it is known (see Ref.~\cite{MR3210776-infoExactComm}) that classical IC is continuous at $\epsilon = 0$, so we get the following corollary.
\begin{corollary}
\begin{align*}
\Pr_f \left[\lim_{\epsilon \rightarrow 0} IC (f, \nu, \epsilon) < (1 - 1/ \sqrt{n})n \right] \leq  \exp (- (2^{\sqrt{n}} - 1)^2 / 2),
\end{align*}
where the probability is uniform over Boolean functions of $n+n$ bits.
\end{corollary}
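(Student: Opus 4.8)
The plan is to chain together the three main tools from this section and the previous one: the quantum simulation of classical protocols (Lemma~\ref{lem:simu-quantum}), the lower bound on the zero-error $\QIC$ of a random Boolean function (the theorem immediately above), and the continuity of classical information cost at zero error from Ref.~\cite{MR3210776-infoExactComm}. Concretely, I would first establish that $\QIC(f,\nu,0) \leq IC(f,\nu,0)$ for every total Boolean function $f$. This follows because, given any classical protocol $\Pi_C$ computing $f$ with zero error with respect to $\nu$, Lemma~\ref{lem:simu-quantum} produces a quantum protocol $\Pi_Q$ with $\QIC(\Pi_Q,\nu) = IC(\Pi_C,\nu)$ that implements the very same input-output channel; in particular $\Pi_Q$ also computes $f$ with zero error, hence it is admissible in the infimum defining $\QIC(f,\nu,0)$. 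Taking the infimum over all zero-error classical protocols $\Pi_C$ yields the claimed inequality.

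Next I would invoke continuity of classical IC at $\epsilon = 0$. Since $IC(f,\nu,\epsilon)$ is monotone non-increasing in $\epsilon$ (enlarging the error budget only enlarges the set of admissible protocols), the limit $\lim_{\epsilon\to 0} IC(f,\nu,\epsilon)$ exists, and by Ref.~\cite{MR3210776-infoExactComm} it equals $IC(f,\nu,0)$. Combining with the previous step,
\[
	\lim_{\epsilon\to 0} IC(f,\nu,\epsilon) \;=\; IC(f,\nu,0) \;\geq\; \QIC(f,\nu,0).
\]
Consequently the event $\{\lim_{\epsilon\to 0} IC(f,\nu,\epsilon) < (1-1/\sqrt{n})n\}$ is contained in the event $\{\QIC(f,\nu,0) < (1-1/\sqrt{n})n\}$, and the corollary follows immediately from the preceding theorem, which bounds the probability of the latter event by $\exp(-(2^{\sqrt n}-1)^2/2)$, together with monotonicity of probability under inclusion.

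The argument is short once the pieces are in place, so the main point requiring care is twofold. First, one must verify that the quantum simulation preserves the \emph{worst-case} (zero) error guarantee rather than merely the distributional error; this is immediate, since $\Pi_Q$ implements exactly the same input-output channel as $\Pi_C$, so $\max_{(x,y)}\Pr[\Pi_Q(x,y)\neq f(x,y)] = \max_{(x,y)}\Pr[\Pi_C(x,y)\neq f(x,y)] = 0$. Second, one should check that $IC(f,\nu,0)$ is finite, so that the continuity statement is not vacuous: this holds for any total Boolean function, as already the trivial protocol in which Alice sends $X$ has bounded information cost, and the cited continuity result applies to all functions and distributions, in particular to $f$ and $\nu$ here. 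No estimates beyond those already proved are needed.
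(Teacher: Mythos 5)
Your proposal is correct and follows essentially the same route as the paper: quantum simulation (Lemma~\ref{lem:simu-quantum}) gives $\QIC(f,\nu,0)\leq IC(f,\nu,0)$, continuity of classical IC at $\epsilon=0$ identifies $IC(f,\nu,0)$ with the limit, and the event inclusion transfers the probability bound from the preceding theorem. Your write-up merely makes explicit the details the paper leaves as a sketch.
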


Hence, except with negligible probability over the choice of a random function $f$, we have 
$$\lim_{\epsilon \rightarrow 0} IC (f, \nu, \epsilon) \geq n (1 - o(1)). $$
To the best of our knowledge, this is the first proof that $\lim_{\epsilon \rightarrow 0} IC$  for a random function is essentially $n$, and not only $\Omega (n)$, which was known at least since the work of Braverman and Weinstein~\cite{MR3003574-discLBIC} proving that discrepancy lower bounds IC (through a compression argument).

It is an important open question to determine whether it also holds that QIC is continuous at $\epsilon = 0$, which would then imply a similar result in the quantum setting.


\paragraph{Acknowledgments. }
The authors are very grateful to Anurag Anshu, Andr{\'e} Chailloux, Ankit Garg, Iordanis Kerenidis, Ashwin Nayak, and Penghui Yao for many useful discussions.
M.L.~has been supported by ERC grant QCC. D.T.~is supported in part by NSERC, CIFAR, Industry Canada and ARL CDQI program. IQC and PI are supported in part by the Government of Canada and the Province of Ontario.
Part of this research was conducted while M.L. was a PhD student with the Institut de Recherche en Informatique Fondamentale, Universit{\'e} Paris Diderot, and while D.T.~was a PhD student with the D{\'e}partement d'informatique et de recherche op{\'e}rationnelle, Universit{\'e} de Montr{\'e}al and was supported in part by a FRQNT B2 Doctoral research scholarship, and by CryptoWorks21.


\bibliographystyle{alpha}
\bibliography{infoflow-0-bib}

\appendix


\section{Proofs for Section~\ref{sec:disj}}

\subsection{Proof of Lemma~\ref{lem:QCC-DISJ-QIC-AND}}\label{subsec:proof-lem-QCC-DISJ-QIC-AND}
Let us start by stating two intermediate lemmas that can be proved respectively as Lemma~4.18 and Lemma~4.19 in~\cite{BGKMT15}.
\begin{lemma}\label{lem:QICandQICdisj}
	For any integers $n, r,$ any $\epsilon>0$, and any input distribution $\mu_0$ such that $\mu_0(1,1) = 0$,
	$$
		\inf_{\Pi_A \in \T^{r,NF} (\AND, \epsilon)} \QIC^{r, NF}(\Pi_A, \mu_0)
		\leq
		\inf_{\Pi_D \in \T^{r,NF} (\DISJ_n, \epsilon)} \frac{1}{n} \QIC^{r, NF}(\Pi_D, \mu_0^{\otimes n}).
	$$
\end{lemma}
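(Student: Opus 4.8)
This is a standard "direct sum" style reduction: from a good protocol for $\DISJ_n$ on a product distribution $\mu_0^{\otimes n}$, extract a good protocol for $\AND$ on $\mu_0$ whose (worst-case over $\mu_0$) quantum information cost is at most $\frac1n$ times that of the $\DISJ_n$ protocol. The twist here compared to the original Lemma~4.18 of~\cite{BGKMT15} is that we must preserve the "does not forget information" property (Definition~\ref{def:protnoforget}). So the plan is to follow the embedding argument of~\cite{BGKMT15} and additionally verify that the constructed $\AND$ protocol inherits non-forgetting.

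First I would fix $\Pi_D \in \T^{r,NF}(\DISJ_n,\epsilon)$ achieving (nearly) the infimum on the right, and recall that $\DISJ_n(x,y) = \neg \bigvee_{k\in[n]} \AND(x_k,y_k)$, so a protocol for $\DISJ_n$ on $n$ independent coordinates solves $\AND$ on each coordinate. The construction of $\Pi_A$: pick a uniformly random coordinate $k \in [n]$ as public randomness; Alice and Bob embed their single-bit inputs $(a,b)$ into coordinate $k$, and fill the other $n-1$ coordinates with samples $(X_j,Y_j)_{j\neq k}$ drawn from $\mu_0$ using a combination of public and private randomness (the standard split so that each party can sample its own coordinates after seeing the public part). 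They then run $\Pi_D$ and output the value of $\AND$ in coordinate $k$ read off from $\neg\DISJ_n$. Since each coordinate is correctly handled by $\Pi_D$ with error $\epsilon$, $\Pi_A$ solves $\AND$ with error $\epsilon$. For the information cost: by the chain rule and the fact that the $n$ coordinates of $\mu_0^{\otimes n}$ are independent, $\QIC(\Pi_D,\mu_0^{\otimes n})$ decomposes (as a lower bound) into a sum over $k\in[n]$ of the conditional information carried about coordinate $k$ given the others, and averaging over the random choice of $k$ gives exactly $\frac1n \QIC(\Pi_D,\mu_0^{\otimes n}) \geq \QIC(\Pi_A,\mu_0)$ — this is where one invokes superadditivity of QIC over independent coordinates with side information, exactly as in~\cite{BGKMT15} (Lemma~4.18/4.19). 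One keeps track of which registers are "purification" registers $R_X,R_Y$ in the embedding: the purification of $(a,b)$ is the relevant one, the coordinates $j\neq k$ and their purifications become part of the side information $A_i$/$B_i$, which is why the relevant term in round $i$ is a conditional QIC term, matching Definition~\ref{charac:def:QIC}.

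The step I expect to be the main obstacle — and the only genuinely new point over~\cite{BGKMT15} — is checking that $\Pi_A$ does not forget information, i.e.\ that $\HCRIC_{A\leftarrow B}(\Pi_A,\mu_1,\mu_2)=0$ and symmetrically for Bob, for every decomposition of the single-bit input spaces and every pair of distributions. The idea is that the non-forgetting of $\Pi_D$ is "coordinate-wise robust": any decomposition of $\AND$'s input together with the embedding into coordinate $k$, plus the extra coordinates $j\neq k$ (which are classical inputs fixed by the sampling), yields a decomposition of $\DISJ_n$'s input to which Definition~\ref{def:protnoforget} for $\Pi_D$ applies; since all the $\HCRIC$ terms of $\Pi_D$ vanish for that decomposition, and the $\HCRIC$ terms of $\Pi_A$ are (after averaging over $k$ and using the independence/chain-rule bookkeeping) dominated by the corresponding terms of $\Pi_D$, they vanish too. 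The slightly delicate points are (i) making sure the auxiliary sampling randomness for coordinates $j\neq k$ is handled as a classical control register so it does not create spurious forgetting, which is legitimate since $\Pi_D$ is safe and the embedding only uses input registers as controls; and (ii) that the random choice of $k$, being public randomness shared as classical control, likewise contributes nothing to any $\HCRIC$ term. Granting these, summing over rounds and over $k$ gives $\QIC^{r,NF}(\Pi_A,\mu_0) \leq \frac1n \QIC^{r,NF}(\Pi_D,\mu_0^{\otimes n})$, and taking infima over $\Pi_D$ and $\Pi_A$ on the two sides yields the claim.
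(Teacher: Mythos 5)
Your plan is correct and matches the paper's treatment: the paper gives no proof of this lemma beyond the remark that it ``can be proved as Lemma~4.18 in~\cite{BGKMT15}'', i.e.\ by exactly the coordinate-embedding/direct-sum argument you describe, with the (implicit) additional observation that the embedding --- public coordinate choice and classical sampling used only as control registers, followed by a run of $\Pi_D$ --- preserves safety and the non-forgetting property. You correctly single out that verification as the only genuinely new step over~\cite{BGKMT15}.
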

\begin{lemma}\label{lem:QICand-maxinf}
	$$
		\QIC^{r, NF}_0 (\AND, \epsilon) = \max_{\mu_0\,:\, \mu_0(1,1) = 0} \, \inf_{\Pi_A \in \T^{r,NF} (\AND_n, \epsilon)} \QIC(\Pi, \mu_0).
	$$
\end{lemma}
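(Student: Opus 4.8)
Unfolding the definition, the left-hand side is $\QIC_0^{r,NF}(\AND,\epsilon)=\inf_{\Pi\in\T^{r,NF}(\AND,\epsilon)}\max_{\mu_0:\,\mu_0(1,1)=0}\QIC(\Pi,\mu_0)$, so the statement is a minimax interchange, and the plan is to prove the two inequalities separately. The weak-duality direction $\max_{\mu_0}\inf_\Pi\QIC\le\inf_\Pi\max_{\mu_0}\QIC$ is immediate and holds for any real-valued payoff: for every $\mu_0^\ast$ with $\mu_0^\ast(1,1)=0$ and every $\Pi^\ast\in\T^{r,NF}(\AND,\epsilon)$ one has $\inf_\Pi\QIC(\Pi,\mu_0^\ast)\le\QIC(\Pi^\ast,\mu_0^\ast)\le\max_{\mu_0}\QIC(\Pi^\ast,\mu_0)$, and taking $\inf_{\Pi^\ast}$ and then $\max_{\mu_0^\ast}$ yields the claim.

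For the reverse inequality I would invoke a lopsided minimax theorem (Fan/Sion): if $C$ is a convex set, $K$ a compact convex subset of a topological vector space, and $f(\cdot,\mu)$ is convex on $C$ for each $\mu\in K$ while $f(\Pi,\cdot)$ is concave and upper semicontinuous on $K$ for each $\Pi\in C$, then $\inf_{\Pi\in C}\max_{\mu\in K}f=\max_{\mu\in K}\inf_{\Pi\in C}f$. I take $K=\{\mu_0:\mu_0(1,1)=0\}$, a compact convex face of the simplex over the four $\AND$-inputs, and $C=\T^{r,NF}(\AND,\epsilon)$. Two ingredients then need checking: concavity and continuity of $\mu_0\mapsto\QIC(\Pi,\mu_0)$, and convexity of the protocol class together with convexity of $\QIC(\cdot,\mu_0)$ on it.

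For concavity I would first pass to safe protocols (without loss of generality by Proposition~\ref{prop:safeQIC}) and use the reference-free expression $\QIC=\CIC+\CRIC$ of Proposition~\ref{thm:QicCicCric}, which writes $\QIC(\Pi,\mu_0)$ as a sum of CQMI terms of the form $I(C_i;X\mid YB_i)$ and $I(C_i;Y\mid XA_i)$ involving only the classical input registers. Each such term is continuous in $\mu_0$ for fixed isometries, and is concave: writing $\mu_0=\sum_g p_g\mu_0^{(g)}$ and adjoining a classical label $G$ with inputs drawn from $\mu_0^{(g)}$ conditioned on $G=g$, the Markov structure $G-(X,Y)-(\text{rest of the protocol})$ forces $I(C_i;G\mid XYB_i)=0$, so the chain rule gives $I(C_i;X\mid YB_i)-\mathbb{E}_g\,I(C_i;X\mid YB_i)_{\mu_0^{(g)}}=I(C_i;G\mid YB_i)\ge0$, where the averaged term is rewritten via the classical-conditioning identity of Lemma~\ref{prelim:lem:CQMIfacts}. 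Summing shows $\QIC(\Pi,\cdot)$ is concave and continuous, hence upper semicontinuous on $K$. For the protocol class I would convexify by public-coin mixing: given $\Pi_1,\Pi_2\in\T^{r,NF}(\AND,\epsilon)$ and $\lambda\in[0,1]$, adjoin a shared, perfectly correlated classical control register $G$ to the pre-shared state and run $\Pi_g$ under control of $G$. The mixture is an $r$-round protocol with error at most $\lambda\,\mathrm{err}(\Pi_1)+(1-\lambda)\,\mathrm{err}(\Pi_2)\le\epsilon$; since $G$ sits in both players' memories as a classical control register, every $\HCRIC$ term decomposes by Lemma~\ref{prelim:lem:CQMIfacts} into the $G$-average of the $\HCRIC$ terms of $\Pi_1$ and $\Pi_2$, each of which vanishes, so the mixture still does not forget information (Definition~\ref{def:protnoforget}); and by the same identity $\QIC(\cdot,\mu_0)$ is affine along the mixing. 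Thus $\T^{r,NF}(\AND,\epsilon)$ is convex and $\QIC(\cdot,\mu_0)$ is convex on it.

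With both ingredients and the compactness of $K$ in hand, the lopsided minimax theorem gives the desired equality; that the outer operation on the right is a genuine maximum and not merely a supremum follows because $\mu_0\mapsto\inf_\Pi\QIC(\Pi,\mu_0)$, an infimum of continuous functions, is upper semicontinuous on the compact set $K$. The step I expect to require the most care is the concavity of $\mu_0\mapsto\QIC(\Pi,\mu_0)$: since QIC is defined through a purification of the input whose structure itself varies with $\mu_0$, the cleanest route is to eliminate the reference register first via Proposition~\ref{thm:QicCicCric} and then argue concavity of the resulting classical-input CQMI terms through the label register $G$, exactly mirroring the classical concavity of information cost. A secondary point is to use the minimax theorem in its lopsided form, so that no topology need be imposed on the infinite-dimensional protocol class $C$ and only the compact distribution set $K$ carries one.
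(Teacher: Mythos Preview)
Your argument is correct. The paper itself gives no proof here: it merely asserts that the lemma ``can be proved as Lemma~4.19 in~\cite{BGKMT15}'', i.e.\ that the minimax argument used there for unrestricted $r$-round protocols goes through once one checks that $\T^{r,NF}(\AND,\epsilon)$ is still convex under public-coin mixing. That closure is exactly the additional step you supply, and your verification---each $\HCRIC$ term becomes an average of zeros once the shared selector register is classical on the conditioning side---is the right one. So your write-up is essentially the proof the paper leaves to the reference, with the one NF-specific check made explicit.

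Two small shortcuts are available within the paper itself. The concavity of $\mu_0\mapsto\QIC(\Pi,\mu_0)$ that you derive via Proposition~\ref{thm:QicCicCric} is already the first inequality of Lemma~\ref{lem:quasi-conv-improved}, itself quoted from Lemma~4.7 of~\cite{BGKMT15}, where it holds for arbitrary protocols. And since the protocols at hand are NF, Remark~\ref{charac:rem:equivNF} gives $\QIC=\HIC$ outright, so you could run the entire minimax on $\HIC$: concavity in $\mu_0$ and affineness under protocol mixing are then immediate from the single classical-input expression $I(X;B_{out}B'\mid Y)+I(Y;A_{out}A'\mid X)$, with no purification register and no sum over rounds in sight.
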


We can now proceed to the proof of Lemma~\ref{lem:QCC-DISJ-QIC-AND}.

\begin{proof}[Proof of Lemma~\ref{lem:QCC-DISJ-QIC-AND}]
	The result is a consequence of the following chain of inequalities:
	\begin{align*}
	\QCC^{r, NF}(\DISJ_n, 1/3)
	&\geq \QIC^{r, NF}(\DISJ_n, 1/3)
	\\
	&\geq   \max_\mu \inf_{\Pi_D \in  \T^{r, NF} (\DISJ_n, 1/3)} \QIC(\Pi_D, \mu)
	\\
	&\geq \max_{\mu_0\,:\, \mu_0(1,1) = 0} \, \inf_{\Pi_D \in \T^{r,NF} (\DISJ_n, 1/3)} \QIC(\Pi_D, \mu_0^{\otimes n}) 
	\\
	&\geq \max_{\mu_0\,:\, \mu_0(1,1) = 0} \, \inf_{\Pi_A \in \T^{r,NF} (\AND, 1/3)} n \cdot \QIC(\Pi_A, \mu_0) 
	\\
	&\geq  n \cdot \QIC^{r, NF}(\AND, 1/3).
	\end{align*} 
	The first inequality holds since $\QIC$ lower bounds $\QCC$, the second since the protocol can now be optimized according to $\mu$, the third since, on the r.h.s. the maximization is over a smaller set of product distributions satisfying $\mu_0(1,1) = 0$. The fourth is by Lemma~\ref{lem:QICandQICdisj}, and the last is by Lemma~\ref{lem:QICand-maxinf}.
\end{proof}

\subsection{Proof of Lemma~\ref{lem:QICmu0}}\label{subsec:proof-lem-QICmu0}

As a first step, we show that the second inequality of Lemma~4.7 in~Ref.~\cite{BGKMT15} admits a tighter version for protocols not forgetting information (according to Definition~\ref{def:protnoforget}).
	\begin{lemma}[Quasi-convexity in input]\label{lem:quasi-conv-improved}
		Let $p \in [0,1]$, and  $\mu_1, \mu_2$ be two input distribution. Define $\mu = p\mu_1 + (1-p) \mu_2$. Then the following holds for any protocol $\Pi$ which does not forget information:
		\begin{align*}
			&\QIC(\Pi,\mu) \geq p\QIC(\Pi, \mu_1) + (1-p) \QIC(\Pi, \mu_2),
			\\
			&\QIC(\Pi,\mu) \leq  p\QIC(\Pi, \mu_1) + (1-p) \QIC(\Pi, \mu_2) + H(p),
		\end{align*}
		independently of the number of rounds in $\Pi$.
	\end{lemma}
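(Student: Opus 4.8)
The plan hinges on one structural fact: since $\Pi$ does not forget information, Remark~\ref{charac:rem:equivNF} gives $\QIC(\Pi,\nu)=\HIC(\Pi,\nu)$ for \emph{every} input distribution $\nu$, in particular for $\nu\in\{\mu,\mu_1,\mu_2\}$. Unlike $\QIC$, the quantity $\HIC(\Pi,\nu)=I(X;B_{out}B^\prime|Y)+I(Y;A_{out}A^\prime|X)$ (evaluated on the final state) is an \emph{endpoint} quantity rather than a sum of one term per round, so any estimate we derive for it will automatically be independent of the number of rounds. Thus it suffices to prove the two inequalities with $\HIC$ in place of $\QIC$, and the whole argument reduces to a round-free convexity computation for conditional mutual information.

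First I would introduce a classical selector register $P$ with $\Pr[P=1]=p$, $\Pr[P=2]=1-p$, and take the joint input $(P,X,Y)$ in which, conditioned on $P=j$, the pair $(X,Y)$ is distributed as $\mu_j$; its marginal on $XY$ is then $\mu$. Running $\Pi$ on $XY$ while leaving $P$ untouched throughout yields a final state $\sigma$. Because the protocol never touches $P$ and the inputs are classical, conditioned on a fixed value of $(X,Y)$ Bob's (resp.\ Alice's) final registers are in a fixed state independent of $P$, so $I(P;B_{out}B^\prime|XY)_\sigma=0$ and $I(P;A_{out}A^\prime|XY)_\sigma=0$; and conditioning on the classical register $P$ selects the distribution $\mu_j$ (Lemma~\ref{prelim:lem:CQMIfacts}), whence
\begin{align*}
I(X;B_{out}B^\prime|YP)_\sigma &= p\, I(X;B_{out}B^\prime|Y)_{\rho_{\mu_1}} + (1-p)\, I(X;B_{out}B^\prime|Y)_{\rho_{\mu_2}}, \\
I(Y;A_{out}A^\prime|XP)_\sigma &= p\, I(Y;A_{out}A^\prime|X)_{\rho_{\mu_1}} + (1-p)\, I(Y;A_{out}A^\prime|X)_{\rho_{\mu_2}},
\end{align*}
so that their sum is exactly $p\,\HIC(\Pi,\mu_1)+(1-p)\,\HIC(\Pi,\mu_2)$. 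Expanding $I(XP;B_{out}B^\prime|Y)_\sigma$ by the chain rule in two ways and using $I(P;B_{out}B^\prime|XY)_\sigma=0$ then gives $I(X;B_{out}B^\prime|Y)_\sigma = I(X;B_{out}B^\prime|YP)_\sigma + I(P;B_{out}B^\prime|Y)_\sigma$, with the correction lying in $[0,H(P|Y)]\subseteq[0,H(p)]$, and symmetrically for the Bob-to-Alice term. Adding the two directions, the nonnegativity of the corrections yields the convexity lower bound $\QIC(\Pi,\mu)\ge p\,\QIC(\Pi,\mu_1)+(1-p)\,\QIC(\Pi,\mu_2)$, and bounding the corrections by the entropy of the selector yields the matching near-convexity upper bound; since $\HIC$ carries no round index, the estimate is independent of the number of rounds in $\Pi$.

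The step I expect to be the most delicate is the very first one: checking that the hypothesis of Definition~\ref{def:protnoforget} (vanishing $\HCRIC$ for \emph{all} decompositions of the inputs) does force $\CRIC(\Pi,\nu)=0$, hence the clean identity $\QIC(\Pi,\nu)=\HIC(\Pi,\nu)$ via Propositions~\ref{thm:HicCicCric} and~\ref{thm:QicCicCric}, uniformly over $\nu\in\{\mu,\mu_1,\mu_2\}$ — this is exactly where the non-forgetting assumption is used and where the round dependence of the general bound is removed. Once that reduction is in place, the remainder is the standard ``introduce a selector and apply the chain rule'' argument, with no quantum-specific obstruction.
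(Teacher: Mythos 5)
Your proposal is correct and follows essentially the same route as the paper's proof: reduce $\QIC$ to $\HIC$ via Remark~\ref{charac:rem:equivNF} (which is exactly where the non-forgetting hypothesis kills the round dependence), introduce a selector register whose reduced state is classical, use the Markov property of the protocol to get $I(P;B_{out}B^\prime|XY)=0$, and conclude by the chain rule with the correction term bounded between $0$ and $H(p)$. The only cosmetic differences are that the paper purifies the selector into two registers $S_1S_2$ and traces one out (so $S_1$ plays the role of your classical $P$), and that the paper cites the lower bound from an external lemma rather than extracting it, as you do, from the nonnegativity of the same correction term.
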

	Compared with  Lemma~4.7~Ref.~\cite{BGKMT15}, we save a multiplicative factor equals to the number of rounds in front of the term $H(p)$. 
	
	\begin{proof}[Proof of Lemma~\ref{lem:quasi-conv-improved}]
		The first inequality holds by the first inequality of Lemma~4.7 in~Ref.~\cite{BGKMT15}.
		Let us prove here the second inequality. Since $\Pi$ does not forget information, by Remark~\ref{charac:rem:equivNF}, its QIC is equal to its HIC. So it is sufficient to prove the desired inequality with QIC replaced by HIC. Let $R$ be a register holding a purification of $\rho_{\mu_1}$ and $\rho_{\mu_2}$. Then, we can purify $\rho_\mu$ with two copies $S_1,S_2$ of a selector reference register, such that
		$$
			| \rho_\mu \rangle ^{A_{in} B_{in} R S_1 S_2} = \sqrt{p} \, | \rho_{\mu_1} \rangle ^{A_{in} B_{in} R} |1\rangle^{S_1} |1\rangle^{S_2} + \sqrt{1-p} \, | \rho_{\mu_2} \rangle ^{A_{in} B_{in} R} |2\rangle^{S_1} |2\rangle^{S_2}.
		$$
		We can expand the HIC from Alice to Bob as:
		\begin{align*}
		\HIC_{A \rightarrow B}(\Pi, \mu)
			&= I(X; B_{out}B' | Y)_{\rho_\mu}
			\\
			&= I(X S_1; B_{out}B' | Y)_{\rho_\mu}
			\\
			&= I(S_1; B_{out}B' | Y)_{\rho_\mu} + I(X; B_{out}B' | Y S_1)_{\rho_\mu}
			\\
			&\leq H(p) + I(X; B_{out}B' | Y S_1)_{\rho_\mu},
		\end{align*}
		where the first equality is by definition of $\HIC$, the second because $I( S_1; B_{out}B' | X Y)_{\rho_\mu} = 0$ by the Markov propery of protocols ($X$, $Y$ and $S_1$ are all classical here), the third one is by chain rule, and the inequality is by the fact that $S_1$ is classical and $H(S_1) = H(p)$.
		
		Moreover, since $S_1$ is a classical register when $S_2$ is traced out,
		$$
			I(X; B_{out}B' | Y S_1)_{\rho_\mu} 
			= p I(X; B_{out}B' | Y)_{\rho_{\mu_1} } + (1-p) I(X; B_{out}B' | Y)_{\rho_{\mu_2} }.
		$$
		Hence:
		$$
			\HIC_{A \rightarrow B}(\Pi, \mu) \leq H(p) + p \HIC_{A \rightarrow B}(\Pi, \mu_1) + (1-p) \HIC_{A \rightarrow B}(\Pi, \mu_2).
		$$
	\end{proof}
	
	Then, we conclude the proof of Lemma~\ref{lem:QICmu0}.
	
	\begin{proof}[Proof of Lemma~\ref{lem:QICmu0}]
		Let us denote $\mu_1$ the probability distribution with weight $1$ on input $(1,1)$. Then, we can write:
		$$
			\mu = (1-w) \mu_0 + w \mu_1, \qquad \mu_0 = (1-w) \mu_0 + w \mu_0.
		$$
		Hence, by Lemma~\ref{lem:quasi-conv-improved}:
		\begin{align*}
			\QIC(\Pi, \mu)
			&\leq (1-w) \QIC(\Pi, \mu_0) + w \QIC(\Pi, \mu_1) + H(w)
			\\
			&\leq (1-w) \QIC(\Pi, \mu_0)  + H(w)
			\\
			&\leq \QIC(\Pi, \mu_0) + H(w).
		\end{align*}
	\end{proof}



\section{The Various Notions of Information Cost}
\label{sec:inflow-B-table-IC}

\begin{table}[!h]
\begin{center}
{\tabulinesep=1.2mm
\begin{tabu}{|c|c|c|c|c|c|}
\hline
   $\QIC$ & Definition~\ref{prelim:def:QIC} (see~\cite{Tou15})
   \\ \hline
   $\CIC$ & Definition~\ref{charac:def:CIC} (see~\cite{KLLGR15})
   \\ \hline
   $\HIC$ & Definition~\ref{charac:def:HIC}
   \\ \hline
   $\CRIC$ & Definition~\ref{charac:def:CRIC}
   \\ \hline
   $\SCIC,\SCRIC,\SHIC$ & Definition~\ref{charac:def:SCIC-SCRIC-SHIC}
   \\ \hline
   $\HCIC,\HCRIC,\HHIC$ & Definition~\ref{charac:def:HCIC-HCRIC-HHIC}
   \\ \hline
   IC & Definition~\ref{forgetclassical:def:defIC}
   \\ \hline
   $\RIC$ & Definition~\ref{forgetclassical:def:defRIC}
   \\ \hline
\end{tabu}
}
\vspace{0.1cm}
\caption{The classical and quantum notions of information cost used in this article.}
\label{table1}
\end{center}
\end{table}


\end{document}